\renewcommand{\@biblabel}[1]{#1.}
\newtheorem{note}{Remark}         
\newtheorem{te}{Theorem}           
\newtheorem{leme}{Lemma}         
\newtheorem{de}{Definition}          
\newtheorem{ce}{Consequence}    
\newtheorem{prope}{Proposition} 
\newtheorem{ax}{Axiom}  
\begin{document}
\centerline{ \bf\large{  Mode of sustainable economic development.  }}

\vskip 5mm

{\bf \centerline {\Large N.S. Gonchar \footnote{ This work is partially supported by the Fundamental Research Program of the Department of Physics and Astronomy of the National Academy of Sciences of Ukraine "Building and researching financial market models using the methods of nonlinear statistical physics and the physics of nonlinear phenomena N 0123U100362." }} }

\vskip 5mm
\centerline{\bf { Bogolyubov Institute for Theoretical Physics }} 
\centerline{\bf {of the National Academy of Sciences of Ukraine.}}
\vskip 2mm

\begin{abstract}
To implement the previously formulated principles of sustainable economic development, all non-negative solutions of the linear system of equations and inequalities, which are satisfied by the vector of real consumption, are completely described. It is established that the vector of real consumption with the minimum level of excess supply is determined by the solution of some quadratic programming problem.
The necessary and sufficient conditions are established  under which the economic system, described by  the "input-output" production model, functions in the mode  of sustainable development.
A complete description of the equilibrium states  for which markets are partially cleared in the economy model of production "input-output" is given, on the basis that all solutions of   system of linear equations  and inequalities are completely described.
The existence of a family of taxation vectors in the "input-output" model of production, under which the economic system is able to function in the mode of sustainable development, is proved. Restrictions were found for the vector of taxation in the economic system, under which the economic system is able to function in the mode of sustainable development.
The axioms of the aggregated description of the economy is proposed. 
\end{abstract}

\centerline{{\bf Keywords:} Technological mapping; Economic balance; }
\centerline{ Clearing markets; Vector of taxation;}
 \centerline{Sustainable economic development.}
 \centerline{Aggregated description of economy.}

\section{Introduction.}
The paper  details  the principles of economy sustainable development at the micro-economic level for the production model "input - output" and is a continuation of the papers \cite{11Gonchar11}  \cite{10Gonchar2}, \cite{11Gonchar2}.
This model of production is the basis for the formulation of concepts: gross output, direct costs, gross domestic product, gross added value, gross accumulation, export vector, import vector.
By sustainable economic development we understand the growth of the gross domestic product over a long period of time. During this period, the wages of employees increase, most companies operate in a profitable mode, and the exchange rate of the national currency is stable. In order to formulate this definition mathematically, such a concept as the consistency of the supply structure with the demand structure was introduced.
At the firm level, this would mean studying the structure of demand for manufactured goods in such a way that the volume of goods produced is consistent with the structure of demand. The latter will ensure, according to proven theorems, the appropriate profit for firms.

Why is the concept of sustainable economic development important? Due to the cyclical nature of economic development, it is necessary to understand the internal causes of this phenomenon. The concept of sustainable economic development serves this purpose. Under conditions of uncertainty, firms produce goods by taking loans from banks or issue securities to finance the necessary costs of production.
In order to repay loans, pay taxes on production, pay wages and ensure sufficient profit to restore fixed assets and expand production, it is necessary that the manufactured products be sold at prices at which the added value created is sufficient for this.

The model of sustainable development is formulated in such a way that for production technologies,  described by technological mappings from the KTM class, there exists an equilibrium price vector for which firms are able to purchase production materials and services for the production of the final product, and consumers of the final product to satisfy their needs.
In addition, it should be noted that there is a taxation system that ensures complete clearing of markets in a certain period of the economy's functioning. Therefore, the model of sustainable development is certain material and value balances, according to which the economic system is able to function continuously, creating a final product that is fully consumed in it and exported.

We call this mode of functioning of the economy the mode of sustainable development.

The purpose of the paper is to formulate the conditions under which the real economic system is able to function in the mode of sustainable development. Such conditions are limitations for the levels of taxation, under which the final product will be created in the economic system to satisfy the needs of consumers both in the sphere of the real economy and in the sphere of services. These are the restrictions on gross outputs, which are easily verified on the basis of Theorems established in section 3.

An economic system may not be in a mode of sustainable development due to the fact that the supply of produced goods is in excess. 
This is due to the fact that during the reporting period, not all the products produced can be sold according to the concluded contracts. These can be food industry products, agricultural products, stockpiles of manufactured weapons.
Therefore, the conditions of sustainable development, which provide for a complete clearing of the markets, will not be fully fulfilled. For this purpose, it is necessary to describe all states of economic equilibrium possible in the economic system, that is, to specify all equilibrium price vectors for which demand does not exceed supply.
To describe all equilibrium states, one should be able to solve systems of nonlinear equations and inequalities. Mathematical methods for this did not exist before this paper and the papers of \cite{11Gonchar11}, \cite{10Gonchar2} ,\cite{Gonchar2}.

In this paper, we  reduced this problem to the solution of two problems: the description of all solutions
 of a linear system of equations and inequalities and solving a linear homogeneous problem with respect to the equilibrium price vector for which partial clearing of the markets takes place.
 
 It should be noted that the description of all solutions
 linear system of equations and inequalities in this work was obtained for the first time.
 
 Each such equilibrium state is characterized by the level of excess supply. Its value ranges from zero to one. At zero level of excess supply, the markets are completely cleared. The greater the value of the level of excess supply, the closer the economic system is to the state of recession, see: \cite{4Gonchar}, \cite{3Gonchar}, \cite{5Gonchar} .
 
 Among all the described equilibrium states, there is one with the smallest excess supply.
 As a rule, the economy system
 is in that state.
To find it, you should solve the problem of quadratic programming, and based on this solution, construct the solution of the linear problem, finding the equilibrium state.
Then calculate the level of excess supply. Having done this for each reporting year, its dynamics should be investigated. If it grows, then this is an indicator that the economic system may slip into a state of recession.

Check whether the existing taxation system satisfies the proved inequalities that must be satisfied by an economy operating in a mode of sustainable development.

If it satisfies the derived inequalities, it should be checked whether the components of the final consumption vector are strictly positive. The presence of negative components can be associated with a negative trade balance, with the withdrawal of the created final product into the shadow sector of the economy.
In accordance with this, an economic policy can be developed to adjust its further development.

 The axiomatic of the aggregated description of the economy is proposed. On this basis, an aggregated matrix of direct costs, an aggregated vector of gross output, an aggregated gross added value, and an aggregated vector of final consumption were constructed. The main Theorems about the existence of the taxation system and limitations for the taxation system are also formulated for the aggregated description of the economy.
 
The proven Theorems show that in case of incomplete correspondence between the structure of demand and the structure of supply, there is a surplus of industrial goods, the price of which is not determined by demand and supply due to saturation of consumers with a certain group of goods. To characterize this phenomenon, an important concept of the vector of real consumption in the economic system is introduced. Based on this concept, the concept of the volume of unsold goods for a given period was introduced, which is defined as the relative value of unsold goods for a certain period.

 Section 2 gives the definition of a polyhedral cone and conditions for a given vector to belong this cone.
Theorem \ref{jant39} describes all strictly positive solutions of the system of equations, provided that the right-hand side of the equations belongs to the polyhedral cone formed by the columns of the matrix of the left-hand side of the system of equations.

The consistency of the structure of supply with the structure of demand gives Definition \ref{VickTin9}. The Definition \ref{1VickTin9} contains the consistency of the supply structure with the demand structure of a certain rank in the strict sense.
Lemma \ref{wickkteen1} contains sufficient conditions for consistency of property vectors and demand vectors in the strict sense. Lemma \ref{pupvittin7} asserts the existence of a strictly positive solution of a certain system of equations.
The Theorem \ref{wickkteen3}  formulates the necessary and sufficient conditions for the existence of a state of equilibrium under which the markets are completely cleared.
The Definition \ref{10VickTin9} contains the definition of consistency of the supply structure with the demand structure in a weak sense.

The definition of consistency of the supply structure with the demand structure in the weak sense of a certain rank is contained in Definition \ref{11VickTin9}. Theorem \ref{100wickkteen3} contains the necessary and sufficient conditions for the existence of a state of equilibrium, under which the markets are completely cleared under the condition that the supply structure is weakly consistent with the demand structure.

Theorems \ref{mykt1}, \ref{mykt6} and their consequences provide sufficient conditions for the existence of a solution to the system of inequalities. Theorem \ref{mykt19} establishes sufficient conditions for the existence of an equilibrium vector of prices. Sufficient conditions for the existence of economic equilibrium are contained in the Theorem \ref{mykt27}, which are conditions for the aggregate supply vector. The proven Theorems are a description of equilibrium states in the economic system.

Theorem \ref{TtsyVtsja1} gives necessary and sufficient conditions for the existence of solutions of the system of equations and inequalities, which are satisfied by the vector of real consumption.

The solutions of a certain system of linear equations and inequalities are completely described in the Theorem \ref{TVYA}.

The sufficient conditions for the existence of an equilibrium price vector are given in Theorem \ref{1TtsyVtsja4} provided that the vector of final consumption is a vector of real consumption generated by a certain non-negative  vector.

Theorem \ref{2TtsyVtsja4} is another variant of Theorem \ref{1TtsyVtsja4}.

The necessary and sufficient conditions for the existence of an equilibrium price vector corresponding to the vector of real consumption generated by some a certain non-negative vector are given in Theorem \ref{TtsyVtsja5}.

Theorem \ref{TtsyVtsja3} is a statement that for not every vector proposed for consumption for a given demand structure there exists an equilibrium price vector. That is, there is no market mechanism for the sale of manufactured goods in the long term.

In the definition \ref{kolja2}, it is stated that between every vector of real consumption, which is a solution of a system of equations and inequalities,  and a vector of equilibrium prices
there is a correspondence.

Theorem \ref{myktina19} is a justification of the above correspondences. In this case, there is an equilibrium price vector corresponding to market supply and demand. 

The necessary and sufficient conditions for the existence of an equilibrium price vector are given in Theorem \ref{MykHon1}.

On the basis of proven Theorems, the concept of the volume of unsold products in a given period is introduced for the corresponding equilibrium state in the case of partial clearing of markets.

In the the definition \ref{myktinavitka1}, it is stated that between the real consumption vector, generated by the quadratic programming problem, and   the equilibrium price vector there exists a correspondence. 

In section 3, Theorem \ref{1pupvittin13} gives the necessary and sufficient conditions for the  functioning of the economic system in the mode of sustainable development described by the "input- output" production model. 

In Theorem \ref{tsytsjatintsytsja1} for the "input - output" model, the existence of a taxation system under which the price vector formed in the economic system is an equilibrium price vector is established. 

The best taxation system has been built, under which the economic system is able to function in the mode of sustainable development.

In Theorem \ref{main1}, restrictions are found for the vector of taxation, under which a final product can be created in the economic system that ensures the functioning of the economic system in  the mode of sustainable development.

In Theorem \ref{2gonpupvittin3}, a certain system of taxation is built, under which the economic system is able to function in the mode of sustainable development. In this case, the vector of gross output satisfies a certain system of equations, the solution of which always exists and which determines the vector of final consumption.

Section 4 is an illustration of the application of the results obtained in Sections 2 and 3 to the study of real economic systems described in value indicators.
Theorem \ref{2main1} is a reformulation of Theorem \ref{main1} for the case of real economic systems.
It asserts the restrictions for the taxation systems in real economic systems under which the economic system is able to operate in a sustainable development mode.
Theorem \ref{4main1} is a partial case of Theorem \ref{2main1}.

Section 5 contains the results for the aggregate description of the economy.

\section{Construction of equilibrium states with excess supply.}

Here and further, $R_+^n \setminus \{0\}$ is a cone
formed from the non-negative orthant 
$R_+^n $ by discarding the null vector
 $\{0\}=\{0, \ldots, 0\}.$ Next, the cone 
 $R_+^n \setminus \{0 \}$ is denoted by 
 $ R_+^n .$

We  give a number of definitions useful for the future.
\begin{de}\label{mant0}
Under the polyhedral non
negative cone formed by the set of vectors 
$\{a_i,\ i=\overline {1,t}\}$ of the $n$-dimensional space $R^n$ we understand the set of vectors of the form
\begin{eqnarray*} 
d = \sum\limits_{i=1}^t\alpha_i a_i,\end{eqnarray*}
where $\alpha=\{\alpha_i\}_{i=1}^t$ runs through the set $R_+^t.$

\begin{de}\label{mant1}
The dimension of the non negative polyhedral cone formed by a set of vectors
$\{a_i, \ i=\overline {1,t}\}$ in the $n$-dimensional space $R^n$ is the maximum number of linearly independent vectors from the set of vectors 
$\{a_i,\ i=\overline {1,t}\}.$
\end{de}

\begin{de}\label{1mant2}
The vector $b $ belongs to the interior of the non negative polyhedral $r$-dimensional cone $r \leq n,$ created by the set of vectors
 $\{a_1,\dots ,a_t\}$ in the $n$-dimensional vector space $R^n$
 if there exists a strictly positive vector
  $\alpha=\{\alpha_i\}_{i=1}^t \in R_+^t$ such that
\begin{eqnarray*} 
b =\sum\limits _{s=1}^t a_s\alpha_s,\end{eqnarray*}
where $\alpha_s>0, \ s=\overline{1,t}.$
\end{de}
\end{de}

We present the necessary and sufficient conditions under which a certain vector belongs to the interior of a polyhedral cone.

\begin{leme}\label{mant2}
Let $\{a_1,\dots ,a_m\},$ $1\leq m\leq n,$ be a set of linearly independent vectors in $R_+^n.$ The necessary and sufficient conditions for the vector $b $ to belong to the interior of a non-negative cone $K_a^+$ formed by vectors
$\{a_i,\ i=\overline {1,m}\}$ are conditions
\begin{eqnarray} \label{mant3}
\langle f_i, b \rangle >0, \quad i=\overline {1,m}, \quad \langle f_i, b \rangle=
0,\quad i=\overline {m+1,n},
\end{eqnarray}
where $f_i,\ i=\overline {1,n},$ is a set of vectors biorthogonal to a set of linearly independent vectors $\bar a_i, \ i=\overline {1,n},$ and
$\bar a_i=a_i, \ i=\overline {1,m}.$
\end{leme}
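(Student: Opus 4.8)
The plan is to reduce both directions to the expansion of $b$ in a suitable basis together with uniqueness of coordinates. First I would extend the linearly independent family $\{a_1,\dots,a_m\}$ to a basis $\{\bar a_1,\dots,\bar a_n\}$ of $R^n$ with $\bar a_i=a_i$, $i=\overline{1,m}$ (basis extension theorem), and take $\{f_1,\dots,f_n\}$ to be the uniquely determined biorthogonal system, $\langle f_i,\bar a_j\rangle=\delta_{ij}$. Then every $b\in R^n$ has the unique representation $b=\sum_{i=1}^n\langle f_i,b\rangle\,\bar a_i$; this identity is the engine of the argument. Note also that $K_a^+$, being the polyhedral cone of Definition \ref{mant0} generated by the linearly independent vectors $a_1,\dots,a_m$, has dimension $m$ in the sense of Definition \ref{mant1}, so Definition \ref{1mant2} applies to it with $r=m$.

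For sufficiency, I would assume $\langle f_i,b\rangle>0$ for $i=\overline{1,m}$ and $\langle f_i,b\rangle=0$ for $i=\overline{m+1,n}$. Substituting into the expansion above gives $b=\sum_{i=1}^m\langle f_i,b\rangle\,a_i$, a representation of $b$ as a strictly positive combination of $a_1,\dots,a_m$; by Definition \ref{1mant2} this is precisely the statement that $b$ lies in the interior of the $m$-dimensional cone $K_a^+$.

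For necessity, I would start from $b$ in the interior of $K_a^+$, i.e.\ $b=\sum_{i=1}^m\alpha_i a_i$ with $\alpha_i>0$, and pair with each $f_j$. Since $\langle f_j,a_i\rangle=\delta_{ij}$ for $i=\overline{1,m}$, this yields $\langle f_j,b\rangle=\alpha_j>0$ for $j=\overline{1,m}$ and $\langle f_j,b\rangle=0$ for $j=\overline{m+1,n}$, which is the asserted condition \eqref{mant3}.

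I do not expect a genuine obstacle here; the only points needing care are (i) verifying that a biorthogonal system exists and that the coordinates of $b$ relative to $\{\bar a_i\}$ are exactly the numbers $\langle f_i,b\rangle$, and (ii) using the linear independence of $\{a_i\}$ in two places — once to ensure $\dim K_a^+=m$ so that Definition \ref{1mant2} is available, and once (via the basis) to get uniqueness of the coefficients, so that the necessity and sufficiency arguments refer to the same numbers. If one preferred to avoid extending to a full basis, an alternative is to set $L=\mathrm{span}\{a_1,\dots,a_m\}$, identify $L$ with $\{x\in R^n:\langle f_i,x\rangle=0,\ i=\overline{m+1,n}\}$, and carry out the same coordinate argument inside $L$ using $f_1,\dots,f_m$; but the basis-extension route is cleaner.
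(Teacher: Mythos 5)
Your proof is correct and is the natural biorthogonal-expansion argument: the paper itself does not reproduce a proof (it defers to the cited references \cite{Gonchar2}, \cite{7Gonchar}), and your two directions — reading off the coefficients $\alpha_j=\langle f_j,b\rangle$ for necessity, and substituting the hypothesized signs into the expansion $b=\sum_{i=1}^n\langle f_i,b\rangle\,\bar a_i$ for sufficiency — are exactly what the statement requires under Definitions \ref{mant0}--\ref{1mant2}. Your care about extending to a basis, the existence and duality of the biorthogonal system, and the role of linear independence (dimension $m$ and uniqueness of coordinates) covers the only points where a gap could arise, so nothing is missing.
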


For the proof of Lemma \ref{mant2}, see: \cite{Gonchar2}, \cite{7Gonchar}.
We will now describe the algorithm for constructing strictly positive solutions of the system of equations.
\begin{eqnarray} \label{luda05}
 \psi=\sum\limits_{i=1}^lC_i y_i, \quad y_i>0, \quad i=\overline{1,l},
\end{eqnarray}
relative to the vector $y=\{y_i\}_{i=1}^l$
or the same system of equations in coordinate form
\begin{eqnarray} \label{gonl40}
\sum\limits _{i=1}^l
c_{ki}y_i =\psi _k,\quad k=\overline{1,n},
\end{eqnarray}
for a vector
$\psi =\{\psi _1,\dots ,\psi _n\}$ belonging to the interior of the polyhedral cone formed by the vectors $\{C_i=\{c_{ki}\}_{k=1}^ n, \ i=\overline {1,l}\}.$
\begin{te}\label{jant39}
If a certain vector
$\psi$ belonging to the interior of the non-negative $r$-dimensional polyhedral cone formed by the vectors $\{C_i=\{c_{ki}\}_{k=1}^n,\ i=\overline {1 ,l}\},$
such that there exists a subset $r$ of linearly independent vectors of the set of vectors
$\{C_i, \ i=\overline {1,l}\},$ such that the vector $\psi$ belongs to the interior of the cone formed by this subset of vectors,
then there exist $l-r+1$ linearly independent nonnegative solutions $z_i$ of the system of equations (\ref{gonl40}) such that the set of strictly positive solutions of the system of equations (\ref{gonl40} )
is given by the formula
\begin{eqnarray} \label{gonl41} y=\sum\limits
_{i=r}^l\gamma _iz_i , \end{eqnarray}
where \begin{eqnarray*} z_i=\{\langle\psi
,f_1\rangle - \langle C_i,f_1\rangle y_i^*,\dots , \langle\psi
,f_r\rangle - \langle C_i,f_r\rangle y_i^*,0,\dots ,y_i^*, 0,\dots
,0\},\quad i=\overline {r+1,l},\end{eqnarray*}
\begin{eqnarray*} z_r=\{\langle\psi ,f_1\rangle ,\dots
,\langle\psi ,f_r\rangle ,0,\dots ,0\},\end{eqnarray*}
 \begin{eqnarray*} y_i^*=\left\{\begin{array}{ll}
  \min\limits_{s\in K_i}\frac {\langle \psi
,f_s \rangle} {\langle C_i,f_s \rangle },& K_i=\{s,\langle
C_i,f_s \rangle >0\},\\ 1, & {\langle C_i,f_s \rangle}\leq 0,
~\forall \ s=\overline{1,r}, \end{array} \right.
\end{eqnarray*}
and the components of the vector
$\{\gamma_i\}_{i=r}^l$ satisfy a set of inequalities
\begin{eqnarray*} \sum\limits _{i=r}^l\gamma _i=1, \quad
\gamma _i>0, \quad i=\overline {r+1,l}, \end{eqnarray*}
\begin{eqnarray} \label{eq2}
\sum\limits_{i=r+1}^l{\langle C_i,f_k \rangle} y_i^*\gamma_i<
{\langle\psi,f_k\rangle}, \quad k=\overline{1,r}.
\end{eqnarray}
\end{te}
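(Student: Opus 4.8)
\emph{Proof idea.} The plan is to pair the system (\ref{gonl40}) with a biorthogonal basis in order to bring it to a triangular form, and then to recognize the stated vectors $z_r,z_{r+1},\dots,z_l$ as the particular solutions obtained by activating one free coordinate at a time. First, after renumbering the vectors, one may assume that $C_1,\dots,C_r$ are linearly independent and that $\psi$ lies in the interior of the cone they generate. Extend $C_1,\dots,C_r$ to a basis $\bar a_1=C_1,\dots,\bar a_r=C_r,\bar a_{r+1},\dots,\bar a_n$ of $R^n$ and let $f_1,\dots,f_n$ be the biorthogonal system, $\langle\bar a_i,f_k\rangle=\delta_{ik}$. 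Since the cone is $r$-dimensional, every $C_i$ is a linear combination of $C_1,\dots,C_r$, hence $\langle C_i,f_k\rangle=0$ for $k=\overline{r+1,n}$ and all $i$; by Lemma \ref{mant2} also $\langle\psi,f_k\rangle=0$ for $k=\overline{r+1,n}$, so pairing (\ref{gonl40}) with $f_{r+1},\dots,f_n$ yields only trivial identities. Pairing with $f_k$, $k=\overline{1,r}$, and using $\langle C_j,f_k\rangle=\delta_{jk}$ for $j=\overline{1,r}$, gives $y_k=\langle\psi,f_k\rangle-\sum_{i=r+1}^l\langle C_i,f_k\rangle y_i$ for $k=\overline{1,r}$. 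Thus $y_{r+1},\dots,y_l$ are free parameters and, recalling $\langle\psi,f_k\rangle>0$ for $k=\overline{1,r}$ from Lemma \ref{mant2}, a solution $y$ is strictly positive precisely when $y_i>0$ for $i=\overline{r+1,l}$ and $\sum_{i=r+1}^l\langle C_i,f_k\rangle y_i<\langle\psi,f_k\rangle$ for $k=\overline{1,r}$.

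Next I would verify that $z_r$ is the solution with $y_{r+1}=\dots=y_l=0$ and that, for $i=\overline{r+1,l}$, $z_i$ is the solution with $y_i=y_i^*$ and all other free coordinates equal to zero; this is immediate from the triangular form above. The role of the two-case definition of $y_i^*$ is exactly to force nonnegativity of $z_i$: if $\langle C_i,f_s\rangle\le 0$ for all $s=\overline{1,r}$, then $y_i^*=1$ and $\langle\psi,f_k\rangle-\langle C_i,f_k\rangle y_i^*\ge\langle\psi,f_k\rangle>0$; otherwise $y_i^*=\min_{s\in K_i}\langle\psi,f_s\rangle/\langle C_i,f_s\rangle>0$, so for $k\in K_i$ the choice of the minimum gives $\langle\psi,f_k\rangle-\langle C_i,f_k\rangle y_i^*\ge 0$, and for $k\notin K_i$ (where $\langle C_i,f_k\rangle\le 0$) the $k$-th component is $\ge\langle\psi,f_k\rangle>0$. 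Linear independence of $z_r,\dots,z_l$ then follows because $z_i$ with $i>r$ is the only one among them whose $i$-th coordinate ($=y_i^*>0$) is nonzero, while $z_r\ne 0$ since its first coordinate is $\langle\psi,f_1\rangle>0$.

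Finally, for the parametrization, a direct computation with $y=\sum_{i=r}^l\gamma_iz_i$ under $\sum_{i=r}^l\gamma_i=1$ gives $y_i=\gamma_iy_i^*$ for $i=\overline{r+1,l}$ and $y_k=\langle\psi,f_k\rangle-\sum_{i=r+1}^l\gamma_i\langle C_i,f_k\rangle y_i^*$ for $k=\overline{1,r}$; hence such a $y$ is automatically a solution, and it is strictly positive exactly when $\gamma_i>0$ for $i=\overline{r+1,l}$ and the inequalities (\ref{eq2}) hold. Conversely, given any strictly positive solution $y$, put $\gamma_i=y_i/y_i^*>0$ for $i=\overline{r+1,l}$ and $\gamma_r=1-\sum_{i=r+1}^l\gamma_i$; then $\sum_{i=r}^l\gamma_i=1$, the triangular form gives $y=\sum_{i=r}^l\gamma_iz_i$, and (\ref{eq2}) holds because $y_k>0$. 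Since $(\gamma_{r+1},\dots,\gamma_l)\mapsto(y_{r+1},\dots,y_l)=(\gamma_{r+1}y_{r+1}^*,\dots,\gamma_ly_l^*)$ is a bijection, no strictly positive solution is missed, which is the assertion.

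The step I expect to require the most care is the nonnegativity of the generators $z_i$ — i.e. checking that the "minimum, or $1$" prescription for $y_i^*$ is precisely what makes all $r$ leading components of $z_i$ nonnegative — together with the bookkeeping that legitimizes discarding the last $n-r$ equations after pairing with the biorthogonal basis; both rely on Lemma \ref{mant2}. The remaining steps are routine linear algebra.
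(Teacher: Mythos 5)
Your proof is correct: the reduction via the biorthogonal system $f_1,\dots,f_n$ to the triangular relations $y_k=\langle\psi,f_k\rangle-\sum_{i=r+1}^l\langle C_i,f_k\rangle y_i$, $k=\overline{1,r}$, the check that the two-case definition of $y_i^*$ makes the generators $z_r,\dots,z_l$ nonnegative and linearly independent, and the two-way parametrization argument (with $\gamma_i=y_i/y_i^*$ in the converse direction) are all sound. The paper itself defers the proof of Theorem \ref{jant39} to \cite{Gonchar2}, \cite{7Gonchar}, but your argument uses exactly the machinery the statement is built on (Lemma \ref{mant2} and the biorthogonal vectors $f_i$), so it is essentially the intended approach rather than a genuinely different route.
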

For the proof of Theorem \ref{jant39}, see:\cite{Gonchar2}, \cite{7Gonchar}.
\begin{de}\label{VickTin9}
Let $C_i=\{c_{s i} \}_{s=1}^n \in R_+^n, \ i=\overline{1,l},$ be a set of demand vectors and $b_i= \{b_ {s i} \}_{s=1}^n \in R_+^n, \ i=\overline{1,l},$ be a set of supply vectors. We say that the structure of supply is consistent with the structure of demand in the strict sense
if the matrix $B$ has the representation $B=C B_1$, where the matrix $B$ consists of vectors $b_i \in R_+^n, \ i=\overline{1,l},$ as columns, and the matrix $C$ consists of vectors $C_i \in R_+^n, \ i=\overline{1,l},$ as columns, and $B_1$ is a square non-negative non-decomposable matrix.
\end{de}

\begin{de}\label{1VickTin9}
Let $C_i=\{c_{s i} \}_{s=1}^n \in R_+^n, \ i=\overline{1,l},$ be a set of demand vectors and $b_i= \{b_ {s i} \}_{s=1}^n \in R_+^n, \ i=\overline{1,l},$ be a set of supply vectors. We  say that the supply structure is consistent with the demand structure in the strict sense of the rank $|I|$ if there exists such a subset $I \subseteq N$ that the matrix $B^I$ has the representation $ B^I=C^I B_1 ^I$, where the matrix $B^I$ consists of vectors $b_i^I \in R_+^{|I|}, \ i=\overline{1,l },$ in the form of columns, and the matrix $C^ I$ consists of vectors $C_i^I\in R_+^n, \ i=\overline{1,l},$ in the form of columns, and $B_1^I= |b_{is}^{1,I}| _{i,s=1}^l$ is a square non-negative non-decomposable matrix, where $b_i^I=\{b_{ki}\}_{k \in I },$
$C_i^I=\{c_{ki}\}_{k \in I}$ and, in addition, the inequalities hold
\begin{eqnarray*}\label{2VickTin9}
 \sum\limits_{i=1}^l c_{ki}y_i^I < \sum\limits_{i=1}^l b_{ki}, \quad k \in N \setminus I, \quad y_i^I =\sum\limits_{s=1}^l b_{is}^{1,I}.
\end{eqnarray*}
\end{de}

\begin{leme}\label{wickkteen1}
Suppose that the set of supply vectors $b_i = \{b_{s i} \}_{s=1}^n\in R_+^n, \ i=\overline{1,l}, $ belongs to the polyhedral cone formed by the set of vectors of demand $\{C_i=\{c_{ki}\}_{k=1}^n \in R_+^n\ i=\overline {1,l}\}.$
Then for the matrix $B=|b_{ki}|_{k=1,i=1}^{n,l}$ formed by the columns of vectors $b_i=\{b_{ki}\}_{k =1} ^n \in R_+^n, \ i=\overline{1,l}, $ has a representation
\begin{eqnarray} \label{wickkteen2}
B=C B_1,
\end{eqnarray}
where the matrix $C=|c_{ki}|_{k=1,i=1}^{n,l}$ is formed by the columns of vectors $C_i=\{c_{ki}\}_{k =1}^n \in R_+^n, \ i=\overline{1,l}, $ and the matrix $B_1=
|b_{m i}^1|_{m=1,i=1}^{l}$ is non-negative. If, in addition, the set of supply vectors $b_i \in R_+^n, \ i=\overline{1,l}, $ belongs to the interior of the polyhedral cone formed by the demand vectors $\{C_i=\{c_{ki}\}_{k=1}^n \in R_+^n\ i=\overline {1,l} \},$ matrix $B_1$ can be chosen strictly positive.
\end{leme}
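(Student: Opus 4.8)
The plan is to obtain both assertions by reading off the definition of a polyhedral cone (Definition \ref{mant0}) and of the interior of a polyhedral cone (Definition \ref{1mant2}), rewritten as a statement about the columns of a matrix factorization.

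First I would prove the factorization $B=CB_1$ with $B_1\ge 0$. By hypothesis each supply vector $b_i$ lies in the polyhedral cone generated by the demand vectors $\{C_1,\dots,C_l\}$, so Definition \ref{mant0} supplies numbers $b_{mi}^1\ge 0$, $m=\overline{1,l}$, with
\begin{eqnarray*}
b_i=\sum\limits_{m=1}^l C_m\, b_{mi}^1,\qquad i=\overline{1,l}.
\end{eqnarray*}
Collecting these coefficients into the matrix $B_1=|b_{mi}^1|_{m=1,i=1}^{l}\ge 0$, the $i$-th column of the product $CB_1$ is $\sum_{m=1}^l b_{mi}^1 C_m=b_i$, which is the $i$-th column of $B$; hence $B=CB_1$, i.e. (\ref{wickkteen2}).

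For the second assertion I would run the same computation starting from Definition \ref{1mant2}. If, in addition, each $b_i$ belongs to the interior of the cone generated by $\{C_1,\dots,C_l\}$, then the coefficient vector may be chosen strictly positive, i.e. the numbers above satisfy $b_{mi}^1>0$ for every $m=\overline{1,l}$. Performing this for each $i=\overline{1,l}$ and again reading the coefficients off column by column yields a strictly positive matrix $B_1=|b_{mi}^1|_{m=1,i=1}^{l}$ with $B=CB_1$.

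I do not expect a genuine obstacle: the lemma is just (interior) membership in a finitely generated cone, stated simultaneously for the $l$ columns $b_i$ and packaged as a matrix identity. The only points that deserve a sentence are that the coefficient matrix is square of size $l\times l$ (there being $l$ generators $C_i$ and $l$ supply vectors $b_i$) and that the column-by-column construction is legitimate because the $i$-th column of $CB_1$ depends only on the $i$-th column of $B_1$. Nothing here needs the $C_i$ to be linearly independent, so no appeal to Lemma \ref{mant2} or Theorem \ref{jant39} is required for this statement; the lemma should instead be viewed as the preparatory step that puts $B$ in the form $B=CB_1$ demanded by Definition \ref{VickTin9}, with non-decomposability of $B_1$ left to be imposed separately.
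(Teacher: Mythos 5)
Your proof is correct, but it does not follow the paper's route. The paper proves the second assertion by invoking Theorem \ref{jant39}: interior membership of each $b_i$ is fed into that theorem to produce a strictly positive solution $y_i=\{y_{si}\}_{s=1}^l$ of $b_{ki}=\sum_{s=1}^l c_{ks}y_{si}$, and these columns are then renamed $b_{si}^1$ to give $B=CB_1$; the first (non-negative) assertion is then said to follow from the second. You instead read both assertions straight off Definitions \ref{mant0} and \ref{1mant2}, packaging the coefficient vectors column by column, and you explicitly decline to use Theorem \ref{jant39}. Within the paper's own definitional framework this is legitimate and in fact cleaner: Definition \ref{1mant2} literally defines interior membership as the existence of a strictly positive coefficient vector over the full generating set, so the strictly positive $B_1$ is immediate, and your direct treatment of the first part avoids the slightly awkward claim that it "follows from the second" (cone membership does not imply interior membership; what is really meant is that the same argument works with non-negative coefficients, which is exactly what you wrote). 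What the paper's heavier route buys is robustness if "interior" were understood topologically (or relative to a lower-dimensional subcone) rather than through Definition \ref{1mant2}: in that reading one genuinely needs a device such as Theorem \ref{jant39} to manufacture coefficients that are strictly positive with respect to \emph{all} $l$ generators, including redundant ones, whereas your argument would then be assuming what has to be proved. As the lemma is stated and the definitions are given, your more elementary argument is sound, and your closing remarks (the $l\times l$ size of $B_1$, the column-wise legitimacy, and that non-decomposability is a separate requirement for Definition \ref{VickTin9}) are accurate.
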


\begin{proof}
The proof of the first part of  Lemma
\ref{wickkteen1} follows from the second one.
If each vector $b_i, \ i=\overline{1,l}, $ belongs to the interior of the polyhedral cone formed by the vectors $C_i, \ i=\overline{1,l}, $  then according to the Theorem \ref{jant39} there exists a strictly positive vector $y_i=\{y_{ki} \}_{k=1}^l$ such that
$$ b_{ki}=\sum\limits_{s=1}^l c_{ks} y_{si}, \quad k=\overline{1,n}.$$
Denote $y_{si}=b_{si}^1,$ then we get
$$ b_{ki}=\sum\limits_{s=1}^l c_{ks} b_{si}^1, \quad k=\overline{1,n}, \quad i=\overline{1, l}.$$ This  proves Lemma \ref{wickkteen1}.
\end{proof}

\begin{leme}\label{pupvittin7}
Let $B_1=||b_{ki}^1||_{k,i=1}^l$ be a square non-negative non-decomposable matrix. Then the problem
\begin{eqnarray}\label{pupvittin8}
\sum\limits_{k=1}^l b_{ki}^1d_k=\sum\limits_{s=1}^l b_{is}^1 d_i, \quad i=\overline{1,l},
\end{eqnarray}
has a strictly positive solution with respect to the vector $d=\{d_k\}_{k=1}^l.$
\end{leme}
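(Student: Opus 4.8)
The plan is to reduce the system (\ref{pupvittin8}) to the problem of finding a strictly positive left eigenvector of an irreducible row-stochastic matrix, and then to invoke the Frobenius theorem on non-negative irreducible matrices. First I would rewrite (\ref{pupvittin8}) in matrix form. Put $r_i=\sum_{s=1}^l b_{is}^1$ for the $i$-th row sum of $B_1$. The case $l=1$ is trivial, so assume $l\ge 2$; then non-decomposability of $B_1$ forces $r_i>0$ for every $i$, since a zero $i$-th row would leave the vertex $i$ with no outgoing arc in the directed graph associated with $B_1$, contradicting its strong connectedness. Hence the diagonal matrix $R=\mathrm{diag}(r_1,\dots ,r_l)$ is invertible, and, reading $b^1_{ki}$ as the entry in row $k$ and column $i$, the system (\ref{pupvittin8}) is exactly $B_1^{T}d=Rd$.

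Next I would perform the substitution $\mu_k=r_kd_k$, i.e. $d_k=\mu_k/r_k$, and introduce the matrix $P=(p_{ki})_{k,i=1}^l$ with $p_{ki}=b_{ki}^1/r_k$. Then $P\ge 0$, each row of $P$ sums to $1$, and $P$ has precisely the same zero pattern as $B_1$, so $P$ is non-negative, row-stochastic and non-decomposable. Substituting into (\ref{pupvittin8}) turns its $i$-th equation into
\[
\sum_{k=1}^l p_{ki}\mu_k=\mu_i,\qquad i=\overline{1,l},
\]
that is, $P^{T}\mu=\mu$. Thus it suffices to produce a strictly positive solution $\mu$ of $P^{T}\mu=\mu$; then $d_k=\mu_k/r_k>0$ is the sought strictly positive solution of (\ref{pupvittin8}).

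Finally I would apply the Frobenius theorem to the non-negative irreducible matrices $P$ and $P^{T}$: the spectral radius is a simple eigenvalue admitting a strictly positive eigenvector, and it is the only eigenvalue possessing a positive eigenvector. Since $P\mathbf 1=\mathbf 1$ for the row-stochastic matrix $P$, we get $\rho(P)=1$, hence $\rho(P^{T})=1$, and the corresponding Perron eigenvector $\mu$ of $P^{T}$ satisfies $P^{T}\mu=\mu$ with $\mu_k>0$ for all $k$. This gives the assertion.

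As for difficulty, there is no genuine obstacle beyond bookkeeping: the only points needing a word of justification are that non-decomposability of $B_1$ makes all row sums strictly positive (so that $R^{-1}$ and $P$ are well defined) and that $P$ inherits irreducibility from $B_1$; after that the statement is an immediate consequence of Perron--Frobenius. Equivalently, one may note that $M=R-B_1^{T}$ is a singular irreducible M-matrix, whose kernel is one-dimensional and spanned by a strictly positive vector, but this reduces to the same Perron--Frobenius input.
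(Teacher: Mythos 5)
Your argument is correct, and its first half coincides with the paper's: the paper also normalizes by the row sums, noting first that indecomposability forces $\sum_s b_{ks}^1>0$, then introducing $e_{ki}=b_{ki}^1/\sum_s b_{ks}^1$ (your matrix $P$) and reducing (\ref{pupvittin8}) to the left-eigenvector problem $\sum_k e_{ki}d_k^1=d_i^1$, which is exactly your $P^{T}\mu=\mu$ up to the rescaling $d_k=\mu_k/r_k$. Where you genuinely differ is the existence-and-positivity step: you invoke the Perron--Frobenius theorem for the irreducible row-stochastic matrix $P$ (spectral radius $1$, strictly positive Perron eigenvector of $P^{T}$), whereas the paper stays self-contained, obtaining a non-negative fixed point of the averaged map $H_i(d^1)=\frac{1}{2}\bigl(d_i^1+\sum_k e_{ki}d_k^1\bigr)$ on the simplex via Brouwer's fixed point theorem and then upgrading it to strict positivity through powers of $E$ and indecomposability. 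Your route buys brevity and sidesteps a rough edge in the paper's positivity argument (the claim $E^{l-1}>0$ for an indecomposable non-negative matrix fails in general, e.g.\ for permutation matrices; the self-contained fix is to use $(I+E^{T})^{l-1}d^1=2^{l-1}d^1$ together with $(I+E)^{l-1}>0$), at the price of citing Perron--Frobenius as a black box; the paper's route uses only Brouwer's theorem, the same tool it employs elsewhere (Theorems \ref{mykt1}, \ref{mykt19}).
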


\begin{proof}
Due to the fact that the matrix $B_1$ is indecomposable, we have $\sum\limits_{s=1}^l b_{is}^1>0, \ i=\overline{1,l}.$
 Let's consider the problem
\begin{eqnarray}\label{pupvittin9}
\sum\limits_{k=1}^l e_{ki} d_k^1= d_i^1,\quad i=\overline{1,l},
\end{eqnarray}
and prove that it has a strictly positive solution with respect to the vector $d^1=\{d_k^1\}_{k=1}^l,$
where we introduced the denotation
$$e_{ki}=\frac{b_{ki}^1}{\sum\limits_{s=1}^l b_{ks}^1}.$$
To prove this, consider the problem
\begin{eqnarray}\label{pupvittin10}
\frac{d_i^1+\sum\limits_{k=1}^l e_{ki} d_k^1}{2}= d_i^1,\quad i=\overline{1,l},
\end{eqnarray}
in the set
$P=\{d^1=\{d_k^1\}_{k=1}^l , \ d_k^1 \geq 0, \ k=\overline{1,l}, \sum\limits_{k =1}^l d_k^1=1 \}.$
Due to equalities $\sum\limits_{i=1}^l e_{ki}=1, \ k=\overline{1,l},$
the map $ H(d^1)=\{H_i(d^1)\}_{i=1}^l, $
 maps $P$ into itself and is continuous on it, where
 $$ H_i(d^1)=\frac{d_i^1+\sum\limits_{k=1}^l e_{ki} d_k^1}{2},\quad i=\overline{1,l} .$$ According to Brauer's Theorem \cite{Nirenberg}, there exists a fixed point of the mapping $H(d^1),$ i.e
$$ \frac{d_i^1+\sum\limits_{k=1}^l e_{ki} d_k^1}{2}=d_i^1,\quad i=\overline{1,l}.$$

The same fixed point satisfies the problem (\ref{pupvittin9}). Since the matrix
 $E=||e_{ki}||_{k, i=1}^l$ is non-negative and indecomposable, we have $E^{l-1}>0.$ From the fact that $Ed ^1= d^1$ implies $E^{l-1}d^1=d^1.$ This proves that the vector $d^1$ is strictly positive. Lemma \ref{pupvittin7} is proved.
\end{proof}
\begin{te}\label{wickkteen3}
Let the supply structure agree with the demand structure in the strict sense with supply vectors $b_i=\{b_{ki}\}_{k=1}^n \in R_+^n, \ i=\overline{ 1,l}, $ and demand vectors $\{C_i=\{c_{ki}\}_{k=1}^n \in R_+^n\ i=\overline {1,l}\}, $ and let
 $\sum\limits_{s=1}^nc_{si}>0, i=\overline{1,l}, $ \ $\sum\limits_{i=1}^l c_{si}>0, s =\overline{1,n}.$ The necessary and sufficient conditions for the existence of a solution to the system of equations
\begin{eqnarray}\label{awickkteen4}
\sum\limits_{i=1}^l c_{ki}\frac{<b_i, p>}{<C_i, p>}=\sum\limits_{i=1}^l b_{ki}, \quad k=\overline{1,n},
\end{eqnarray}
 relative to the vector $p$ are that the vector $D=\{d_i\}_{i=1}^l$ belongs to the polyhedral cone formed by the vectors $C_k^T=\{c_{ki}\}_{i=1}^ l, \ k=\overline{1,n},$ where
 $B=C B_1,$ $B_1=| b_{ki}^1|_{k,i=1}^l$ is a non-negative non-decomposable matrix,
vector $D=\{d_i\}_{i=1}^l$ is a strictly positive solution of the system of equations
\begin{eqnarray}\label{awickkteen5}
\sum\limits_{k=1}^l b_{ki}^1d_k=y_i d_i, \quad y_i=\sum\limits_{k=1}^l b_{ik}^1, \quad i=\overline{ 1,l}.
\end{eqnarray}
\end{te}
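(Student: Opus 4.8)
The plan is to eliminate $p$ in favour of the scalars $x_i:=\langle C_i,p\rangle$. From $B=CB_1$ one gets the vector identity $b_i=\sum_{s=1}^l b_{si}^1 C_s$, hence $\langle b_i,p\rangle=\sum_{s=1}^l b_{si}^1 x_s=(B_1^Tx)_i$, while $\sum_{i=1}^l b_{ki}=\sum_{i=1}^l c_{ki}y_i$ with $y_i=\sum_{s=1}^l b_{is}^1$ precisely the quantity in (\ref{awickkteen5}). Substituting, the system (\ref{awickkteen4}) becomes
\begin{eqnarray*}
\sum_{i=1}^l c_{ki}\left(\frac{(B_1^Tx)_i}{x_i}-y_i\right)=0,\qquad k=\overline{1,n},
\end{eqnarray*}
i.e. $C\delta=Cy$ in $R^n$ with $\delta_i:=\langle b_i,p\rangle/\langle C_i,p\rangle=(B_1^Tx)_i/x_i$; for a price vector $p\in R_+^n$, being a solution already forces $\langle C_i,p\rangle>0$ (here the hypothesis $\sum_s c_{si}>0$ enters), so the fractions are legitimate.

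For the sufficiency direction I would argue: by Lemma \ref{pupvittin7} a strictly positive solution $D$ of (\ref{awickkteen5}) exists, so the assertion ``$D$ lies in the cone generated by $C_1^T,\dots,C_n^T$'' is meaningful and says exactly that there is $p\in R_+^n$ with $\langle C_i,p\rangle=d_i$ for all $i$. For this $p$ one has $x=D$, hence $\delta_i=(B_1^TD)_i/d_i=y_i$ by (\ref{awickkteen5}), and the displayed equation holds term by term; thus $p$ solves (\ref{awickkteen4}), with all denominators $d_i>0$.

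For necessity, let $p$ solve (\ref{awickkteen4}) and put $x_i=\langle C_i,p\rangle>0$. The reformulation gives $C(\delta-y)=0$. I would then invoke linear independence of the demand vectors $C_1,\dots,C_l$ (equivalently, injectivity of the $n\times l$ matrix $C$ as a map $R^l\to R^n$) to get $\delta=y$, i.e. $(B_1^Tx)_i=y_ix_i$: so $x$ is a strictly positive solution of (\ref{awickkteen5}). Since $B_1$ is non-negative and indecomposable, the Perron--Frobenius argument already used in the proof of Lemma \ref{pupvittin7} (where $E^{l-1}>0$) shows such a positive solution is unique up to a positive scalar, so $D=\lambda x$, $\lambda>0$. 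Writing $x_i=\sum_{k=1}^n c_{ki}p_k$ yields $D=\sum_{k=1}^n(\lambda p_k)C_k^T$ with $\lambda p_k\ge 0$, i.e. $D$ belongs to the polyhedral cone formed by $C_1^T,\dots,C_n^T$.

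The reformulation and sufficiency are routine; the necessity direction is where the work lies. The delicate step is passing from $C(\delta-y)=0$ to $\delta=y$: this needs non-degeneracy of the demand matrix $C$, and only afterwards does uniqueness of the Perron vector of $B_1$ identify $\{\langle C_i,p\rangle\}$ with a positive multiple of $D$. I expect the main difficulty to be checking that this non-degeneracy is genuinely available under the stated hypotheses, or else replacing it by a direct Farkas/duality argument showing $D\in C^T(R_+^n)$ straight from the data $C\delta=Cy$, $\delta=(B_1^Tx)/x$, $x=C^Tp\ge 0$; a secondary point is making sure every denominator $\langle C_i,p\rangle$ is strictly positive, which ties back to the positivity hypotheses on the rows and columns of $C$.
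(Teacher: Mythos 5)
The paper does not actually contain a proof of Theorem \ref{wickkteen3} (it refers the reader to \cite{7Gonchar}), so your argument has to stand on its own. Your reformulation (eliminating $p$ via $x_i=\langle C_i,p\rangle$, rewriting (\ref{awickkteen4}) as $C\delta=Cy$ with $\delta_i=(B_1^Tx)_i/x_i$) and the whole sufficiency direction are correct. The necessity direction, however, has a genuine gap, and it is exactly the step you flagged: to pass from $C(\delta-y)=0$ to $\delta=y$ you invoke linear independence of the demand vectors $C_1,\dots,C_l$. That is not among the hypotheses (only positivity of the row and column sums of $C$ is assumed), and it necessarily fails whenever $l>n$, i.e. whenever there are more consumers than goods, which is the typical regime for this model. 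Nor can it be recovered by a cheap duality trick: pairing $C(\delta-y)=0$ with $p$ gives $\langle x,\delta-y\rangle=0$, but this is an identity (both sides equal $\sum_k y_k x_k$), so it carries no information.

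The failure is not hypothetical. Take $n=1$, $l=2$, $C=(1,1)$, $B_1=\begin{pmatrix}0&1\\2&0\end{pmatrix}$ (non-negative, indecomposable), so $B=CB_1=(2,1)$ and all stated hypotheses hold. Every $p>0$ solves (\ref{awickkteen4}), since its left-hand side is identically $b_{11}+b_{12}=3$; yet $x=(p,p)$ gives $\delta=(2,1)\neq y=(1,2)$, so $x$ is \emph{not} a solution of (\ref{awickkteen5}), and the strictly positive solution of (\ref{awickkteen5}) is $D\propto(2,1)$, which does not lie in the ray $\{\alpha(1,1):\alpha\geq 0\}$ generated by the single row of $C$. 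Thus the implication $C(\delta-y)=0\Rightarrow\delta=y$ is false under the stated hypotheses, and your necessity argument collapses before the (correct) Perron--Frobenius uniqueness step is ever reached; the example even shows that the literal necessity claim is problematic for this particular choice of $B_1$, whereas a different admissible representation, e.g. $B_1'=\begin{pmatrix}1.5&0.5\\0.5&0.5\end{pmatrix}$ with $CB_1'=B$, yields $D'\propto(1,1)$, which does lie in the cone. So any repair must either add a non-degeneracy assumption on $C$ (your injectivity) or exploit the freedom in choosing the representation $B=CB_1$; the Farkas/duality alternative you mention is not carried out, and as written the necessity half is unproved.
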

The proof of Theorem \ref{wickkteen3}, see \cite{7Gonchar}.
\begin{de}\label{10VickTin9}
Let $C_i \in R_+^n, \ i=\overline{1,l},$ be a set of demand vectors and $b_i \in R_+^n, \ i=\overline{1,l}, $ be a set supply vectors. We say that the structure of supply is consistent with the structure of demand in a weak sense,
if the representation $B=C B_1$ is true for the matrix $B$, where the matrix $B$ consists of vectors $b_i \in R_+^n, \ i=\overline{1,l},$ as columns, and the matrix $C$ consists of vectors $C_i \in R_+^n, \ i=\overline{1,l},$ as columns, and $B_1$ is a square matrix satisfying the conditions
\begin{eqnarray}\label{10wickkteen33}
\sum\limits_{s=1}^lb_{is}^1 \geq 0,\quad i=\overline{1,l} \quad B_1=|b_{is}^1|_{i, s=1 }^l.
\end{eqnarray}
\end{de}

\begin{de}\label{11VickTin9}
Let $C_i \in R_+^n, \ i=\overline{1,l},$ be a set of demand vectors, and $b_i \in R_+^n, \ i=\overline{1,l}, $ set of supply vectors. We  say that the supply structure is consistent with the demand structure in the weak sense of the rank $|I|$, if there is such a subset $I \subseteq N$ that the representation $ B^I=C^I B_1$ is true for the matrix $B^I$ , where the matrix $B^I$ consists of vectors $b_i^I \in R_+^{|I|}, \ i=\overline{1,l },$ in the form of columns, and the matrix $C^ I$ consists of vectors $C_i^I\in R_+^n, \ i=\overline{1,l},$ in the form of columns and $B_1^I$ is a square matrix satisfying the conditions
\begin{eqnarray}\label{101wickkteen33}
\sum\limits_{s=1}^lb_{is}^{1, I} \geq 0,\quad i =\overline{1,l}, \quad B_1^I=|b_{is}^{1 ,I}|_{i,s=1}^l,
\end{eqnarray}
 where $b_i^I=\{b_{ki}\}_{k \in I},$
$C_i^I=\{c_{ki}\}_{k \in I}$ and, in addition, the inequalities hold
\begin{eqnarray*}\label{2VickTin9}
 \sum\limits_{i=1}^l c_{ki}y_i^I < \sum\limits_{i=1}^l b_{ki}, \quad k \in N \setminus I, \quad y_i^I =\sum\limits_{s=1}^l b_{is}^{1,I}.
\end{eqnarray*}
\end{de}

\begin{te}\label{100wickkteen3}
Let the supply structure agree with the demand structure in the weak sense with supply vectors $b_i=\{b_{ki}\}_{k=1}^n \in R_+^n, \ i=\overline {1,l}, $ and demand vectors $\{C_i=\{c_{ki}\}_{k=1}^n \in R_+^n\ i=\overline {1,l}\} ,$ and let
 $\sum\limits_{s=1}^nc_{si}>0, i=\overline{1,l}, $ \ $\sum\limits_{i=1}^l c_{si}>0, s =\overline{1,n}.$ The necessary and sufficient conditions for the existence of a solution to the system of equations
\begin{eqnarray}\label{wickkteen4}
\sum\limits_{i=1}^l c_{ki}\frac{<b_i, p>}{<C_i, p>}=\sum\limits_{i=1}^l b_{ki}, \quad k=\overline{1,n},
\end{eqnarray}
are that the vector $D=\{d_i\}_{i=1}^l$ belongs to the polyhedral cone created by the vectors $C_k^T=\{c_{ki}\}_{i=1}^l , \ k= \overline{1,n},$ where
 $B=C B_1,$ $B_1=| b_{ki}^1|_{k,i=1}^l$ is a square matrix,
vector $D=\{d_i\}_{i=1}^l$ is a strictly positive solution of the system of equations
\begin{eqnarray}\label{wickkteen5}
\sum\limits_{k=1}^l b_{ki}^1d_k=y_i d_i, \quad y_i=\sum\limits_{k=1}^l b_{ik}^1\geq 0, \quad i= \overline{1,l}.
\end{eqnarray}
\end{te}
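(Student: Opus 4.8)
\medskip
\noindent\textbf{Proof sketch.}
The plan is to linearise \eqref{wickkteen4} through the substitution $q_i=\langle C_i,p\rangle$ and then to use that, in the weak sense, the factorisation $B=CB_1$ is highly non\nobreakdash-unique. Two elementary identities are used throughout. First, if $CB_1=B$ and $\beta_i=\{b_{ki}^1\}_{k=1}^l$ is the $i$-th column of $B_1$, then $b_i=C\beta_i$, so for every vector $p$
\[
\langle b_i,p\rangle=\langle\beta_i,C^{T}p\rangle=\sum_{k=1}^{l}b_{ki}^1\,\langle C_k,p\rangle .
\]
Second, $\sum_{i=1}^l b_{ki}=\sum_{i=1}^l\sum_{m=1}^l c_{km}b_{mi}^1=\sum_{m=1}^l c_{km}y_m$, since $y_m=\sum_{i=1}^l b_{mi}^1$. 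Hence \eqref{wickkteen4} is equivalent to $Ct=B\mathbf 1$, where $t=\{t_i\}_{i=1}^l$ with $t_i=\langle b_i,p\rangle/\langle C_i,p\rangle$ and $\mathbf 1=(1,\dots,1)^{T}\in R^l$.

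\medskip
\noindent\textbf{Sufficiency.}
Suppose a strictly positive $D=\{d_i\}_{i=1}^l$ solves \eqref{wickkteen5} for some square matrix $B_1$ with $CB_1=B$ and \eqref{10wickkteen33}, and $D$ belongs to the cone generated by the rows $C_k^{T}$; by Definition~\ref{mant0} this says $D=C^{T}\lambda$ with $\lambda\in R_+^n$, and $\lambda\neq0$ because $D\neq0$. Put $p=\lambda$. Then $\langle C_i,p\rangle=(C^{T}\lambda)_i=d_i>0$, so the denominators in \eqref{wickkteen4} are nonzero, and by the first identity together with \eqref{wickkteen5}, $\langle b_i,p\rangle=\sum_k b_{ki}^1 d_k=y_i d_i$, whence $\langle b_i,p\rangle/\langle C_i,p\rangle=y_i$. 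The left-hand side of \eqref{wickkteen4} then equals $\sum_i c_{ki}y_i$, which by the second identity is $\sum_i b_{ki}$; so $p=\lambda$ solves \eqref{wickkteen4}.

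\medskip
\noindent\textbf{Necessity.}
Let $p\in R_+^n$ solve \eqref{wickkteen4}. Then $q_i:=\langle C_i,p\rangle>0$ for all $i$ (otherwise \eqref{wickkteen4} is undefined), the vector $q=C^{T}p$ lies in the cone generated by the rows of $C$, and the reduction above yields $Ct=B\mathbf 1$ with $t_i=\langle b_i,p\rangle/q_i\ge0$. Pick any square $B_1^{(0)}$ with $CB_1^{(0)}=B$ — one exists since the supply structure is weakly consistent with the demand structure — and set
\[
B_1=B_1^{(0)}+\bigl(t-B_1^{(0)}\mathbf 1\bigr)\,\frac{1}{l}\,\mathbf 1^{T}.
\]
Since $C\bigl(t-B_1^{(0)}\mathbf 1\bigr)=Ct-CB_1^{(0)}\mathbf 1=B\mathbf 1-B\mathbf 1=0$, we still have $CB_1=B$; the row sums of $B_1$ equal $t_i\ge0$, so \eqref{10wickkteen33} holds and the vector $y$ associated with $B_1$ equals $t$. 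Applying the first identity to this $B_1$ gives $\sum_k b_{ki}^1 q_k=\langle b_i,p\rangle=t_i q_i=y_i q_i$, i.e.\ $D=q$ is a strictly positive solution of \eqref{wickkteen5} for $B_1$, and it lies in the cone generated by the rows of $C$ by construction. This is precisely the claimed condition.

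\medskip
\noindent\textbf{Main difficulty.}
The heart of the proof — and the only point at which it diverges from the strict-sense Theorem~\ref{wickkteen3}, where indecomposability of $B_1$ is essential in order to produce a strictly positive $D$ via a Brouwer fixed-point argument (as in Lemma~\ref{pupvittin7}) or via Theorem~\ref{jant39} — is the observation that in the weak case one may simply take $D=q=C^{T}p$ and absorb the defect $t-B_1^{(0)}\mathbf 1$ into a rank-one correction of $B_1$ within the affine family $\{B_1:CB_1=B\}$. What must be checked with care is that this correction keeps $B_1$ admissible, i.e.\ that the new row sums, which equal $t_i=\langle b_i,p\rangle/\langle C_i,p\rangle$, remain nonnegative.
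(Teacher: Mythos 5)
Your argument is mathematically sound, but it takes a genuinely different route from the paper: the paper offers no separate proof here, stating only that the proof is "the same as Theorem \ref{wickkteen3}", whose proof is in turn deferred to \cite{7Gonchar} and rests on the machinery surrounding Theorem \ref{jant39} and Lemma \ref{pupvittin7} (cone/fixed-point arguments that produce the strictly positive vector $D$ from indecomposability in the strict case). You instead reduce \eqref{wickkteen4} to the linear identity $Ct=B\mathbf 1$ with $t_i=\langle b_i,p\rangle/\langle C_i,p\rangle$ and exploit the non-uniqueness of the factorization $B=CB_1$ in the weak sense: the rank-one correction $B_1=B_1^{(0)}+\bigl(t-B_1^{(0)}\mathbf 1\bigr)\frac{1}{l}\mathbf 1^{T}$ preserves $CB_1=B$, forces the row sums to equal $t\geq 0$, and then $D=C^{T}p$ is a strictly positive solution of \eqref{wickkteen5} lying in the cone; sufficiency is the same computation run backwards. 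This is more elementary and self-contained than the route via the strict-sense theorem, and it barely uses the positivity hypotheses on $C$. Two caveats are worth recording. First, your necessity step proves the theorem under the reading in which the factorization $B_1$ is existentially quantified: you manufacture a new admissible $B_1$ rather than work with the matrix supplied by Definition \ref{10VickTin9}; if the intended claim is that the \emph{given} $B_1$ admits a strictly positive solution of \eqref{wickkteen5} in the cone, your construction does not deliver that (and when $C$ has a nontrivial kernel the ratios $t_i$ need not coincide with that matrix's row sums, so the fixed-$B_1$ reading would require a different argument). Second, you implicitly take $p\in R_+^n$, both to get $\langle C_i,p\rangle>0$ and $t_i\geq 0$ and to place $C^{T}p$ in the nonnegative cone generated by the rows of $C$; this matches the paper's convention for price vectors but should be stated explicitly.
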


\begin{proof} The proof of Theorem \ref{100wickkteen3} is the same as Theorem \ref{wickkteen3}.

Below we present the algorithms for constructing non-negative  solutions of a linear system of inequalities.
\end{proof}
\begin{te}\label{mykt1} Let $A =||a_{ki}||_{k,i=1}^n$ be a nonnegative non-decomposable matrix, and let the  vector $b=\{b_k\} _{k=1}^n$ be strictly positive.
In the set $Y=\{y=\{y_1,\ldots,y_n\}, \ y_i \geq 0, i=\overline{1,n}, \sum\limits_{i=1}^n y_i=1 \}$ there exists a solution to the system of equations
\begin{eqnarray}\label{mykt2}
\frac{y_k+ y_k\sum\limits_{i=1}^n\frac{a_{ki}}{b_k}y_i+\varepsilon}{1+ \sum\limits_{k=1}^n\sum\limits_{ i=1}^ny_k\frac{a_{ki}}{b_k}y_i+n \varepsilon }=y_k, \quad k=\overline{1,n},
\end{eqnarray}
which is strictly positive for every $\varepsilon>0.$
\end{te}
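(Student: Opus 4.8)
The plan is to realize the system (\ref{mykt2}) as the fixed-point equation of a continuous self-map of the simplex $Y$ and to invoke Brouwer's Theorem, exactly in the spirit of the proof of Lemma~\ref{pupvittin7}. Concretely, for $y=\{y_1,\dots,y_n\}\in Y$ put
\[
S(y)=\sum_{k=1}^n\sum_{i=1}^n y_k\frac{a_{ki}}{b_k}y_i,\qquad
F_k(y)=\frac{y_k+y_k\sum_{i=1}^n\frac{a_{ki}}{b_k}y_i+\varepsilon}{1+S(y)+n\varepsilon},\quad k=\overline{1,n},
\]
and $F(y)=\{F_k(y)\}_{k=1}^n$. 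A solution of (\ref{mykt2}) in $Y$ is precisely a fixed point of $F$ that lies in $Y$.

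First I would check that $F$ maps $Y$ into itself. Since $A\geq 0$, $b>0$ and $\varepsilon>0$, every numerator is nonnegative and the denominator is $\geq 1+n\varepsilon>0$, so $F_k(y)\geq 0$ for all $k$. Summing the numerators over $k$ and using $\sum_{k=1}^n y_k=1$, valid on $Y$, gives $\sum_{k=1}^n\bigl(y_k+y_k\sum_{i=1}^n\frac{a_{ki}}{b_k}y_i+\varepsilon\bigr)=1+S(y)+n\varepsilon$, which is exactly the common denominator; hence $\sum_{k=1}^n F_k(y)=1$, so $F(y)\in Y$. Continuity of $F$ on $Y$ is immediate because the denominator is bounded away from $0$. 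This identification of the common denominator with the sum of the numerators is the one computation that must be done carefully — it is what makes $F$ a genuine self-map of the simplex, and I regard it as the only real point of the argument.

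Next, since $Y$ is a nonempty compact convex subset of $R^n$ and $F$ is continuous on it, Brouwer's Theorem (as applied above, see \cite{Nirenberg}) yields a point $y\in Y$ with $F(y)=y$, i.e. a solution of the system (\ref{mykt2}) in $Y$.

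Finally I would verify strict positivity. If $y_k=0$ for some $k$, then from $y_k=F_k(y)$ the numerator $y_k+y_k\sum_{i=1}^n\frac{a_{ki}}{b_k}y_i+\varepsilon$ would have to vanish; but it is $\geq\varepsilon>0$, a contradiction. Hence $y_k>0$ for all $k$, and the constructed solution is strictly positive for every $\varepsilon>0$, which proves the theorem. I do not expect the non-decomposability of $A$ to be needed for this statement itself — the perturbation $\varepsilon$ already forces positivity — it is presumably the hypothesis used later when passing to the limit $\varepsilon\to 0$.
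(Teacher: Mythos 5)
Your proof is correct and follows essentially the same route as the paper: realize (\ref{mykt2}) as a fixed point of the continuous self-map of the simplex $Y$, apply Brouwer's theorem, and deduce strict positivity from the $\varepsilon$-term (the paper does this via the lower bound $y_k^{\varepsilon}\geq \varepsilon/(1+\sum_{k,i}a_{ki}/b_k+n\varepsilon)$, which is equivalent to your contradiction argument). Your observation that non-decomposability is not needed here but only in the later limit $\varepsilon\to 0$ is also consistent with how the paper uses the hypothesis.
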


\begin{proof}
Consider the mapping $\varphi^{\varepsilon}(y)=\{\varphi_k^{\varepsilon}(y)\}_{k=1}^n$, where
\begin{eqnarray}\label{mykt3}
\varphi_k^{\varepsilon}(y)=\frac{y_k+y_k \sum\limits_{i=1}^n\frac{a_{ki}}{b_k}y_i+\varepsilon}{1+ \sum\limits_ {k=1}^n\sum\limits_{i=1}^n y_k\frac{a_{ki}}{b_k}y_i+n \varepsilon } \quad k=\overline{1,n}.
\end{eqnarray}
For any $\varepsilon>0$, it is continuous on the set $Y=\{y=\{y_1,\ldots,y_n\}, \ y_i \geq 0, i=\overline{1,n}, \sum\limits_{i=1}^n y_i=1\}$ and maps it into itself. Based on Brauer's Theorem \cite{Nirenberg}, there exists a fixed point  of the map $\varphi^{\varepsilon}(y)$ in the set $Y,$ that is,
\begin{eqnarray}\label{mykt4}
\varphi^{\varepsilon}(y^{\varepsilon})=y^{\varepsilon}.
\end{eqnarray}
Let us prove the strict positivity of $y^{\varepsilon}=\{y_k^{\varepsilon}\}_{k=1}^n. $ The  estimate
\begin{eqnarray}\label{mykt5}
y_k^{\varepsilon} \geq \frac{\varepsilon}{1+ \sum\limits_{k=1}^n\sum\limits_{i=1}^ny_k^{\varepsilon}\frac{a_{ki }}{b_k}y_i^{\varepsilon}+n \varepsilon }\geq \frac{\varepsilon}{1+ \sum\limits_{k=1}^n\sum\limits_{i=1}^n\frac{a_{ki}}{b_k}+n \varepsilon }>0
\end{eqnarray}
is valid.
Theorem \ref{mykt1} is proved.
\end{proof}

\begin{te}\label{mykt6} Let $A =||a_{ki}||_{k,i=1}^n$ be a nonnegative non-decomposable matrix, and the vector $b=\{b_k\} _{k=1}^n$ is strictly positive.
In the set $Y=\{y=\{y_1,\ldots,y_n\}, \ y_i \geq 0, i=\overline{1,n}, \sum\limits_{i=1}^n y_i=1 \}$ there exists a solution $y_0=\{y_k^0\}_{k=1}^n$ of the system of equations
\begin{eqnarray}\label{mykt7}
\frac{y_k+ y_k\sum\limits_{i=1}^n\frac{a_{ki}}{b_k}y_i}{1+ \sum\limits_{k=1}^n\sum\limits_{i= 1}^ny_k\frac{a_{ki}}{b_k}y_i }=y_k, \quad k=\overline{1,n}.
\end{eqnarray}
For those $k,$ for which $y_k^0>0,$ the equalities

\begin{eqnarray}\label{mykt8}
\frac{1+ \sum\limits_{i=1}^n\frac{a_{ki}}{b_k}y_i^0}{1+ \sum\limits_{k=1}^n\sum\limits_{ i=1}^ny_k^0\frac{a_{ki}}{b_k}y_i^0 }=1, \quad k=\overline{1,n},
\end{eqnarray}
are true and for those $k,$ for which $y_k^0=0,$ the inequalities

\begin{eqnarray}\label{mykt9}
\frac{1+ \sum\limits_{i=1}^n\frac{a_{ki}}{b_k}y_i^0}{1+ \sum\limits_{k=1}^n\sum\limits_{ i=1}^ny_k^0\frac{a_{ki}}{b_k}y_i^0 }\leq 1, \quad k=\overline{1,n},
\end{eqnarray}
are valid.
\end{te}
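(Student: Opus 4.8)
The plan is to obtain $y_0$ as an accumulation point of the strictly positive fixed points $y^{\varepsilon}$ furnished by Theorem \ref{mykt1}, and then to read off the equalities (\ref{mykt8}) from the limiting equation (\ref{mykt7}) itself, while the inequalities (\ref{mykt9}) will have to be transported from the $\varepsilon$-level because (\ref{mykt7}) is vacuous at the vanishing components.

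First I would invoke Theorem \ref{mykt1}: for every $\varepsilon>0$ there is a strictly positive $y^{\varepsilon}=\{y_k^{\varepsilon}\}_{k=1}^n\in Y$ satisfying (\ref{mykt2}). Since $Y$ is compact, there is a sequence $\varepsilon_m\downarrow 0$ with $y^{\varepsilon_m}\to y_0$ for some $y_0=\{y_k^0\}_{k=1}^n\in Y$; in particular $\sum_{k=1}^n y_k^0=1$, so $y_0\neq 0$. The denominator $1+\sum_{k=1}^n\sum_{i=1}^n y_k\frac{a_{ki}}{b_k}y_i+n\varepsilon$ in (\ref{mykt2}) is bounded below by $1$, hence the right-hand side of (\ref{mykt2}), as a function of $(y,\varepsilon)$, is continuous; letting $\varepsilon_m\to 0$ in the fixed-point relation $\varphi^{\varepsilon_m}(y^{\varepsilon_m})=y^{\varepsilon_m}$ yields (\ref{mykt7}) for $y_0$, which proves the first assertion.

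Next, writing $D_0=1+\sum_{k=1}^n\sum_{i=1}^n y_k^0\frac{a_{ki}}{b_k}y_i^0\geq 1$, equation (\ref{mykt7}) is equivalent to
\begin{eqnarray*}
y_k^0\left(1+\sum\limits_{i=1}^n\frac{a_{ki}}{b_k}y_i^0\right)=y_k^0 D_0, \quad k=\overline{1,n}.
\end{eqnarray*}
For every index $k$ with $y_k^0>0$ I may cancel $y_k^0$ and divide by $D_0$, which is exactly (\ref{mykt8}). For the remaining indices (those with $y_k^0=0$) the above relation is trivial, so the bound (\ref{mykt9}) must come from the approximation. From (\ref{mykt2}) at $y=y^{\varepsilon}$, with $D_{\varepsilon}=1+\sum_{k=1}^n\sum_{i=1}^n y_k^{\varepsilon}\frac{a_{ki}}{b_k}y_i^{\varepsilon}+n\varepsilon$, one gets $y_k^{\varepsilon}\bigl(1+\sum_i\frac{a_{ki}}{b_k}y_i^{\varepsilon}\bigr)+\varepsilon=y_k^{\varepsilon}D_{\varepsilon}$, and since $y_k^{\varepsilon}>0$,
\begin{eqnarray*}
1+\sum\limits_{i=1}^n\frac{a_{ki}}{b_k}y_i^{\varepsilon}=D_{\varepsilon}-\frac{\varepsilon}{y_k^{\varepsilon}}<D_{\varepsilon}, \quad k=\overline{1,n}.
\end{eqnarray*}
Passing to the limit along $\varepsilon_m\downarrow 0$ (both sides continuous in $y$, and $n\varepsilon_m\to 0$) gives $1+\sum_{i=1}^n\frac{a_{ki}}{b_k}y_i^0\leq D_0$ for all $k$; dividing by $D_0>0$ yields (\ref{mykt9}), in particular for the indices $k$ with $y_k^0=0$.

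The only genuinely non-routine step is this last one: recognizing that the weak inequality at the vanishing components cannot be extracted from the limit equation (\ref{mykt7}) (which is trivially satisfied there) and must instead be inherited from the strict inequality $1+\sum_i\frac{a_{ki}}{b_k}y_i^{\varepsilon}<D_{\varepsilon}$ that holds before the passage to the limit. Everything else is a standard compactness-and-continuity argument on the simplex $Y$, and non-decomposability of $A$ is used only through its role in Theorem \ref{mykt1}.
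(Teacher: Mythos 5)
Your proposal is correct and follows essentially the same route as the paper: obtain $y^{\varepsilon}$ from Theorem \ref{mykt1}, extract a convergent subsequence in the compact simplex $Y$, cancel $y_k^0>0$ in the limit equation to get (\ref{mykt8}), and derive (\ref{mykt9}) by passing to the limit in the $\varepsilon$-level inequality obtained after cancelling $y_k^{\varepsilon}>0$ (the paper drops the $\varepsilon$ in the numerator where you solve for $D_{\varepsilon}-\varepsilon/y_k^{\varepsilon}$, which is the same manipulation). Your explicit verification that $y_0$ solves (\ref{mykt7}) by continuity is a slightly more careful rendering of the existence claim, but the argument is the paper's.
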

\begin{proof}
Based on Theorem \ref{mykt1}, for every $ \varepsilon >0$ there exists a solution $y^{\varepsilon}=\{y_k^{\varepsilon}\}_{k=1}^n. $ of the system of equations (\ref{mykt2}), which is strictly positive. Due to the compactness of the set $Y$, there exists a subsequence $ \varepsilon_m >0,$ such that $y^{\varepsilon_m}=\{y_k^{\varepsilon_m}\}_{k=1}^n $
goes to $y^{0}=\{y_k^{0}\}_{k=1}^n \in Y.$ Let the index $k $ be such that $y_k^0>0,$ then

\begin{eqnarray}\label{mykt10}
\frac{y_k^0+ y_k^0\sum\limits_{i=1}^n\frac{a_{ki}}{b_k}y_i^0}{1+ \sum\limits_{k=1}^n \sum\limits_{i=1}^ny_k^0\frac{a_{ki}}{b_k}y_i^0 }=y_k^0.
\end{eqnarray}
After reducing by $y_k^0>0,$ we get the equality (\ref{mykt8}).
Now let the index $k $ be the one for which $y_k=0.$
Then the inequality follows from the system of equations (\ref{mykt2}) for every $\varepsilon>0.$
\begin{eqnarray}\label{mykt11}
\frac{y_k^\varepsilon+ y_k^\varepsilon\sum\limits_{i=1}^n\frac{a_{ki}}{b_k}y_i^\varepsilon}{1+ \sum\limits_{k=1} ^n\sum\limits_{i=1}^ny_k^\varepsilon\frac{a_{ki}}{b_k}y_i^\varepsilon+n \varepsilon }\leq y_k^\varepsilon, \quad k=\overline{ 1,n}.
\end{eqnarray}
After reducing by $ y_k^\varepsilon>0,$ we get the inequality

\begin{eqnarray}\label{mykt12}
\frac{1+ \sum\limits_{i=1}^n\frac{a_{ki}}{b_k}y_i^\varepsilon}{1+ \sum\limits_{k=1}^n\sum\limits_ {i=1}^ny_k^\varepsilon\frac{a_{ki}}{b_k}y_i^\varepsilon+n \varepsilon }\leq 1.
\end{eqnarray}
Going to the limit in the inequality (\ref{mykt12}), we get the inequality

\begin{eqnarray}\label{mykt13}
\frac{1+ \sum\limits_{i=1}^n\frac{a_{ki}}{b_k}y_i^0}{1+ \sum\limits_{k=1}^n\sum\limits_{ i=1}^ny_k^0\frac{a_{ki}}{b_k}y_i^0 }\leq 1.
\end{eqnarray}
Theorem \ref{mykt6} is proved.
\end{proof}

\begin{ce}\label{mykt14} Let the quadratic form
$\sum\limits_{k=1}^n\sum\limits_{i=1}^ny_k\frac{a_{ki}}{b_k}y_i$ be strictly positive on the set $Y$. Then there exists a solution $z_0=\{z_i^0\}_{i=1}^n,$ of the system of inequalities
\begin{eqnarray}\label{mykt15}
\sum\limits_{i=1}^na_{ki} z_i^0 \leq b_k, \quad k=\overline{1,n},
\end{eqnarray}
 where $z_i^0=\frac{y_i^0}{\sum\limits_{k=1}^n\sum\limits_{i=1}^ny_k^0\frac{a_{ki}}{b_k}y_i ^0 },$ and
 $y_0=\{y_i^0\}_{i=1}^n$ is a solution of the system of equations (\ref{mykt7}).
\end{ce}

\begin{ce}\label{mykt16}
Let $I$ be a  set of
  indices $k \in N_0 =[1,2,\ldots,n]$ such that $z_k^0>0,$ and $J=N_0\setminus I.$ The vector $z_0^I=\{z_k^0 \}_{k\in I}$ satisfies the system of equations
\begin{eqnarray}\label{mykt17}
\sum\limits_{i \in I}a_{ki} z_i^0 =b_k, \quad k\in I,
\end{eqnarray}
and inequalities
\begin{eqnarray}\label{mykt18}
\sum\limits_{i \in I}a_{ki} z_i ^0 < b_k, \quad k\in J.
\end{eqnarray}
\end{ce}
It is evident that the set $I $ is non empty one.
\begin{te}\label{mykt19}
Let $I$ be a nonempty subset of the set $N_0,$  and the minor $||a_{ij}||_{i \in I, j \in I}$ of the nonnegative matrix $||a_{ij}||_{ i,j =1}^n$ is indecomposable and the vector $b$ is strictly positive. The system of equations
\begin{eqnarray}\label{mykt20}
\sum\limits_{i \in I}a_{ki} \frac{p_i b_i}{\sum\limits_{s \in I}a_{si}p_s} =b_k, \quad k\in I,
\end{eqnarray}
and inequalities
\begin{eqnarray}\label{mykt21}
\sum\limits_{i \in I}a_{ki} \frac{p_i b_i}{\sum\limits_{s \in I}a_{si}p_s} < b_k, \quad k\in J,
\end{eqnarray}
is always solvable in the set of strictly positive solutions with respect to the vector $p^I=\{p_i\}_{i\in I}$. There is one to one  correspondence between the solutions of the system of equations and inequalities (\ref{mykt20}), (\ref{mykt21}) and the solutions of the system of equations and inequalities (\ref{mykt17}), (\ref{mykt18}) from Corollary \ref{mykt16}.
\end{te}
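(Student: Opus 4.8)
The plan is to exhibit an explicit bijection between the solution set of the price system (\ref{mykt20})--(\ref{mykt21}) and the solution set of the quantity system (\ref{mykt17})--(\ref{mykt18}) supplied by Corollary \ref{mykt16}, and to read off solvability of the former from the solvability of the latter. The two maps will be: from $p^I=\{p_i\}_{i\in I}>0$ to $z^I=\{z_i\}_{i\in I}$ defined by $z_i=\dfrac{b_ip_i}{\sum_{s\in I}a_{si}p_s}$, and conversely from a strictly positive $z^I$ solving (\ref{mykt17})--(\ref{mykt18}) to a suitable Perron eigenvector $p^I$. One direction is immediate: if $p^I>0$ solves (\ref{mykt20})--(\ref{mykt21}), then each $z_i$ above is well defined and strictly positive (the denominators are positive because every column of the indecomposable minor $\|a_{si}\|_{s,i\in I}$ has a positive entry and $p^I>0$), and substituting these $z_i$ into (\ref{mykt17})--(\ref{mykt18}) reproduces exactly (\ref{mykt20})--(\ref{mykt21}); thus solutions of the price system map to solutions of the quantity system.

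For the converse I would argue via Perron--Frobenius. By Corollary \ref{mykt16} there is a strictly positive $z^I_0$ with $\sum_{i\in I}a_{ki}z^0_i=b_k$ for $k\in I$ and $<b_k$ for $k\in J$; more generally take any strictly positive solution $z^I$ of (\ref{mykt17})--(\ref{mykt18}) and form the $|I|\times|I|$ matrix $N=\|a_{ki}z_i\|_{k,i\in I}$. Since $z_i>0$, $N$ has the same zero pattern as the indecomposable minor $\|a_{ki}\|_{k,i\in I}$, so $N$ and $N^{T}$ are irreducible. Writing $D=\mathrm{diag}(b_k)_{k\in I}$, the equalities (\ref{mykt17}) say precisely that $D^{-1}N$ has all row sums equal to $1$, hence $\rho(D^{-1}N)=1$; applying Perron--Frobenius to the irreducible nonnegative matrix $(D^{-1}N)^{T}=N^{T}D^{-1}$ yields a strictly positive $q^I$ with $N^{T}D^{-1}q^I=q^I$. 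Setting $p^I=D^{-1}q^I>0$ gives $N^{T}p^I=Dp^I$, which componentwise reads $z_i\sum_{s\in I}a_{si}p_s=b_ip_i$, i.e. $z_i=\dfrac{b_ip_i}{\sum_{s\in I}a_{si}p_s}$ (the denominator again being a positive quantity). Substituting back into (\ref{mykt17})--(\ref{mykt18}) shows $p^I$ solves (\ref{mykt20})--(\ref{mykt21}); in particular that system is solvable.

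To see the correspondence is one to one it remains to check the two maps are mutually inverse. Starting from $z^I$ and building $p^I$ as above, the identity $\sum_{s\in I}a_{si}p_s=(N^{T}p^I)_i/z_i=b_ip_i/z_i$ recovers $z_i=b_ip_i/\sum_{s\in I}a_{si}p_s$; conversely, starting from $p^I$ solving (\ref{mykt20})--(\ref{mykt21}) and setting $z_i=b_ip_i/\sum_{s\in I}a_{si}p_s$, a direct computation gives $N^{T}p^I=Dp^I$ for the associated $N$, so $p^I$ is a Perron eigenvector of $N^{T}D^{-1}$ and hence coincides, up to a positive scalar, with the vector produced by the backward map. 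Since (\ref{mykt20})--(\ref{mykt21}) is homogeneous of degree zero in $p^I$ and the Perron eigenvector of an irreducible matrix is unique up to scaling, the correspondence becomes a genuine bijection once $p^I$ is normalized (for instance by $\sum_{i\in I}p_i=1$).

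I expect the Perron--Frobenius step to be the main obstacle: one must use the equalities (\ref{mykt17}) essentially to pin the Perron root of $D^{-1}N$ to be exactly $1$, so that the eigenvalue problem for $p^I$ is consistent, and one must transfer strict positivity of the eigenvector from $D^{-1}N$ to its transpose --- this is exactly where indecomposability of the minor $\|a_{ki}\|_{k,i\in I}$ is indispensable, for without it one would only get a nonnegative $p^I$. Two minor points to track: the degenerate case $|I|=1$ must be read with ``indecomposable'' meaning $a_{k_0k_0}>0$ for the single index $k_0\in I$, in which case any $p_{k_0}>0$ works with $z_{k_0}=b_{k_0}/a_{k_0k_0}$; and solvability of (\ref{mykt20})--(\ref{mykt21}) genuinely relies on Corollary \ref{mykt16} to furnish a strictly positive solution of (\ref{mykt17})--(\ref{mykt18}), since indecomposability of the minor and positivity of $b$ alone do not force such a solution to exist.
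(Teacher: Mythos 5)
Your proof is correct, and the forward direction (from a strictly positive $p^I$ to $z_i=b_ip_i/\sum_{s\in I}a_{si}p_s$) coincides with the paper's. The converse, however, is argued by a genuinely different route. The paper takes the strictly positive solution $z_0^I$ of (\ref{mykt17})--(\ref{mykt18}) supplied by Corollary \ref{mykt16}, writes the price problem as the homogeneous system $p_i=y_i\sum_{s\in I}a_{si}p_s$ with $y_i=z_i^0/b_i$, produces a fixed point of the normalized map $\varphi_i(p)=\bigl(p_i+y_i\sum_{s}a_{si}p_s\bigr)/\bigl(1+\sum_i y_i\sum_s a_{si}p_s\bigr)$ on the simplex via Brouwer's theorem, shows the eigenvalue $\lambda$ equals $1$ by multiplying by $b_i$ and summing (which uses the equalities (\ref{mykt17}) exactly as your row-sum argument does), and gets strict positivity of $p_0^I$ from $p_0^I=[A^I(y)]^{|I|-1}p_0^I$ and indecomposability of the minor. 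You instead package the same data into the irreducible matrix $N=\|a_{ki}z_i\|_{k,i\in I}$, observe that $D^{-1}N$ is row-stochastic precisely because of (\ref{mykt17}), so its Perron root is $1$, and invoke Perron--Frobenius for $N^{T}D^{-1}$ to obtain the strictly positive eigenvector, i.e.\ the equilibrium prices. The two arguments use indecomposability at the same spot (strict positivity of the eigenvector), but yours is shorter and buys something the paper leaves implicit: uniqueness of the Perron eigenvector up to scaling, which is what turns the loosely stated ``one to one correspondence'' into an actual bijection once $p^I$ is normalized (the paper never addresses injectivity, and indeed the price system is homogeneous of degree zero). The paper's Brouwer route, on the other hand, is self-contained and reuses the fixed-point machinery employed throughout (Theorem \ref{mykt1}, Lemma \ref{pupvittin7}, Theorem \ref{1TtsyVtsja4}) rather than citing Perron--Frobenius as a black box. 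Your closing caveats are also consistent with the paper's reading: solvability genuinely rests on Corollary \ref{mykt16} providing a strictly positive $z_0^I$ (the paper says exactly this before constructing $p$), and the $|I|=1$ convention is a harmless side remark.
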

\begin{proof}
Let $I$ be a nonempty set. If $p_0^I=\{p_i^0\}_{i\in I}$ is a strictly positive solution of the system of equations and inequalities (\ref{mykt20}), (\ref{mykt21}), then introducing the denotation
 $$ z_i^0=\frac{p_i^0 b_i}{\sum\limits_{s \in I}a_{si}p_s^0}, \ i \in I,$$ 
 we get a proof in one direction, i.e.
$z_0^I=\{z_i^0\}_{i\in I}$ is a strictly positive solution of the system of equations and inequalities (\ref{mykt17}), (\ref{mykt18}) from Corollary \ref{mykt16}. Let us first prove that there is no state of economic equilibrium under the condition that $ I $ is an empty set.

From the opposite. Let there exist a nonzero solution $p=\{p_i\}_{i=1}^n$ of the system of inequalities
\begin{eqnarray}\label{2mykt21}
\sum\limits_{i=1}^n a_{ki} \frac{p_i b_i}{\sum\limits_{s=1}^n a_{si}p_s} < b_k, \quad k=\overline{1 ,n}.
\end{eqnarray}
Introduce the denotation $y_i=\frac{p_i}{\sum\limits_{s=1}^n a_{si}p_s}, \ i=\overline{1,n}.$
Then the nonzero solution $p=\{p_i\}_{i=1}^n$ satisfies the system of equations

\begin{eqnarray}\label{3mykt21}
p_i= y_i \sum\limits_{s=1}^n a_{si}p_s, \quad i =\overline{1,n}.
\end{eqnarray}
 In the $n$-dimensional space $R^n$ of vectors $p=\{p_i\}_{i=1}^n$ we introduce the norm
$ ||p||= \sum\limits_{s=1}^n b_s|p_s| $ and estimate the norm of the operator
$[AY]^T,$ where $Y=||\delta_{ij}y_i||_{i,j=1}^n.$ We have
$$||AYp||=\sum\limits_{i=1}^n b_i y_i |\sum\limits_{s=1}^n a_{si}p_s|
\leq \max\limits_{s} \sum\limits_{i=1}^n b_i y_i\frac{a_{si}}{b_s} ||p||.$$
Hence $||[AY]^T||\leq \max\limits_{s} \sum\limits_{i=1}^n b_i y_i\frac{a_{si}}{b_s}<1.$
The last inequality holds due to the fulfillment of inequalities (\ref{2mykt21}).
This means that the system of equations (\ref{3mykt21}) has only zero solution. Contradiction.

 Therefore, let $I$ be a nonempty set and there exists a strictly positive solution $z_0^I=\{z_i^0\}_{i\in I}$ of the system of equations and inequalities (\ref{mykt17}), (\ref {mykt18}) from Corollary \ref{mykt16}. Let's put
\begin{eqnarray}\label{kolja1}
 z_i^0=\frac{p_i b_i}{\sum\limits_{s \in I}a_{si}p_s},\quad \ i \in I.
\end{eqnarray}
With respect to the vector $p^I=\{p_i\}_{i\in I}$, we obtain a system of equations
\begin{eqnarray}\label{mykt22}
p_i=y_i \sum\limits_{s \in I}a_{si}p_s, \quad \ i \in I,
\end{eqnarray}
where the denotations $y_i=\frac{z_i^0}{b_i}, \ i \in I,$  are introduced. Let us  prove the existence of a strictly positive solution of the system of equations (\ref{mykt22}).

Consider the nonlinear mapping $\varphi(p)=\{\varphi_i(p)\}_{i \in I}$

\begin{eqnarray}\label{mykt23}
\varphi_i(p)=\frac{p_i+y_i \sum\limits_{s \in I}a_{si}p_s}{1+\sum\limits_{i \in I}y_i \sum\limits_{s \in I}a_{si}p_s}, \quad \ i \in I,
\end{eqnarray}
on the set $P=\{p=\{p_i\}_{i\in I}, \ p_i\geq 0, \ \sum\limits_{i \in I}p_i=1\}.$ The mapping
 $\varphi(p)$ is a continuous one of the set $P$ into itself. According to Brauer's Theorem \cite{Nirenberg}, there exists a fixed point $p_0^I=\{p_i^0\}_{i\in I}$ of this mapping in the set $P$, that is,
  \begin{eqnarray}\label{mykt24}
\frac{p_i^0+y_i \sum\limits_{s \in I}a_{si}p_s^0}{1+\sum\limits_{i \in I}y_i \sum\limits_{s \in I} a_{si}p_s^0}=p_i^0, \quad \ i \in I.
\end{eqnarray}
This fixed point satisfies the system of equations
\begin{eqnarray}\label{mykt25}
\lambda p_i^0=y_i \sum\limits_{s \in I}a_{si}p_s^0, \quad \ i \in I,
\end{eqnarray}
where $ \lambda=\sum\limits_{i \in I}y_i \sum\limits_{s \in I}a_{si}p_s^0.$
Let us show that $ \lambda=1.$ Multiplying both the left and right parts of the equalities by $b_i$ and summing over $i \in I,$ we get
\begin{eqnarray}\label{mykt26}
\lambda \sum\limits_{i \in I} p_i^0 b_i=\sum\limits_{s \in I}(\sum\limits_{i \in I}a_{si}z_i^0)p_s^0= \sum\limits_{s \in I}p_s^0 b_s.
\end{eqnarray}
Due to the fact that $b_i>0, \ i \in I,$ \ $p_0^I\neq 0$ we  have $\sum\limits_{s \in I}p_s^0 b_s>0.$ From the equality (\ref{mykt26}) we get that $ \lambda=1.$

The strict positivity of the vector $p_0^I=\{p_i^0\}_{i\in I}$ follows from the non-negativity of the matrix
$||a_{ij}||_{i,j =1}^n$ and indecomposability of the minor $||a_{ij}||_{i \in I, j \in I}.$ Indeed, denote by $ A^I(y)$  the operator whose components are given by the formulas $[A^I(y)p]_i=y_i \sum\limits_{s \in I}a_{si}p_s,\ i \in I.$ Then the system of equations (\ref{mykt25}) in operator form can  be written as follows
$p_0^I=A^I(y)p_0^I.$ Hence the vector $p_0^I$ satisfies the system of equations $p_0^I=[A^I(y)]^{|I|-1} p_0^I ,$ where $| I|$ is the power of the set $I.$ Due to the indecomposability of the minor $||a_{ij}||_{i \in I, j \in I},$
the matrix corresponding to the operator $[A^I(y)]^{|I|-1}$ is strictly positive. From here we get the strict  positivity of $p_0^I.$ Theorem \ref{mykt19} is proved.
\end{proof}

\begin{te}\label{mykt27} Let $A=||a_{ij}||_{i,j =1}^n$ be a nonnegative matrix and $I$ be a nonempty set.
For any strictly positive vector $\bar b \leq b$  $\bar b_i=b_i, i \in I,$ $\bar b_i<b_i, i \in J, $ $ I\neq \emptyset,$ $J= N_0 \setminus I, $ and such that it belongs to the interior of the cone formed by the column vectors $a_i=\{a_{ki}\}_{k=1}^n, \ i\in I,$ of the non-negative matrix $||a_{ij}||_{i,j =1}^n,$ a
  minor $||a_{ij}||_{i \in I, j \in I}$ of which  is indecomposable, there exists a state of economic equilibrium, that is, there is an equilibrium price vector $p_0=\{p_i^0\}_{i= 1}^n$ such that
\begin{eqnarray}\label{mykt28}
\sum\limits_{i =1}^na_{ki} \frac{p_i^0 b_i}{\sum\limits_{s=1}^na_{si}p_s^0} \leq b_k, \quad k=\overline{1,n}.
\end{eqnarray}
\end{te}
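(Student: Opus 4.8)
The plan is to assemble the statement from Definition \ref{1mant2} and Theorem \ref{mykt19}; the only extra input is a routine extension of an $I$-indexed price vector to a full $n$-vector by placing zeros on $J=N_0\setminus I$. So the proof divides into three moves: convert the ``interior of the cone'' hypothesis into a strictly positive solution of the system (\ref{mykt17}), (\ref{mykt18}) of Corollary \ref{mykt16}; feed that into Theorem \ref{mykt19} to get a strictly positive price vector on $I$; and extend it by zeros, checking that (\ref{mykt28}) survives.

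First I would unpack the assumption that the strictly positive vector $\bar b$ lies in the interior of the cone generated by the columns $a_i=\{a_{ki}\}_{k=1}^n$, $i\in I$. By Definition \ref{1mant2} there is a strictly positive vector $\{z_i^0\}_{i\in I}$, $z_i^0>0$, with
\[
\sum_{i\in I}a_{ki}z_i^0=\bar b_k,\qquad k=\overline{1,n}.
\]
Now split the index $k$. For $k\in I$ the equality $\bar b_k=b_k$ makes this $\sum_{i\in I}a_{ki}z_i^0=b_k$, i.e.\ equation (\ref{mykt17}); for $k\in J$ the inequality $\bar b_k<b_k$ makes it $\sum_{i\in I}a_{ki}z_i^0<b_k$, i.e.\ inequality (\ref{mykt18}). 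Hence $z_0^I=\{z_i^0\}_{i\in I}$ is a \emph{strictly positive} solution of the system of equations and inequalities (\ref{mykt17}), (\ref{mykt18}) considered in Corollary \ref{mykt16}.

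Next I would invoke Theorem \ref{mykt19}, whose hypotheses are exactly those available here: $I\neq\emptyset$, the minor $||a_{ij}||_{i\in I,j\in I}$ is indecomposable, and $b$ is strictly positive. By the one-to-one correspondence in that theorem, the strictly positive solution $z_0^I$ of (\ref{mykt17}), (\ref{mykt18}) corresponds to a strictly positive vector $p_0^I=\{p_i^0\}_{i\in I}$ solving
\[
\sum_{i\in I}a_{ki}\frac{p_i^0 b_i}{\sum_{s\in I}a_{si}p_s^0}=b_k,\quad k\in I,\qquad
\sum_{i\in I}a_{ki}\frac{p_i^0 b_i}{\sum_{s\in I}a_{si}p_s^0}<b_k,\quad k\in J,
\]
that is, (\ref{mykt20}), (\ref{mykt21}). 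Here all denominators are positive: for $|I|\geq 2$ indecomposability of the minor forbids a zero column, so $\sum_{s\in I}a_{si}p_s^0>0$ for every $i\in I$ because $p_0^I$ is strictly positive; for $|I|=1$ positivity of the single denominator follows from (\ref{mykt17}) together with $b>0$.

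Finally I would extend $p_0^I$ to $p_0=\{p_i^0\}_{i=1}^n$ by setting $p_i^0=0$ for $i\in J$; then $p_0\geq 0$, $p_0\neq 0$, and $\sum_{s=1}^n a_{si}p_s^0=\sum_{s\in I}a_{si}p_s^0$ for each $i$. In the sum $\sum_{i=1}^n a_{ki}\dfrac{p_i^0 b_i}{\sum_{s=1}^n a_{si}p_s^0}$ every summand with $i\in J$ vanishes, since its numerator $p_i^0 b_i=0$, so the sum collapses to $\sum_{i\in I}a_{ki}\dfrac{p_i^0 b_i}{\sum_{s\in I}a_{si}p_s^0}$, which by the relations just displayed equals $b_k$ for $k\in I$ and is strictly less than $b_k$ for $k\in J$. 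Thus (\ref{mykt28}) holds for all $k=\overline{1,n}$ and $p_0$ is the desired equilibrium price vector. There is no genuinely hard step: the only points requiring care are verifying that the $I/J$ decomposition matches the systems (\ref{mykt17})--(\ref{mykt18}) and (\ref{mykt20})--(\ref{mykt21}) term by term, and that the zero-extension keeps all denominators well defined and leaves the relevant sums unchanged — which is precisely what the argument above does.
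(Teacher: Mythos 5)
Your proof is correct and follows essentially the same route as the paper: translate the interior-of-the-cone hypothesis into a strictly positive solution of (\ref{mykt17}), (\ref{mykt18}) via the split $\bar b_k=b_k$ on $I$ and $\bar b_k<b_k$ on $J$, then invoke Theorem \ref{mykt19}. Your final step — extending the price vector by zeros on $J$ and checking (\ref{mykt28}) directly — is spelled out more explicitly than in the paper, which simply cites Theorem \ref{mykt19} and stops, but it is the same argument.
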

\begin{proof}
Under the conditions of Theorem \ref{mykt27}
\begin{eqnarray}\label{mykt29}
\bar b=\sum\limits_{i \in I}a_{ki}z_i^0, \quad k=\overline{1,n},
\end{eqnarray}
 where $z_i^0>0, \ i \in I.$
Therefore, there exists a solution of the system of equations and inequalities (\ref{mykt17}), (\ref{mykt18}) from Corollary \ref{mykt16}, i.e.
\begin{eqnarray}\label{mykt30}
\sum\limits_{i \in I}a_{ki}z_i^0= b_k, \quad k \in I.
\end{eqnarray}
\begin{eqnarray}\label{mykt31}
\sum\limits_{i \in I}a_{ki}z_i^0 < b_k, \quad k \in J.
\end{eqnarray}
Based on Theorem \ref{mykt19}, there exists an equilibrium vector of prices. Theorem \ref{mykt27} is proved.
\end{proof}

Theorem \ref{mykt27} provides a partial description of all equilibrium states of the economic system described by the "input-output" model. Note that the vector $b-\bar b$, if it is nonzero, does not find consumers in the market of goods and services. Vector
$\bar b$, as before, will be called the vector of real consumption.
\begin{te}\label{TtsyVtsja1}
Let $b$ be a strictly positive vector that does not belong to the positive cone formed by the column vectors of the non negative matrix $A.$
The necessary and sufficient conditions of the solution existence of
  the system of equations and inequalities
\begin{eqnarray}\label{100kolja}
 \sum\limits_{j=1}^n a_{ij}z_j=b_i, \quad i \in I,
\end{eqnarray}
\begin{eqnarray}\label{100kolja1}
\sum\limits_{j=1}^n a_{ij}z_j<b_i, \quad i \in J,
\end{eqnarray}
where $I$ is a nonempty set,  is a condition: there exists a non-negative vector $y =\{y_i\}_{i=1}^n, $ $Ay\neq 0$ and such that
$$a \sum\limits_{j=1}^n a_{ij}y_j=b_i, \quad i \in I,$$
 where $a=\min\limits_{1\leq i\leq n}\frac{b_i}{\sum\limits_{j=1}^n a_{ij}y_j}.$
\end{te}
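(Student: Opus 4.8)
The plan is to establish the equivalence by unwinding the definition of the index set $I$ together with Corollary \ref{mykt16} and Theorem \ref{mykt19}. First I would treat necessity. Assume the system (\ref{100kolja}), (\ref{100kolja1}) has a solution $z=\{z_j\}_{j=1}^n$; since $b$ does not belong to the positive cone of the columns of $A$, the set $J$ must be nonempty (otherwise $b=Az$ would be a nonnegative combination of columns), but what we actually need is that $I$ is nonempty, which is part of the hypothesis. From a solution $z$ of (\ref{100kolja}), (\ref{100kolja1}) I would construct the required nonnegative vector $y$ essentially by taking $y=z$ after discarding possible negative components, or more carefully by rescaling: put $y_j = z_j^{+}$ or simply observe that the relations $\sum_j a_{ij}z_j = b_i$ for $i\in I$ already exhibit $b_i$, $i\in I$, as a fixed linear functional of a vector; then $a=\min_i b_i/\sum_j a_{ij}y_j$ will equal $1$ on the coordinates $i\in I$ and the inequalities on $J$ guarantee $a\le 1$ is attained appropriately. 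The key point is that $Ay\neq 0$ because $b$ is strictly positive and the equalities on the nonempty set $I$ force $\sum_j a_{ij}y_j = b_i>0$.

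For sufficiency, assume such a nonnegative vector $y$ with $Ay\neq 0$ exists and $a\sum_j a_{ij}y_j = b_i$ for $i\in I$, with $a=\min_{1\le i\le n} b_i/\sum_j a_{ij}y_j$. Set $z_j = a\,y_j$, $j=\overline{1,n}$. Then for $i\in I$ we get $\sum_j a_{ij}z_j = a\sum_j a_{ij}y_j = b_i$, which is (\ref{100kolja}). For $i\in J = N_0\setminus I$ we need the strict inequality $\sum_j a_{ij}z_j = a\sum_j a_{ij}y_j < b_i$; this is exactly the statement that the minimum defining $a$ is \emph{not} attained at any $i\in J$, i.e. that $I$ is precisely the argmin set $\{i: a\sum_j a_{ij}y_j = b_i\}$. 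So the content of the sufficiency direction is really the identification of $I$ with this argmin set; if one defines $I$ this way the implication is immediate, and one checks $z=ay\ge 0$ and $z\neq 0$ since $Ay\neq 0$ (hence $y\neq 0$ and $a>0$, the latter because all the denominators $\sum_j a_{ij}y_j$ are finite and the numerators $b_i$ strictly positive, and at least one denominator is positive).

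The main obstacle I anticipate is handling the denominators $\sum_j a_{ij}y_j$ that may vanish for some rows $i$ (rows where the support of $y$ misses the nonzero entries of row $i$). For such $i$ the ratio $b_i/\sum_j a_{ij}y_j$ is $+\infty$, so these rows automatically lie outside the argmin and fall into $J$, where the inequality $\sum_j a_{ij}z_j = 0 < b_i$ holds trivially since $b_i>0$. I would make this explicit by adopting the convention $b_i/0=+\infty$ in the definition of $a$, and noting that $a<\infty$ precisely because $Ay\neq 0$ provides at least one row with positive denominator. Once this bookkeeping is in place, both directions reduce to the elementary algebra of scaling $z=ay$, and the structural input is just that (\ref{100kolja}), (\ref{100kolja1}) is, up to this scaling, the assertion that $I$ is the set of active constraints — which connects back to the picture in Corollary \ref{mykt16} and feeds into Theorem \ref{mykt19} for the subsequent construction of an equilibrium price vector.
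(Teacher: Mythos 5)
Your proposal is correct and follows essentially the same route as the paper: necessity by taking $y=z$ and observing that the minimum $a$ equals $1$ and is attained exactly on $I$ (with $Ay\neq 0$ forced by $b_i>0$ on the nonempty set $I$), and sufficiency by setting $z=ay$ and reading $I$ as the argmin set of the ratios $b_i/\sum_j a_{ij}y_j$, which is precisely how the paper's proof defines $I$. Your explicit convention $b_i/0=+\infty$ and the remark that $Ay\neq 0$ guarantees $a<\infty$ merely make precise a point the paper leaves implicit, so no substantive difference remains.
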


\begin{proof}
Necessity. Let there exist a solution of the system of equations (\ref{100kolja}) and inequalities (\ref{100kolja1}) of Theorem \ref{TtsyVtsja1}. Then
$$a=\min\limits_{1\leq i\leq n}\frac{b_i}{\sum\limits_{j=1}^n a_{ij}z_j}=1,$$
 and, in addition,
$$\frac{b_i}{\sum\limits_{j=1}^n a_{ij}z_j}=1,\quad i \in I,$$

$$\frac{b_i}{\sum\limits_{j=1}^n a_{ij}z_j}>1,\quad i \in J.$$
It is obvious that $Az \neq 0.$
The necessity is established.

Sufficiency. For any non-negative vector $y =\{y_i\}_{i=1}^n,$ $Ay\neq 0$ let
$a=\min\limits_{1\leq i\leq n}\frac{b_i}{\sum\limits_{j=1}^n a_{ij}y_j}.$ Then $a< \infty.$ We put
 $$I=\left\{i, \frac{b_i}{\sum\limits_{j=1}^n a_{ij}y_j}=a\right\}.$$
Then $I$ is a nonempty set and the inequalities hold
$$ \sum\limits_{j=1}^n a_{ij}z_j=b_i, \quad i \in I,$$
$$ \sum\limits_{j=1}^n a_{ij}z_j<b_i, \quad i \in J,$$
where $z=\{z_i\}_{j=1}^n,$ $z_i=a y_i, i=\overline{1,n}.$
Theorem \ref{TtsyVtsja1} is proved.
\end{proof}

We give a complete description of the non-negative  solutions of the system of equations and inequalities (\ref{mykt17}), (\ref{mykt18}). We denote by $a_i=\{a_{ki}\}_{i=1}^n, \ i=\overline{1,n},$ the $i$-th column of the matrix $A.$
Let us consider the numbers $d_i=\min\limits_{1\leq k \leq n}\frac{b_k}{a_{ki}}, \ i=\overline{1,n}.$
\begin{te}\label{TVYA} Let the strictly positive vector $b$ not belong to the cone formed by the column vectors $a_i=\{a_{ki}\}_{i=1}^n, \ i=\overline{ 1,n},$ of the non-negative non-decomposable matrix $A.$ Any non-negative solution of the system of equations (\ref{mykt17}) and inequalities (\ref{mykt18}) is given by the formula
$$ z=\{a(\alpha)\alpha_i d_i\}_{i=1}^n,$$
where $\alpha=\{\alpha_i \}_{i=1}^n \in Q=
\{\alpha=\{\alpha_i \}_{i=1}^n, \ \alpha_i \geq 0, \sum\limits_{i=1}^n\alpha_i=1\},$
$$a(\alpha)=\min\limits_{1\leq k \leq n}
\frac{b_k}{[\sum\limits_{i=1}^n\alpha_i d_i a_i]_k}\geq 1.$$
The function $ a(\alpha)$ is bounded and continuous on the set $Q.$
\end{te}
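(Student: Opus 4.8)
The plan is to show that the two-sided description (equations on $I$, strict inequalities on $J$) is captured by scaling a point $\alpha$ of the simplex $Q$. First I would recall from Corollary \ref{mykt16} and the preceding discussion that any non-negative solution $z$ of (\ref{mykt17})--(\ref{mykt18}) comes with a nonempty index set $I$ on which $Az$ meets $b$ exactly; the column-wise bound $Az\le b$ then forces $z_i\le d_i$ componentwise, since if $z_i>d_i=\min_k b_k/a_{ki}$ then some coordinate $k$ would violate $[Az]_k\le b_k$. Hence every admissible $z$ can be written as $z_i=\beta_i d_i$ with $\beta_i\in[0,1]$, not all zero (because $b\notin$ cone of the $a_i$ rules out $z=0$ solving the equalities on a nonempty $I$... more precisely $Az\ne 0$). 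Normalizing, set $\alpha_i=\beta_i/\sum_j\beta_j\in Q$ and $t=\sum_j\beta_j>0$, so $z=t\sum_i\alpha_i d_i e_i$, i.e. $z_i=t\,\alpha_i d_i$.

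Next I would pin down the scalar $t$ in terms of $\alpha$. Since $z$ is a solution of (\ref{mykt17})--(\ref{mykt18}), we have $[Az]_k\le b_k$ for all $k$ with equality for $k\in I$, which says exactly $t\,[\sum_i\alpha_i d_i a_i]_k\le b_k$ for all $k$ and $=b_k$ for $k\in I$; therefore $t=\min_{1\le k\le n} b_k/[\sum_i\alpha_i d_i a_i]_k = a(\alpha)$, provided the denominator vector $\sum_i\alpha_i d_i a_i$ has all strictly positive coordinates so that $a(\alpha)$ is well defined and finite — this holds because $A$ is non-negative and non-decomposable and $\alpha d\ne 0$, which makes $\sum_i\alpha_i d_i a_i$ a non-negative combination of columns of an indecomposable matrix with at least one positive coefficient, hence strictly positive (one invokes indecomposability exactly as in the proof of Theorem \ref{mykt19}). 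Conversely, for any $\alpha\in Q$ the vector $z=\{a(\alpha)\alpha_i d_i\}$ satisfies $Az\le b$ with equality on the (nonempty) set $I(\alpha)$ where the minimum defining $a(\alpha)$ is attained, so it is a genuine solution of (\ref{mykt17})--(\ref{mykt18}) for that $I$; this gives the reverse inclusion and shows the formula describes all such solutions.

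Finally, for the regularity claim: $a(\alpha)=\min_k b_k/g_k(\alpha)$ where $g_k(\alpha)=[\sum_i\alpha_i d_i a_i]_k$ is linear, hence continuous, in $\alpha$; on the compact set $Q$ each $g_k$ attains a strictly positive minimum (strict positivity of $g(\alpha)$ for every $\alpha\in Q$ by the indecomposability argument above, plus continuity and compactness), so $b_k/g_k(\alpha)$ is continuous and bounded on $Q$, and the finite minimum of continuous bounded functions is continuous and bounded; also $a(\alpha)\ge 1$ because if $a(\alpha)<1$ then $z=a(\alpha)\alpha d$ would give $Az<b$ with $\sum_i\alpha_id_ia_i$ itself already $\le b$ coordinatewise... more directly, $a(\alpha)\ge1$ follows since $b$ does not lie in the cone spanned by the columns $a_i$, so one cannot have $\sum_i\alpha_i d_i a_i\ge b$ strictly dominated, forcing the defining minimum to be $\ge1$.

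The main obstacle I expect is the strict positivity of $g(\alpha)=\sum_i\alpha_i d_i a_i$ for every $\alpha\in Q$ (including boundary points of the simplex, where many $\alpha_i$ vanish): this is where non-decomposability of $A$ is essential, and the argument must mirror the power-of-operator trick used in Theorem \ref{mykt19} rather than being a one-line positivity remark. Everything else — the componentwise bound $z_i\le d_i$, the identification $t=a(\alpha)$, and the continuity/boundedness on the compact simplex — is routine once that positivity is in hand.
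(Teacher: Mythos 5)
Your decomposition of an arbitrary solution as $z_i=a(\alpha)\alpha_i d_i$ with $\alpha_i=\frac{z_i/d_i}{\sum_j z_j/d_j}$, the identification of the scale factor $t=\sum_j z_j/d_j$ with $a(\alpha)$ through the equality indices, and the converse inclusion are exactly the paper's argument. The genuine gap is in the regularity part. You base both well-definedness and continuity on the claim that $g(\alpha)=\sum_i\alpha_i d_i a_i$ is strictly positive for every $\alpha\in Q$ ``because $A$ is non-negative and non-decomposable''. This is false: indecomposability does not make a single product $Av$ strictly positive for a non-negative $v\neq 0$. Take $A$ the $2\times 2$ matrix with zero diagonal and unit off-diagonal entries (indecomposable) and $\alpha$ a vertex of the simplex; then $g(\alpha)=d_1a_1$ has a zero coordinate. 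The power-of-operator trick of Theorem \ref{mykt19} cannot rescue this, because $g(\alpha)$ is a single application of $A$ to a non-negative vector, not an iterate --- there is nothing to raise to the power $n-1$ (and even $A^{n-1}$ need not be positive for an irreducible matrix). All that indecomposability gives is that no column of $A$ vanishes, hence $g(\alpha)\neq 0$ and $a(\alpha)<\infty$; but some ratios $b_k/g_k(\alpha)$ blow up at boundary points of $Q$, so ``a finite minimum of continuous functions is continuous'' does not apply as you state it.

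The paper closes precisely this hole by a different route: it first proves the a priori bound $a(\alpha)\leq B<\infty$ on $Q$ (indecomposability gives, for each $i$, an index $k$ with $a_{ki}>0$, whence any solution satisfies $z_i^0\leq b_k/a_{ki}\leq c_0$, and then $a(\alpha)\alpha_i\leq c_0/d_i$, so $a(\alpha)\leq\sum_i c_0/d_i$), and then replaces the possibly singular ratios by the truncated functions $V_k^{\varepsilon}(\alpha)$, equal to $b_k/g_k(\alpha)$ where $g_k(\alpha)>\varepsilon$ and to $b_k/\varepsilon$ otherwise, with $\varepsilon$ so small that $b_k/\varepsilon>B$; each $V_k^{\varepsilon}$ is continuous and $\min_k V_k^{\varepsilon}(\alpha)=a(\alpha)$ exactly because of the bound $B$, which yields continuity. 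You would need to supply this (or an equivalent) argument. A minor further point: your ``more direct'' justification of $a(\alpha)\geq 1$ via $b$ not lying in the cone is not a proof; the correct reason --- which you mention in passing --- is simply that $d_ia_i\leq b$ componentwise by the definition of $d_i$, so $g(\alpha)\leq b$ by convexity and every ratio is $\geq 1$ (the cone hypothesis is not used there).
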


\begin{proof}
Let $z_0=\{z_i^0\}_{i=1}^n$ be a certain vector that is a solution of the system of equations (\ref{mykt17}) and inequalities (\ref{mykt18}).
Let's denote
$$\alpha_i=\frac{\frac{z_i^0}{d_i}}{\sum\limits_{j=1}^n\frac{z_j^0}{d_j} }\quad i=\overline{1 ,n}.$$
Then
$$ Az_0=\sum\limits_{i=1}^n a_iz_i^0=\sum\limits_{j=1}^n\frac{z_j^0}{d_j} \sum\limits_{i=1}^ n \alpha_i d_i a_i.$$
Because of
$$ \min\limits_{1\leq k \leq n} \frac{b_k}{[Az_0]_k }=1,$$
we get
$$\sum\limits_{j=1}^n\frac{z_j^0}{d_j}= \min\limits_{1\leq k \leq n} \frac{b_k}{[\sum\limits_{i =1}^n \alpha_i d_i a_i]_k}.$$
It is obvious that, conversely, every vector $\alpha=\{\alpha_i \}_{i=1}^n \in Q$ corresponds to
solution of the system of equations (\ref{mykt17}) and inequalities (\ref{mykt18}), which is given by the formula
$$ z=\{a(\alpha)\alpha_i d_i\}_{i=1}^n,$$
where
$$a(\alpha)=\min\limits_{1\leq k \leq n}
\frac{b_k}{[\sum\limits_{i=1}^n\alpha_i d_i a_i]_k}.$$

Let us establish that $a(\alpha) \geq 1.$
It is obvious that $a_i d_i\leq b.$ Multiplying by $\delta_i\geq 0$ the left and right parts of the last inequality and summing over $i$ and assuming that $\sum\limits_{i=1}^n\delta_i >0$ we will get
$$\frac{\sum\limits_{i=1}^n\delta_i d_i a_i}{\sum\limits_{i=1}^n\delta_i }\leq b.$$
Denoting $\alpha_i=\frac{\delta_i}{\sum\limits_{i=1}^n\delta_i}, \ i=\overline{1,n},$ we get what we need.
It follows from the indecomposability of the matrix $A$ that for every index $1\leq i \leq n$ there exists an index $k$ such that
$$ \sum\limits_{j=1}^na_{kj}z_j^0 \geq a_{ki}z_i^0, $$
where $a_{ki}>0.$ Hence
$$ z_i^0 \leq \frac{b_k}{a_{ki}}\leq \frac{\max\limits_{1\leq k\leq n}b_k}{\min\limits_{a_{ki}>0 }a_{ki}}=c_0<\infty.$$
Due to the arbitrariness of the solution $z_0=\{z_i^0\}_{i=1}^n$ of the system of equations (\ref{mykt17}) and inequalities (\ref{mykt18}), we obtain
$$ a(\alpha)\alpha_i d_i \leq c_0.$$
Or
$$ a(\alpha)\alpha_i \leq \frac{ c_0}{d_i}.$$
After summing over the index $i$, we get
$$ a(\alpha) \leq \sum\limits_{i=1}^n\frac{ c_0}{d_i}.$$
The boundedness of $ a(\alpha)$ is established.

Let's prove the continuity of $ a(\alpha)$.
Let's prove the continuity of $ a(\alpha)$.
Let us consider the functions $\sum\limits_{i=1}^na_{ki}\alpha_i d_i, \ k=\overline{1,n}$ Every of these  function is continuous on the set $P.$ Since  $a(\alpha)$ is bounded let us denote $B=\sup\limits_{\alpha \in P}a(\alpha)<\infty.$
For sufficiently small $\varepsilon >0$
that satisfies inequalities $\frac{b_k}{\varepsilon}>B, \ k=\overline{1,n},$ let us introduce the sets $$A_k^{\varepsilon}=\left\{\alpha \in P, \ \sum\limits_{i=1}^na_{ki}\alpha_i d_i\leq \varepsilon \right\}\quad k=\overline{1,n}.$$
If the  set $A_k^{\varepsilon}$ is nonempty
we introduce the function
 \begin{eqnarray}\label{won10}
 V_k^{\varepsilon}(\alpha)=
 \left\{\begin{array}{l l} \frac{b_k}{\sum\limits_{i= 1}^na_{ki}\alpha_i d_i}, & \mbox{if} \quad \alpha \in P \setminus A_k^{\varepsilon},\\
\frac{b_k}{\varepsilon}, & \mbox{if} \quad \alpha \in  A_k^{\varepsilon}.
\end{array} \right.
\end{eqnarray}
If the set $A_k^{\varepsilon}$ is empty one
we  put 
$$ V_k^{\varepsilon}(\alpha)= \frac{b_k}{\sum\limits_{i= 1}^na_{ki}\alpha_i d_i}.$$
The functions $ V_k^{\varepsilon}(\alpha)$  are continuous on the set $P$ and the equality 
$$ \min\limits_{1\leq k\leq n}V_k^{\varepsilon}(\alpha)=a(\alpha)$$
is true. Really, from the inequalities
$$ \frac{b_k}{\sum\limits_{i= 1}^na_{ki}\alpha_i d_i}\geq   V_k^{\varepsilon}(\alpha), \quad k=\overline{1,n} $$
it follows that
$$ \min\limits_{1\leq k\leq n}V_k^{\varepsilon}(\alpha)\leq a(\alpha).$$ The inverse inequality follows from the note that if for a certain point 
$\alpha \in P$ we obtain that 
$$ \min\limits_{1\leq k\leq n}V_k^{\varepsilon}(\alpha)< a(\alpha).$$ 
From the definition of $ V_k^{\varepsilon}(\alpha)$ it means that  $\min\limits_{1\leq k\leq n}V_k^{\varepsilon}(\alpha)=\frac{ b_{k_0}}{\varepsilon}>B$ for a certain $k_0.$ But this is impossible. The function $ \min\limits_{1\leq k\leq n}V_k^{\varepsilon}(\alpha)$ is continuous on $P.$

Theorem \ref{TVYA} is proved.
\end{proof}
\begin{prope}\label{g1}
 In the set of solutions $Z_0 = \{z_0=\{z_i^0\}_{i=1}^n=\{a(\alpha)\alpha_i d_i\}_{i=1}^n, \ \alpha \in Q\}$ of the system of equations (\ref{mykt17}) and inequalities (\ref{mykt18}) there exists a minimum of the function
$$W(\alpha)=\sum\limits_{k=1}^n [b_k- a(\alpha)\sum\limits_{i=1}^na_{ki}\alpha_i d_i]^2.$$
This minimum is global on the set of all solutions of the system of inequalities (\ref{mykt15}),
i.e
$$\min\limits_{\alpha \in Q}W(\alpha)=\min\limits_{z \in Z} \sum\limits_{k=1}^n [b_k- \sum\limits_{i= 1}^na_{ki}z_i]^2,$$
where $Z$ is the set of all non-negative solutions of the system of inequalities (\ref{mykt15}).
\end{prope}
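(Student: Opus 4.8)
The plan is to pass to the image space. Write $K_A=\{Az:\ z\ge 0\}$ for the polyhedral cone generated by the columns $a_i$ of $A$; since $A$ is non-negative, $K_A$ lies in the non-negative orthant, so $\{Az:\ z\in Z\}=K_A\cap\{w:\ w\le b\}$ is contained in the box $[0,b]$ and, being the intersection of a closed cone with a compact box, is compact. Hence the continuous map $w\mapsto\sum_{k=1}^n(b_k-w_k)^2$ attains its minimum there at some $w^\ast$, and $\min_{z\in Z}\sum_k[b_k-(Az)_k]^2=\sum_k(b_k-w^\ast_k)^2$. On the other hand, $W(\alpha)$ is continuous on the compact simplex $Q$: the factor $a(\alpha)$ is bounded and continuous by Theorem~\ref{TVYA}, and the remaining dependence on $\alpha$ is polynomial, so $\min_{\alpha\in Q}W(\alpha)$ exists. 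It then remains to show the two minima coincide.

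One inequality is immediate. For every $\alpha\in Q$ the vector $z(\alpha)=\{a(\alpha)\alpha_id_i\}_{i=1}^n$ has non-negative entries (as $a(\alpha)\ge 1$, $\alpha_i\ge 0$, $d_i>0$) and satisfies $(Az(\alpha))_k=a(\alpha)[\sum_i\alpha_id_ia_i]_k\le b_k$ by the very definition of $a(\alpha)$; thus $z(\alpha)\in Z$ and $W(\alpha)=\sum_k[b_k-(Az(\alpha))_k]^2$. Consequently $\min_{\alpha\in Q}W(\alpha)\ge\min_{z\in Z}\sum_k[b_k-(Az)_k]^2$.

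For the reverse inequality I would show that the minimizer $w^\ast$ has a tight coordinate, i.e. $w^\ast_k=b_k$ for some $k$. Assume the contrary, $w^\ast_k<b_k$ for all $k$. Since $K_A$ is a cone, $(1+\varepsilon)w^\ast\in K_A$ for $\varepsilon\ge 0$, and for small $\varepsilon>0$ it still lies in $[0,b]$; differentiating $\varepsilon\mapsto\sum_k(b_k-(1+\varepsilon)w^\ast_k)^2$ at $\varepsilon=0$ yields $-2\langle b-w^\ast,w^\ast\rangle$, and $\langle b-w^\ast,w^\ast\rangle\ge 0$ because all coordinates of $b-w^\ast$ and of $w^\ast$ are non-negative. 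If this derivative were negative, $w^\ast$ would not be a minimizer; if it were zero, then $(b_k-w^\ast_k)w^\ast_k=0$ for every $k$, which with $b_k-w^\ast_k>0$ forces $w^\ast=0$ — impossible, since for any nonzero column $a_j$ and small $t>0$ one has $ta_j\in K_A\cap[0,b]$ with $\sum_k(b_k-ta_{kj})^2<\sum_kb_k^2$, the linear term $-2t\langle b,a_j\rangle$ being strictly negative. Hence $I:=\{k:\ w^\ast_k=b_k\}$ is nonempty; moreover $w^\ast\ne b$ since $b\notin K_A$, so $J:=N_0\setminus I$ is nonempty as well. Therefore any $z^\ast\ge 0$ with $Az^\ast=w^\ast$ is a non-negative solution of the system (\ref{mykt17})--(\ref{mykt18}), so by Theorem~\ref{TVYA} it equals $z(\alpha^\ast)$ for some $\alpha^\ast\in Q$, whence $\min_{z\in Z}\sum_k[b_k-(Az)_k]^2=W(\alpha^\ast)\ge\min_{\alpha\in Q}W(\alpha)$. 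Combined with the previous paragraph this gives the claimed equality and exhibits the minimizer $\alpha^\ast$ of $W$ on $Q$.

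The step I expect to be the delicate one is showing that the optimal $w^\ast$ lies on the boundary defined by $w\le b$ — in particular disposing of the degenerate subcase where the directional derivative vanishes and forces $w^\ast=0$; here one leans on $b>0$ and on $A$ having a nonzero column (a consequence of indecomposability), which is also exactly what makes the numbers $d_i=\min_k b_k/a_{ki}$ well defined and strictly positive.
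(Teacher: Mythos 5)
Your proposal is correct, but it follows a genuinely different route from the paper. For the nontrivial inequality $\min_{z\in Z}\sum_k[b_k-(Az)_k]^2\ge\min_{\alpha\in Q}W(\alpha)$ the paper never passes to the image space or touches optimality conditions: for an arbitrary $z\in Z$ it sets $\alpha_i=\frac{z_i/d_i}{\sum_j z_j/d_j}$, observes $Az=\bigl(\sum_j z_j/d_j\bigr)\sum_i\alpha_i d_i a_i\le b$, hence $\sum_j z_j/d_j\le a(\alpha)$, and then uses that replacing the factor $\sum_j z_j/d_j$ by the larger $a(\alpha)$ increases each coordinate of the image while keeping it $\le b$, so each residual $b_k-(\cdot)_k$ decreases but stays non-negative; squaring and summing gives $\sum_k[b_k-(Az)_k]^2\ge W(\alpha)\ge\min_{\alpha\in Q}W(\alpha)$ pointwise in $z$, and the reverse inequality is just $Z_0\subset Z$. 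You instead take a minimizer $w^\ast$ of $\|b-w\|^2$ over $K_A\cap[0,b]$, prove by a first-order scaling argument (with the degenerate case $w^\ast=0$ excluded via $b>0$ and a nonzero column) that some constraint is tight, and then feed a preimage of $w^\ast$ into Theorem \ref{TVYA} to identify it as $z(\alpha^\ast)$. Both arguments are sound; the paper's rescaling comparison is more elementary — it needs no attainment of the image-space minimum, no closedness of the finitely generated cone (which you assert but do not prove; it is standard, or can be bypassed by noting $Z$ itself is compact and taking its continuous image), and no variational reasoning — while your version has the merit of exhibiting the structure of the optimum (tightness of at least one coordinate at $w^\ast$, nonemptiness of $I$ and $J$), information the paper recovers implicitly through Theorem \ref{TVYA}.
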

\begin{proof}
The function $W(\alpha)$ is continuous on the closed bounded set $Q,$ because so is the function $a(\alpha)$ due to its continuity. According to the Weierstrass Theorem, there exists a minimum of the function
$W(\alpha).$
For any solution $z=\{z_i\}_{i=1}^ n \in Z$
let's denote
$$\alpha_i=\frac{\frac{z_i}{d_i}}{\sum\limits_{j=1}^n\frac{z_j}{d_j} }\quad i=\overline{1,n}. $$
Then
$$ Az=\sum\limits_{i=1}^n a_iz_i=\sum\limits_{j=1}^n\frac{z_j}{d_j} \sum\limits_{i=1}^n \alpha_i d_i a_i \leq b.$$
From here
$$ \sum\limits_{j=1}^n\frac{z_j}{d_j} \leq \min\limits_{1\leq k \leq n}\frac{b_k}{[\sum\limits_{i= 1}^n \alpha_i d_i a_i]_k}=a(\alpha).$$
Therefore
$$ \sum\limits_{k=1}^n [b_k- \sum\limits_{i=1}^na_{ki}z_i]^2 \geq \sum\limits_{k=1}^n [b_k- a(\alpha)\sum\limits_{i=1}^na_{ki}\alpha_i d_i]^2\geq $$
$$\min\limits_{\alpha \in Q}\sum\limits_{k=1}^n [b_k- a(\alpha)\sum\limits_{i=1}^na_{ki}\alpha_i d_i] ^2.$$
Taking the minimum of $z \in Z$, we have
$$\min\limits_{z \in Z} \sum\limits_{k=1}^n [b_k- \sum\limits_{i=1}^na_{ki}z_i]^2\geq \min\limits_ {\alpha \in Q}\sum\limits_{k=1}^n [b_k- a(\alpha)\sum\limits_{i=1}^na_{ki}\alpha_i d_i]^2.$$
The inverse inequality is obvious due to the inclusion $Z \supset Z_0, $ where $Z_0$ is the set of solutions of the system of equations (\ref{mykt17}) and inequalities (\ref{mykt18}).
Proposition \ref{g1} is proved.
\end{proof}
Note that the set of non-negative solutions $Z$ of the system of equations and inequalities (\ref{mykt15}) is a closed bounded convex set. Indeed, the boundedness of the set of solutions follows from the indecomposability of the matrix $A$ and the finiteness of the vector $b.$ The convexity of the set of solutions is obvious. Then such a set is a convex combination of its extreme points. Such extreme points are some subset of the solutions constructed in Theorem \ref{TVYA}, which do not have common equalities.
Below we establish a number of theorems that will allow us to put into correspondence the  vector of prices to every non-negative vector
 $z=\{z_i\}_{i=1}^n,$ that satisfies the system of equations and inequalities
\begin{eqnarray}\label{100kolja2}
\sum\limits_{j=1}^n a_{ij} z_j=b_i, \quad i \in I,
\end{eqnarray}
\begin{eqnarray}\label{kolja5}
 \sum\limits_{j=1}^n a_{ij} z_j<b_i, \quad i \in J,
\end{eqnarray}
where $I$ is a nonempty set.

\begin{te}\label{1TtsyVtsja4}
Let $A=||a_{ij}||_{ij=1}^n$ be a strictly positive matrix, and $z=\{z_i\}_{i=1}^n$ be a non-negative nonzero vector. Then there exists a solution to the system of equations
\begin{eqnarray}\label{101kolja1}
 \sum\limits_{i=1}^n a_{ki}\frac{\bar b_i p_i}{\sum\limits_{s=1}^n a_{si}p_s}=\bar b_k \quad k=\overline{1,n},
\end{eqnarray}
in the set $P=\{p=\{p_i\}_{i=1}^n,\ p_i \geq 0,\ i=\overline{1,n}, \ \sum\limits_{i=1} ^n p_i=1\},$ where $\bar b=Az=\{\bar b_i\}_{i=1}^n.$
\end{te}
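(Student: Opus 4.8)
The plan is to produce the solution of (\ref{101kolja1}) by a Brouwer fixed-point argument on the simplex $P$, with the auxiliary map chosen so that the normalization constant is automatically forced to be $1$; this is the same circle of ideas used in the proof of Theorem \ref{mykt19}, but adapted to the fact that here the right-hand side is the specific vector $\bar b=Az$.

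First I would record the facts used throughout. Because $A$ is strictly positive and $z\geq 0,$ $z\neq 0,$ the vector $\bar b=Az$ is strictly positive, so $\bar b_k>0$ for all $k;$ and for every $p\in P$ one has $p\neq 0$ (its coordinates sum to $1$), whence $[A^Tp]_i:=\sum_{s=1}^n a_{si}p_s\geq\bigl(\min_{s,i}a_{si}\bigr)>0$ for each $i.$ Hence every denominator occurring below stays bounded away from zero on $P.$ Then I would introduce the map $\varphi\colon P\to P$ with components
\[
\varphi_i(p)=\frac{p_i+z_i [A^Tp]_i/\bar b_i}{1+\sum_{j=1}^n z_j [A^Tp]_j/\bar b_j},\qquad i=\overline{1,n}.
\]
By the observations above $\varphi$ is continuous on $P;$ all numerators are non-negative and their sum over $i$ equals the common denominator $1+\sum_j z_j[A^Tp]_j/\bar b_j,$ so $\varphi$ maps $P$ into itself. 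By Brouwer's Theorem \cite{Nirenberg} it has a fixed point $p^0\in P;$ writing $\lambda=\sum_{j=1}^n z_j[A^Tp^0]_j/\bar b_j$ and rearranging the fixed-point equation gives
\[
\lambda\,\bar b_i\,p^0_i=z_i\,[A^Tp^0]_i=z_i\sum_{s=1}^n a_{si}p^0_s,\qquad i=\overline{1,n}.
\]

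The key step is to show $\lambda=1,$ and this is exactly where the choice of $\varphi$ pays off. Summing the last identity over $i$ and changing the order of summation yields $\lambda\sum_i\bar b_i p^0_i=\sum_s p^0_s\sum_i a_{si}z_i=\sum_s p^0_s[Az]_s=\sum_s\bar b_s p^0_s;$ since $\bar b$ is strictly positive and $p^0\in P$ is nonzero, $\sum_i\bar b_i p^0_i>0,$ so $\lambda=1.$ Substituting back, $\bar b_i p^0_i=z_i[A^Tp^0]_i,$ hence $\bar b_i p^0_i/\bigl(\sum_{s=1}^n a_{si}p^0_s\bigr)=z_i$ for every $i$ (legitimate since $[A^Tp^0]_i>0$), and therefore
\[
\sum_{i=1}^n a_{ki}\,\frac{\bar b_i\,p^0_i}{\sum_{s=1}^n a_{si}p^0_s}=\sum_{i=1}^n a_{ki}z_i=\bar b_k,\qquad k=\overline{1,n},
\]
so $p^0$ is the required solution of (\ref{101kolja1}) in $P.$ The only genuine content is this algebraic identity forcing $\lambda=1;$ verifying that $\varphi$ is a continuous self-map of the simplex (so that Brouwer applies) and that the denominators never vanish is routine thanks to the strict positivity of $A.$ I expect the main obstacle, if any, to be purely a matter of bookkeeping in the double sum rather than a conceptual difficulty.
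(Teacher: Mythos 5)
Your proof is correct and follows essentially the same route as the paper: the same Brouwer fixed-point map on the simplex $P$, the same summation identity forcing the normalization constant $\lambda$ to equal $1$, and the same back-substitution giving $z_i=\bar b_i p_i^0/\sum_{s}a_{si}p_s^0$ and hence the equations (\ref{101kolja1}). Your explicit checks that the denominators stay positive and that $\sum_i \bar b_i p_i^0>0$ (using only $p^0\neq 0$ rather than strict positivity of $p^0$) are minor tidy-ups of the paper's argument, not a different method.
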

\begin{proof}
Consider the system of nonlinear equations
$$\frac{p_i+\frac{z_i}{\bar b_i}\sum\limits_{s =1}^n a_{si}p_s}{1+\sum\limits_{i =1}^n\frac{ z_i}{\bar b_i}\sum\limits_{s=1 }^n a_{si}p_s}=p_i, \quad i=\overline{1,n},$$
in the set $P=\{p=\{p_i\}_{i=1}^n, \ p_i\geq 0, \ i=\overline{1,n}, \ \sum\limits_{i=1 } ^n p_i=1\}.$
The left part of this system of equations is a continuous mapping of the set $P$ into itself. According to Brouwer's Theorem, there exists a fixed point
$p_0=\{p_i^0\}_{i=1}^n$ of the mapping given by the left part of this system of equations. Or
$$z_i\sum\limits_{s =1}^n a_{si}p_s^0=\lambda \bar b_i p_i^0, \quad i=\overline{1,n},$$
where
$\lambda=\sum\limits_{i =1}^n\frac{z_i}{\bar b_i}\sum\limits_{s=1}^na_{si}p_s^0.$
Summing over the index $i$, we will have the left and right parts of the above equalities

$$\sum\limits_{i =1}^n \bar b_i p_i^0=\lambda \sum\limits_{i =1}^n \bar b_i p_i^0.$$
  It is obvious that the vector $p_0$ is strictly positive, and $\sum\limits_{i =1} ^n \bar b_i p_i^0>0,$ therefore $\lambda=1$ and
$$ z_i=\frac{\bar b_i p_i^0}{ \sum\limits_{s=1}^n a_{si}p_s^0}, \quad i=\overline{1,n}. $$
Using the fact that
$$ \sum\limits_{i=1}^n a_{ij}z_j=\bar b_i, \quad i=\overline{1,n},$$
and substituting $z_i$ in the equality above, we get the required result.
This proves Theorem \ref{1TtsyVtsja4}.
\end{proof}

It can be proved similarly
\begin{te}\label{2TtsyVtsja4}
Let $A=||a_{ij}||_{ij=1}^n$ be a positive indecomposable matrix, and $z=\{z_i\}_{i=1}^n$ be a strictly positive vector. Then there is a solution to the system of equations
$$ \sum\limits_{i=1}^n a_{ki}\frac{\bar b_i p_i}{\sum\limits_{s=1}^n a_{si}p_s}=\bar b_k \quad k =\overline{1,n}, $$
in the set $P=\{p=\{p_i\}_{i=1}^n,\ p_i \geq 0,\ i=\overline{1,n}, \ \sum\limits_{i=1} ^n p_i=1\},$ where $\bar b=Az=\{\bar b_i\}_{i=1}^n.$
\end{te}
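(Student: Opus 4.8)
\textbf{Proof proposal for Theorem \ref{2TtsyVtsja4}.} The plan is to run, almost verbatim, the scheme of the proof of Theorem \ref{1TtsyVtsja4}, the one new ingredient being that the strict positivity of the Brouwer fixed point — free there because $A$ had strictly positive entries — now has to be extracted from the indecomposability of $A$, exactly as in the last paragraph of the proof of Theorem \ref{mykt19}. First I would record the preliminary observation that an indecomposable matrix has no zero row and no zero column, so $\bar b_k=\sum_{j=1}^n a_{kj}z_j>0$ for every $k$; hence $\bar b=Az$ is strictly positive and the quotients $z_i/\bar b_i$ are well defined and positive. (If "positive" is read as "strictly positive", this step is trivial and the rest simplifies accordingly.)

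Next, on the simplex $P=\{p=\{p_i\}_{i=1}^n,\ p_i\geq 0,\ \sum_{i=1}^n p_i=1\}$ I would introduce the map $\varphi(p)=\{\varphi_i(p)\}_{i=1}^n$ with
$$\varphi_i(p)=\frac{p_i+\frac{z_i}{\bar b_i}\sum\limits_{s=1}^n a_{si}p_s}{1+\sum\limits_{i=1}^n\frac{z_i}{\bar b_i}\sum\limits_{s=1}^n a_{si}p_s},\qquad i=\overline{1,n}.$$
Its denominator is $\geq 1$, so $\varphi$ is continuous on $P$, has nonnegative components, and summing the numerators over $i$ gives $\sum_i\varphi_i(p)=1$; thus $\varphi$ maps $P$ into itself, and by Brouwer's theorem it has a fixed point $p_0=\{p_i^0\}_{i=1}^n$.

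Then I would clear denominators in $\varphi_i(p_0)=p_i^0$ to get $z_i\sum_{s=1}^n a_{si}p_s^0=\lambda\,\bar b_i p_i^0$ with $\lambda=\sum_{i=1}^n\frac{z_i}{\bar b_i}\sum_{s=1}^n a_{si}p_s^0$. Summing over $i$ and using $\sum_i a_{si}z_i=\bar b_s$ gives $\sum_s\bar b_s p_s^0=\lambda\sum_s\bar b_s p_s^0$; since $\bar b>0$ and $p_0\neq 0$ we have $\sum_s\bar b_s p_s^0>0$, hence $\lambda=1$. So $p_i^0=\frac{z_i}{\bar b_i}\sum_{s=1}^n a_{si}p_s^0$, i.e. $p_0=Mp_0$ for the nonnegative matrix $M=\big(\tfrac{z_i}{\bar b_i}a_{si}\big)_{i,s=1}^n$, which is $A^{T}$ pre-multiplied by a positive diagonal matrix and therefore indecomposable. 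As in the proof of Theorem \ref{mykt19}, from $p_0=Mp_0=M^{\,n-1}p_0$ with $p_0\geq 0$, $p_0\neq 0$, one concludes that $p_0$ is strictly positive. In particular each $\sum_{s=1}^n a_{si}p_s^0>0$, so the relation $z_i\sum_s a_{si}p_s^0=\bar b_i p_i^0$ rearranges to $z_i=\dfrac{\bar b_i p_i^0}{\sum_{s=1}^n a_{si}p_s^0}$; substituting this into $\sum_i a_{ki}z_i=\bar b_k$ yields the asserted system of equations.

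The only genuinely new point compared with Theorem \ref{1TtsyVtsja4} — and hence the main obstacle — is the passage from "$p_0$ is a fixed point of $\varphi$" to "$p_0>0$": when $A$ is merely nonnegative and indecomposable this can no longer be read off from positivity of the entries, and one must iterate $p_0=Mp_0$ and invoke indecomposability (Perron–Frobenius), precisely in the form already used in Lemma \ref{pupvittin7} and Theorem \ref{mykt19}. The verifications that $\bar b$ is strictly positive and that the denominators $\sum_s a_{si}p_s^0$ are positive are the routine checks that replace the corresponding trivialities of the strictly positive case.
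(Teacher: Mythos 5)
Your proposal is correct and takes essentially the same approach the paper intends: the paper dismisses this theorem with "it can be proved similarly" to Theorem \ref{1TtsyVtsja4}, and you carry out exactly that Brouwer fixed-point argument, supplying the strict positivity of the fixed point from indecomposability in the same way as Lemma \ref{pupvittin7} and Theorem \ref{mykt19}. (A minor polish: for a merely indecomposable nonnegative $M$ the power $M^{n-1}$ need not be strictly positive, so it is cleaner to pass from $p_0=Mp_0$ to $(E+M)^{n-1}p_0=2^{n-1}p_0>0$; but this is the same convention the paper itself employs, so it does not separate your argument from theirs.)
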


\begin{te}\label{TtsyVtsja5}
Let $A=||a_{ij}||_{ij=1}^n$ be a nonnegative nonzero matrix. The necessary and sufficient conditions for the existence of a solution of the system of equations
$$ \sum\limits_{i=1}^n a_{ki}\frac{\bar b_i p_i}{\sum\limits_{s=1}^n a_{si}p_s}=\bar b_k \quad k =\overline{1,n}, $$
in the set $P$ is the existence of a solution to the system of equations
$$ z_i=\frac{\bar b_i p_i^0}{ \sum\limits_{s=1}^n a_{si}p_s^0}, \quad i=\overline{1,n}, $$
in the set $P$ for some non negative nonzero vector $z=\{z_i\}_{i=1}^n$ and $\bar b=Az=\{\bar b_i\}_{i=1} ^n.$
\end{te}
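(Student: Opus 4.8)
The plan is to recognize that Theorem~\ref{TtsyVtsja5} is a pure reformulation rather than a new existence result: the equilibrium system $\sum_{i=1}^n a_{ki}\frac{\bar b_i p_i}{\sum_{s=1}^n a_{si}p_s}=\bar b_k$ and the representation system $z_i=\frac{\bar b_i p_i^0}{\sum_{s=1}^n a_{si}p_s^0}$ are two ways of encoding the single linear relation $Az=\bar b$. Since the very writing of these fractions presupposes that all denominators $\sum_{s=1}^n a_{si}p_s^0$ are strictly positive, a vector $p_0\in P$ is admissible only when this holds, and then the vector $z$ with $z_i=\frac{\bar b_i p_i^0}{\sum_{s=1}^n a_{si}p_s^0}$ is well defined and nonnegative. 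With this understood, the proof splits into the two elementary implications, each obtained by direct substitution; no fixed‑point argument is needed, the genuine existence content residing in Theorems~\ref{1TtsyVtsja4} and \ref{2TtsyVtsja4}.

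For sufficiency I would start from a nonnegative nonzero vector $z=\{z_i\}_{i=1}^n$ with $\bar b=Az$ and a point $p_0=\{p_i^0\}_{i=1}^n\in P$ such that $z_i=\frac{\bar b_i p_i^0}{\sum_{s=1}^n a_{si}p_s^0}$ for every $i$. Substituting these expressions into the left-hand side of the equilibrium system gives, for each $k$, $\sum_{i=1}^n a_{ki}\frac{\bar b_i p_i^0}{\sum_{s=1}^n a_{si}p_s^0}=\sum_{i=1}^n a_{ki}z_i=[Az]_k=\bar b_k$, so $p_0$ is a solution of the equilibrium system in $P$. This direction is immediate once the substitution is written down.

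For necessity I would assume $p_0\in P$ solves the equilibrium system, set $z_i:=\frac{\bar b_i p_i^0}{\sum_{s=1}^n a_{si}p_s^0}\ge 0$, $i=\overline{1,n}$, and observe that the equilibrium system then reads exactly $[Az]_k=\sum_{i=1}^n a_{ki}z_i=\bar b_k$, i.e. $\bar b=Az$. Because $\bar b$ is nonzero (in the applications it is the strictly positive vector of real consumption), the vector $z$ is a nonnegative \emph{nonzero} vector, and by construction $z_i=\frac{\bar b_i p_i^0}{\sum_{s=1}^n a_{si}p_s^0}$ with $p_0\in P$; these are precisely the asserted conditions. The only delicate point, and it is a minor one, is ensuring that the denominators $\sum_{s=1}^n a_{si}p_s^0$ do not vanish so that $z$ is legitimately defined and $\bar b\neq 0$ can be invoked to rule out $z=0$; this is handled by the standing convention that "solves the equilibrium system" already presupposes these denominators are positive, together with the strict positivity of $\bar b$ used throughout the section.
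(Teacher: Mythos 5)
Your proof is correct and matches the paper's intent exactly: the paper's own proof of Theorem~\ref{TtsyVtsja5} is literally the one-line remark that it is obvious, and your two substitution arguments (defining $z_i=\frac{\bar b_i p_i^0}{\sum_s a_{si}p_s^0}$ in one direction and plugging it back in the other) are precisely the routine verification being left to the reader. Your added remarks about nonvanishing denominators and $\bar b\neq 0$ address genuine implicit assumptions of the statement rather than deviating from the paper's approach.
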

\begin{proof}
The proof is obvious.
\end{proof}
\begin{te}\label{TtsyVtsja3}
Let $A=||a_{ij}||_{ij=1}^n$ be a non-negative nonzero matrix and let $z=\{z_i\}_{i=1}^n$ be a non-negative vector, which is a solution to the system of inequalities
\begin{eqnarray}\label{100kolja3}
 \sum\limits_{i=1}^n a_{ij}z_j=b_i, \quad i \in I,
\end{eqnarray}
\begin{eqnarray}\label{100kolja4}
\sum\limits_{i=1}^n a_{ij}z_j<b_i, \quad i \in J,
\end{eqnarray}
where $I$ and $J$ are nonempty sets. If $b$ is a strictly positive vector, and $z_j>0$ for some $j \in J,$ then there is no solution to the system of equations
\begin{eqnarray}\label{100kolja5}
 z_i=\frac{b_i p_i}{ \sum\limits_{s=1}^n a_{si}p_s}, \quad i=\overline{1,n},
\end{eqnarray}
in the set $P=\{p=\{p_i\}_{i=1}^n,\ p_i \geq 0,\ i=\overline{1,n}, \ \sum\limits_{i=1} ^n p_i=1\}.$
\end{te}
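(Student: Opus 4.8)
The plan is to argue by contradiction, closely mirroring the opening ``empty set'' argument inside the proof of Theorem~\ref{mykt19}. Suppose that there does exist a vector $p=\{p_i\}_{i=1}^n \in P$ solving the system (\ref{100kolja5}). Since $b$ is strictly positive and $z$ is non-negative and nonzero with $Az\neq 0$ (because $b_i=\sum_j a_{ij}z_j>0$ for $i\in I$), the relations $z_i=\dfrac{b_i p_i}{\sum_s a_{si}p_s}$ show that each $p_i$ with $z_i>0$ must itself be positive; in particular $p\neq 0$ is consistent, and the denominators $\sum_s a_{si}p_s$ are positive whenever $z_i>0$, so the expressions are well defined. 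The key observation is that, by the hypothesis $z_j>0$ for some $j\in J$, the equation $z_j=\dfrac{b_j p_j}{\sum_s a_{sj}p_s}$ together with the strict inequality $\sum_i a_{ji}z_i<b_j$ from (\ref{100kolja4}) will be shown to be incompatible.

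The concrete mechanism I would use is the operator-norm contraction trick from Theorem~\ref{mykt19}. Introduce $y_i=\dfrac{p_i}{\sum_{s=1}^n a_{si}p_s}$ for those $i$ with $\sum_s a_{si}p_s>0$ (this includes every $i$ with $z_i>0$, hence in particular the distinguished index $j\in J$), so that (\ref{100kolja5}) becomes $z_i=b_i y_i$, i.e. $y_i=z_i/b_i$. Then for each $i$ in the relevant index set we have the identity $p_i=y_i\sum_{s}a_{si}p_s$, which in matrix form reads $p=[AY]^T p$ with $Y=\mathrm{diag}(y_i)$. Equip $R^n$ (restricted to the support) with the norm $\|p\|=\sum_s b_s|p_s|$ and estimate, exactly as on p.~\pageref{mykt19}, $\|[AY]^Tp\|\le \max_s\big(\sum_i b_i y_i \tfrac{a_{si}}{b_s}\big)\|p\| = \max_s \tfrac{1}{b_s}\sum_i a_{si}z_i\;\|p\|$. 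For $s\in I$ this max-term equals $1$, but for $s=j\in J$ the strict inequality $\sum_i a_{ji}z_i<b_j$ makes the corresponding term strictly less than $1$; one must check that the maximum over the full index set is $\le 1$ and is attained, if at all, only with equalities on $I$. The outcome is that $[AY]^T$ is a non-expansive operator whose fixed subspace cannot contain the actual nonzero $p$ once the $J$-component forces strict contraction in the $j$-direction — this is precisely the contradiction, because $p$ has a positive $j$-component (as $z_j>0$) that is strictly shrunk.

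The cleanest way to close the gap is probably more elementary than the norm argument: sum the equations $\sum_j a_{ij}z_j \le b_i$ weighted by the fixed point. From $z_i=\dfrac{b_i p_i}{\sum_s a_{si}p_s}$ we get $\sum_s a_{si}p_s \cdot z_i = b_i p_i$; summing over $i=\overline{1,n}$ gives $\sum_s p_s \big(\sum_i a_{si}z_i\big) = \sum_i b_i p_i$. Now split the left sum over $I$ and $J$: on $I$ the bracket equals $b_s$, on $J$ it is $<b_s$, and since $p$ has at least one positive coordinate on $J$ (the index $j$, forced positive by $z_j>0$ and $b_j>0$), the left-hand side is \emph{strictly} less than $\sum_s b_s p_s = \sum_i b_i p_i$, contradicting the equality just derived. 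The main obstacle, and the step that needs care, is establishing that $p_j>0$: one must verify from $z_j = \dfrac{b_j p_j}{\sum_s a_{sj}p_s}$ with $z_j>0$, $b_j>0$ and finite denominator that $p_j\neq 0$, and also that the denominator $\sum_s a_{sj}p_s$ is strictly positive (else the expression is undefined); both follow from $z_j>0$ and non-negativity of all quantities, but this is the delicate point that makes the hypotheses ``$b$ strictly positive'' and ``$z_j>0$ for some $j\in J$'' indispensable.
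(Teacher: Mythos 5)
Your closing argument is exactly the paper's proof: cross-multiply to get $z_i\sum_s a_{si}p_s=b_ip_i$, sum over $i$ to obtain $\sum_s p_s\bigl(\sum_i a_{si}z_i\bigr)=\sum_i b_ip_i$, then split the left side over $I$ and $J$ and use $p_j>0$ (forced by $z_j>0$, $b_j>0$ and a positive denominator) to get a strict inequality contradicting this equality. The preliminary contraction-norm discussion is unnecessary, but the proof you settle on is correct and essentially identical to the paper's.
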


\begin{proof}
Argument from the opposite. Suppose that there is a nonzero nonnegative vector $ p_0=\{p_i^0\}_{i=1}^n$ which is a solution of the system of equations of Theorem \ref{TtsyVtsja3}. Then $\sum\limits_{s=1}^n a_{si}p_s^0\neq 0$ and
$$z_i \sum\limits_{s=1}^n a_{si}p_s^0=b_i p_i^0,\quad i=\overline{1,n}.$$ Hence we have $ p_j ^0>0$ for $j \in J$ specified in Theorem \ref{TtsyVtsja3}. Summing up the left and right parts by $i$, we get
$$ \sum\limits_{s=1}^n ( \sum\limits_{i=1}^n a_{si}z_j) p_s^0= \sum\limits_{i=1}^n b_i p_i^0 .$$
But
$$ \sum\limits_{s=1}^n ( \sum\limits_{i=1}^n a_{si}z_j) p_s^0< \sum\limits_{i=1}^n b_i p_i^0 .$$
Contradiction. This proves the theorem \ref{TtsyVtsja3}.
\end{proof}

\begin{de}\label{kolja2}
Let $A$ be a nonnegative matrix, and the vector $b$ is strictly positive, which does not belong to the cone formed by the column vectors of the matrix $A. $ We  say that the vector 
$z=\{z_i\}_{i=1}^n,$ which is a solution of the system of inequalities
\begin{eqnarray}\label{100kolja6}
 \sum\limits_{j=1}^n a_{ij}z_j=b_i, \quad i \in I,
\end{eqnarray}
\begin{eqnarray}\label{100kolja7}
 \sum\limits_{j=1}^n a_{ij}z_j<b_i, \quad i \in J,
\end{eqnarray}
where $I$ is a nonempty set, corresponds to the equilibrium price  vector   if there exists a solution to the system of equations
\begin{eqnarray}\label{100kolja8}
 \sum\limits_{i=1}^n a_{ki}\frac{\bar b_i p_i}{\sum\limits_{s=1}^n a_{si}p_s}=\bar b_k \quad k=\overline{1,n},
\end{eqnarray}
in the set $P=\{p=\{p_i\}_{i=1}^n,\ p_i \geq 0,\ i=\overline{1,n}, \ \sum\limits_{i=1} ^n p_i=1\},$ where $\bar b=Az=\{\bar b_i\}_{i=1}^n.$
\end{de}
\begin{ce}\label{5TtsyVtsja4}
There is no equilibrium state corresponding to the vector $z=\{z_i\}_{i=1}^n$ which is a solution of the system of equations (\ref{100kolja5}) of Theorem \ref{TtsyVtsja3} with respect to the vector $ p=\{p_i\}_{ i=1}^n.$
\end{ce}
\begin{proof}
Indeed, if it were not so, then the solution of the system of equations (\ref{100kolja5}) of Theorem \ref{TtsyVtsja3} with respect to the vector $ p=\{p_i\}_{i=1}^n$ should exist. But it is not so.
\end{proof}
 Theorem \ref{myktina19} is the basis for the determining of the equilibrium price vector in the case of partial clearing of markets.

\begin{te}\label{myktina19}
Let $||a_{ij}||_{i,j =1}^n$ be a nonnegative matrix, $b=\{b_i\}_{=1}^n$ be a strictly positive vector,  and let the matrix  $||a_{ij}||_{i,j \in I}$ be an indecomposable one for a certain non empty set $I$. 
Then the equilibrium   price vector $p=\{p_i\}_{i=1}^n$,  being a solution  of the system of equations and inequalities
 $$  \sum\limits_{j=1}^n  a_{ij}\frac{b_j p_j}{\sum\limits_{s=1}^n a_{sj}p_s}=b_i \quad i \in I,     $$
\begin{eqnarray}\label{mari1}
   \sum\limits_{j=1}^n a_{ij}\frac{b_j p_j}{\sum\limits_{s=1}^n a_{sj}p_s}<b_i \quad i \in J,     
\end{eqnarray}
in the set $P=\{p=\{p_i\}_{i=1}^n,\ p_i \geq 0,\ i=\overline{1,n}, \ \sum\limits_{i=1} ^n p_i=1\}$
is a solution of the system of equations
\begin{eqnarray}\label{mari2}
z_i=\frac{\bar b_i p_i}{\sum\limits_{s=1}^n a_{si}p_s}, \quad i=\overline{1,n},
\end{eqnarray}
where $\bar b=\{\bar b_i\}_{i=1}^n,$ $\bar b=A z,$ the vector $z=\{z_i\}_{i=1}^n$ is determined as follows
 $z_i=z_i^0, \ i \in I, z_i=0, \ i \in J.$ The non-negative  vector $z_0^I=\{z_i^0\}_{i \in I}$ satisfies the system of equations and inequalities
\begin{eqnarray}\label{100kolja9}
   \sum\limits_{j \in I} a_{ij} z_j^0=b_i \quad i \in I,
\end{eqnarray}
\begin{eqnarray}\label{100kolja10}
  \sum\limits_{j \in I} a_{ij}z_j^0<b_i \quad i \in J.
\end{eqnarray}
 \end{te}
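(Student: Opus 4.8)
The plan is to prove that an equilibrium price vector $p$ solving the indicated system is automatically supported on $I$ (that is, $p_i=0$ for $i\in J$) and then to read the real consumption vector off $p$ by a one-line computation. Put $w_j=\sum_{s=1}^n a_{sj}p_s$ and $v_j=b_j p_j/w_j$ for $j=\overline{1,n}$; here $w_j>0$ wherever it matters (for $j\in I$ this follows from indecomposability of the minor $\|a_{ij}\|_{i,j\in I}$ exactly as in the proof of Theorem \ref{mykt19}, and for $j\in J$ the numerator $b_jp_j$ will be seen to vanish). In this notation the hypotheses on $p$ become $\sum_{j=1}^n a_{ij}v_j=b_i$ for $i\in I$ and $\sum_{j=1}^n a_{ij}v_j<b_i$ for $i\in J$.

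The key step — and essentially the only nontrivial point — is to weight these relations by $p_i$ and sum over $i$. Interchanging the order of summation gives
$$\sum_{i=1}^n p_i\sum_{j=1}^n a_{ij}v_j=\sum_{j=1}^n v_j\Big(\sum_{i=1}^n a_{ij}p_i\Big)=\sum_{j=1}^n v_j w_j=\sum_{j=1}^n b_j p_j .$$
On the other hand, using the equalities on $I$ and the strict inequalities on $J$,
$$\sum_{i=1}^n p_i\sum_{j=1}^n a_{ij}v_j=\sum_{i\in I}p_i b_i+\sum_{i\in J}p_i\sum_{j=1}^n a_{ij}v_j\le\sum_{i\in I}p_i b_i+\sum_{i\in J}p_i b_i=\sum_{i=1}^n b_i p_i ,$$
and this last inequality is strict as soon as $p_i>0$ for some $i\in J$. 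Since the two displays give $\sum_i p_i\sum_j a_{ij}v_j=\sum_i b_i p_i$, strictness is excluded, so $p_i=0$ for every $i\in J$; here one uses $b>0$ and $p\in P$, which makes $\sum_i b_i p_i>0$ so that the weighting is nontrivial. I expect this cancellation argument to be the heart of the proof; everything after it is purely formal.

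Granting $p_i=0$ on $J$, we get $v_i=0$ on $J$ and $w_j=\sum_{s\in I}a_{sj}p_s$, so the equilibrium relations collapse to $\sum_{j\in I}a_{ij}v_j=b_i$ for $i\in I$ and $\sum_{j\in I}a_{ij}v_j<b_i$ for $i\in J$. Define $z_i^0:=v_i\ge 0$ for $i\in I$ (as in Theorem \ref{mykt19}, indecomposability of the minor forces $p_i>0$, hence $z_i^0>0$, on $I$ when needed). Then $z_0^I=\{z_i^0\}_{i\in I}$ is precisely a solution of the system (\ref{100kolja9})–(\ref{100kolja10}). Extending by $z_i=0$ for $i\in J$ and setting $\bar b=Az$, one has $\bar b_k=\sum_{j\in I}a_{kj}z_j^0$, so $\bar b_k=b_k$ for $k\in I$ and $\bar b_k<b_k$ for $k\in J$. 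It remains to verify (\ref{mari2}): for $i\in I$, $\bar b_i p_i/w_i=b_i p_i/w_i=v_i=z_i^0=z_i$, and for $i\in J$, $\bar b_i p_i/w_i=0=z_i$ because $p_i=0$. Hence $p$ solves (\ref{mari2}) with this $z$ and $\bar b$, which is the assertion of Theorem \ref{myktina19}. (Alternatively, once $p_i=0$ on $J$ is known, one may quote the one-to-one correspondence of Theorem \ref{mykt19} between solutions of the reduced price system and of (\ref{mykt17})–(\ref{mykt18}); the direct check above avoids any appeal to it.)
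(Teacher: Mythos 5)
Your proposal is correct and takes essentially the same approach as the paper: the weighted summation over $p_i$ (multiply the $i$-th relation by $p_i$, sum, interchange the order of summation) forcing $p_i=0$ for $i\in J$ is exactly the paper's contradiction argument, and the subsequent substitution $z_i^0=b_ip_i/\sum_{s\in I}a_{si}p_s$, extension by zero on $J$, and verification of (\ref{mari2}) with $\bar b=Az$ mirror the paper's concluding steps.
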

\begin{proof}
Let there exist a solution of the system of equations and inequalities (\ref{mari1}) with respect to the vector $p=\{p_i\}_{i=1}^n$ in the set $P.$ Let us show that $p_i=0, \ i \in J.$ 
We  lead the proof from the opposite.
Let at least one component $p_k, k \in J,$ of the vector $p$ be strictly positive.
Then, multiplying by $p_i, i=\overline{1,n},$ the $i$-th equation or inequality and summing the left and right parts, respectively, we get the inequality
$ \sum\limits_{j=1}^n b_j p_j< \sum\limits_{i=1}^n b_i p_i. $
Due to the strict positivity  of the vector $b$, this inequality is impossible, because $ \sum\limits_{j=1}^n b_j p_j>0.$ Therefore, our assumption is not correct, and therefore $p_i=0, \ i \in J. $
The remaining component $p_i, \  i \in I,$ of the vector $p$ is a solution of the system of equations and inequalities
\begin{eqnarray}\label{vasja}
  \sum\limits_{j\in I} a_{ij}\frac{b_j p_j}{\sum\limits_{s\in I } a_{sj}p_s}=b_i \quad i \in I,
 \end{eqnarray}
 \begin{eqnarray}\label{vasja1}
  \sum\limits_{j\in I} a_{ij}\frac{b_j p_j}{\sum\limits_{s\in I } a_{sj}p_s}<b_i \quad i \in J.
 \end{eqnarray}
Let us  introduce the denotation
 \begin{eqnarray}\label{vasja1} 
z_j^0= \frac{b_j p_j}{\sum\limits_{s\in I } a_{sj}p_s}, \quad j \in I.
 \end{eqnarray}
It is evident that  the  equalities and inequalities 
 $$ \sum\limits_{j \in I} a_{ij} z_j^0=b_i \quad i \in I, $$
 $$ \sum\limits_{j \in I} a_{ij}z_j^0<b_i \quad i \in J, $$
 are valid.
If we introduce a vector
$z=\{z_i\}_{i=1}^n$ where $z_i=z_i^0, \ i \in I, z_i=0, \ i \in J,$
then we obtain a system of equations and inequalities
$$ \sum\limits_{j =1}^na_{ij} z_j=b_i \quad i \in I, $$
 $$ \sum\limits_{j=1} ^na_{ij}z_j<b_i \quad i \in J. $$
The price vector
$ p=\{p_i\}_{i=1}^n$ is a solution of the system of equations (\ref{mari2}), due to the fact that $\bar b=Az,$ and therefore is also a solution of the system equations
\begin{eqnarray}\label{mari3}
   \sum\limits_{j =1}^n a_{ij}\frac{\bar b_j p_j}{\sum\limits_{s=1}^n a_{sj}p_s}=\bar b_i \quad i=\overline{1,n}.
\end{eqnarray}
Theorem \ref{myktina19} is proved.
\end{proof}

\begin{te}\label{MykHon1}
Let the matrix $||a_{ij}||_{i,j =1}^n$ be nonnegative, the vector $b=\{b_i\}_{=1}^n$ be strictly positive, and let the matrix $ ||a_{ij}||_{i,j \in I}$ be indecomposable for some nonempty set $I$. The necessary and sufficient condition of existence
of the equilibrium  price vector  $p=\{p_i\}_{i=1}^n$, which is a solution of the system of equations and inequalities
  $$ \sum\limits_{j=1}^n a_{ij}\frac{b_j p_j}{\sum\limits_{s=1}^n a_{sj}p_s}=b_i \quad i \in I, $$
\begin{eqnarray}\label{MykHon2}
   \sum\limits_{j=1}^n a_{ij}\frac{b_j p_j}{\sum\limits_{s=1}^n a_{sj}p_s}<b_i \quad i \in J,
\end{eqnarray}
in the set $P=\{p=\{p_i\}_{i=1}^n,\ p_i \geq 0,\ i=\overline{1,n}, \ \sum\limits_{i=1} ^n p_i=1\},$
such that $p_i>0, i \in I,$ is the existence of a strictly positive solution of the system of equations and inequalities
$$  \sum\limits_{j\in I}a_{ij} z_j=b_i \quad i \in I,     $$
\begin{eqnarray}\label{MykHon3}
   \sum\limits_{j\in I} a_{ij}z_j<b_i \quad i \in J. 
\end{eqnarray}
\end{te}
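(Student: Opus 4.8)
The plan is to prove both directions by passing back and forth between the price vector $p$ and the quantity vector $z$ via the substitution $z_j = b_j p_j / \sum_{s=1}^n a_{sj} p_s$, exactly as in Theorem~\ref{myktina19}, and then to upgrade a non-negative solution to a strictly positive one using the indecomposability of the minor $||a_{ij}||_{i,j\in I}$.

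First I would prove necessity. Suppose the equilibrium price vector $p=\{p_i\}_{i=1}^n\in P$ with $p_i>0$ for $i\in I$ solves the system (\ref{MykHon2}) together with the corresponding equalities for $i\in I$. By the argument already used at the start of the proof of Theorem~\ref{myktina19} — multiply the $i$-th relation by $p_i$, sum over all $i$, and use strict positivity of $b$ — one forces $p_i=0$ for all $i\in J$. Hence $\sum_{s=1}^n a_{sj}p_s = \sum_{s\in I} a_{sj}p_s$ for every $j\in I$, and setting $z_j^0 = b_j p_j / \sum_{s\in I} a_{sj}p_s$ for $j\in I$ gives, as in (\ref{vasja1}), a non-negative vector satisfying $\sum_{j\in I} a_{ij}z_j^0 = b_i$ for $i\in I$ and $\sum_{j\in I} a_{ij}z_j^0 < b_i$ for $i\in J$. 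The point that needs the extra hypothesis $p_i>0,\ i\in I$, is the strict positivity of $z^0$: since $p_j>0$ and $\sum_{s\in I}a_{sj}p_s>0$ (the latter because for each $j\in I$ indecomposability of the minor forces some $a_{sj}>0$ with $s\in I$), each $z_j^0$ is strictly positive. This is the direction I expect to be routine.

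Next, sufficiency. Given a strictly positive $z^0=\{z_j^0\}_{j\in I}$ solving $\sum_{j\in I}a_{ij}z_j=b_i$ ($i\in I$), $\sum_{j\in I}a_{ij}z_j<b_i$ ($i\in J$), extend it by zero to $z\in R_+^n$ and set $\bar b = Az$. Then $\bar b_i=b_i$ for $i\in I$ and $\bar b_i<b_i$ for $i\in J$. Applying the fixed-point construction of Theorem~\ref{mykt19} (or equivalently Theorem~\ref{1TtsyVtsja4} restricted to the index set $I$, with the operator $A^I(y)$ built from $y_i = z_i^0/b_i$), one obtains a strictly positive $p^I=\{p_i\}_{i\in I}$ — strict positivity coming precisely from indecomposability of $||a_{ij}||_{i,j\in I}$, so that $[A^I(y)]^{|I|-1}$ is a strictly positive matrix — satisfying $\sum_{j\in I}a_{ij}\frac{b_j p_j}{\sum_{s\in I}a_{sj}p_s}=b_i$ for $i\in I$. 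Normalising and extending $p$ by $p_i=0$ for $i\in J$, I then have to check the remaining inequalities $\sum_{j\in I}a_{ij}\frac{b_j p_j}{\sum_{s\in I}a_{sj}p_s}<b_i$ for $i\in J$; but under the substitution these are exactly $\sum_{j\in I}a_{ij}z_j^0<b_i$, which hold by hypothesis. Finally, since $p_i=0$ for $i\in J$, the denominators $\sum_{s\in I}a_{sj}p_s$ and $\sum_{s=1}^n a_{sj}p_s$ agree, so $p$ solves (\ref{MykHon2}) with $p_i>0$ for $i\in I$, as required.

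The main obstacle is the sufficiency direction, specifically establishing strict positivity of the recovered price vector on $I$ and ensuring the normalisation constant $\lambda$ equals $1$: this is where one must invoke indecomposability of the minor and repeat the operator-norm / iteration argument from the proof of Theorem~\ref{mykt19} (writing $p^I=[A^I(y)]^{|I|-1}p^I$ and using that this power is strictly positive). Everything else is bookkeeping with the change of variables $z_j\leftrightarrow b_jp_j/\sum_s a_{sj}p_s$ and the zero-extension lemma implicit in Theorem~\ref{myktina19}.
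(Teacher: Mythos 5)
Your proposal is correct and follows essentially the same route as the paper: necessity by forcing $p_i=0$ on $J$ (the summation argument from Theorem \ref{myktina19}) and then the change of variables $z_j=b_jp_j/\sum_{s\in I}a_{sj}p_s$, sufficiency by invoking the fixed-point construction of Theorem \ref{mykt19} on the index set $I$ (with $\lambda=1$ and strict positivity via the power $[A^I(y)]^{|I|-1}$ and indecomposability of the minor) and extending the price vector by zero on $J$. No gaps; the details you flag as the main obstacles are exactly the ones the paper handles inside Theorem \ref{mykt19}.
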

\begin{proof}
Necessity.
Let there exist a solution of the system of equations and inequalities (\ref{MykHon2}) with respect to the vector $p=\{p_i\}_{i=1}^n$ in the set $P$ and such that $p_i>0, i \in I.$ Whereas in the previous Theorem \ref{myktina19} we establish that $p_i=0, \ i \in J.$
The remaining component $p_i, \ i \in I,$ of the vector $p$ is the solution of the system of equations and inequalities
\begin{eqnarray}\label{MykHon4}
  \sum\limits_{j\in I} a_{ij}\frac{b_j p_j}{\sum\limits_{s\in I } a_{sj}p_s}=b_i \quad i \in I,
 \end{eqnarray}
 \begin{eqnarray}\label{MykHon5}
  \sum\limits_{j\in I} a_{ij}\frac{b_j p_j}{\sum\limits_{s\in I } a_{sj}p_s}<b_i \quad i \in J.
 \end{eqnarray}
Let's introduce the denotation
$$ z_i=\frac{b_i p_i}{\sum\limits_{s \in I}a_{si} p_s},\quad i \in I.$$
Then the equalities and inequalities 
 $$ \sum\limits_{j \in I} a_{ij} z_j=b_i \quad i \in I, $$
 $$ \sum\limits_{j \in I} a_{ij}z_j<b_i \quad i \in J. $$
 are valid.
It is obvious that $ z_i>0, i \in I.$ The necessity is established.

Sufficiency. If there exists a strictly positive solution of the system of equations and inequalities (\ref{MykHon3}), then the conditions of Theorem \ref{mykt19} are true, that is, there exists a strictly positive solution of the system of equations 
 \begin{eqnarray}\label{MykHon6}
\sum\limits_{i \in I}a_{ki} \frac{p_i b_i}{\sum\limits_{s \in I}a_{si}p_s} =b_k, \quad k\in I,
\end{eqnarray}
and inequalities
\begin{eqnarray}\label{MykHon7}
\sum\limits_{i \in I}a_{ki} \frac{p_i b_i}{\sum\limits_{s \in I}a_{si}p_s} < b_k, \quad k\in J.
\end{eqnarray}
 Let's construct the equilibrium vector of prices $p=\{p_i\}_{i=1}^n $ by setting $p_i=0, i \in J,$ and choosing the components $p_i, i \in I,$ to be equal to the corresponding components of solution of the system of equations and inequalities (\ref{MykHon6}),(\ref{MykHon7}). The price vector constructed in this way is the solution of the system of equations and inequalities (\ref{MykHon2}).
Theorem \ref{MykHon1} is proved.
\end{proof}
For any vector $z_0 \in Z$, let
\begin{eqnarray}\label{TVYA1}
\bar b=\sum\limits_{i=1}^n z_i^0 a_i.
\end{eqnarray}
If $I=\{i, \bar b_i=b_i\}$ is a nonempty set, then the vector $\bar b$
will be called the vector of real consumption. In accordance with Theorems \ref{1TtsyVtsja4} - \ref{TtsyVtsja5}, it corresponds to the equilibrium price vector $p_0,$ which is the solution of the system of equations (\ref{mari3}).

For a part of the goods, the indices of which are included in the set $J=N_0\setminus I,$ the equilibrium price is $p_i^0=0, \ i \in J.$ The latter means that this part of the goods does not find consumers on the market of goods of the economic system. But certain funds were spent on their production, which are called the cost of these goods.
Let's introduce the generalized equilibrium vector of prices by putting $p_u=\{p_i^u\}_{i=1}^n$ $p_i^u=p_i^0, \ i \in I,$ \ $ p_i^u=p_i^ c, i \in J,$ where $p_i^c$ is the cost price of the produced goods that do not find consumers on the market of goods of the economic system. Each such equilibrium state will be matched with the level of excess supply
\begin{eqnarray}\label{zaTVYA1}
R=\frac{\langle b-\bar b, p_u\rangle}{\langle b, p_u\rangle},
\end{eqnarray}
where $\langle x, y\rangle=\sum\limits_{i=1}^n x_i y_i, \ x=\{x_i\}_{i=1}^n, \ y=\{y_i\}_ {i=1}^n.$

Finding solutions of the system of equations and inequalities (\ref{mykt17}), (\ref{mykt18}) with the smallest excess supply  will require finding all possible solutions of such a system of equations and inequalities and finding among them the minimum excess supply, which can turn out to be an infeasible problem for large dimensions of the matrix $A.$ Based on Theorem \ref{TVYA} and Proposition \ref{g1} below, the solution of this problem is proposed as a quadratic programming problem.

\begin{de}\label{myktinavitka1}
Let $A$ be an indecomposable nonnegative matrix, and let $b$ be a strictly positive vector that does not belong to the cone formed by the column vectors of the matrix $A.$
The solution $z_0$ of the quadratic programming problem
\begin{eqnarray}\label{myktinavitka2}
\min\limits_{z \in Z}\sum\limits_{i=1}^n[b_i-\sum\limits_{k=1}^n a_{ik} z_k]^2,
\end{eqnarray}
where
$ Z=\{z=\{z_i\}_{i=1}^n, \ z_i \geq 0, i=\overline{1,n}, \sum\limits_{i=1}^n a_{ ki}z_i^0 \leq b_k, \ k=\overline{1,n}\}$
 corresponds to the real consumption vector $\bar b=A z_0\leq b.$
Assume that for the non-empty set $I=\{i, \bar b_i=b_i\}$ there exists an equilibrium price vector  $p_0=\{p_i^0\}_{i=1}^n$ which is a solution of the system equations
\begin{eqnarray}\label{20myktina21}
\sum\limits_{i=1}^n a_{ki} \frac{p_i^0\bar b_i}{\sum\limits_{s=1}^n a_{si}p_s^0}=\bar b_k, \quad k=\overline{1,n}.
\end{eqnarray}
 Then the value
\begin{eqnarray}\label{jajaja1}
  R=\frac{\langle b -\bar b, p_0 \rangle}{\langle b, p_0 \rangle }
\end{eqnarray}
   will be called the generalized excess supply  corresponding to the generalized equilibrium vector of prices $p_0,$ where
$ \langle x,y \rangle=\sum\limits_{i=1}^n x_i y_i,$
$x=\{x_i\}_{i=1}^n, \ y=\{y_i\}_{i=1}^n.$
\end{de}
According to the formula (\ref{jajaja1}), the level of excess supply  for the generalized equilibrium vector is the smallest. This is the state of economic equilibrium below which the economic system cannot fall.

\section{Economic systems capable of operating in sustainable  mode.}

A description of sustainable economic development at the macroeconomic level is proposed in \cite{6Gonchar}, \cite{2Gonchar}, \cite{1Gonchar}, \cite{8Gonchar}. In this section, we formulate the principles of sustainable development at the microeconomic level.
Each business project, which starts the production of a certain group of goods, plans to receive added value. In the production process, there are direct costs for production materials and labor costs.
If a business project is such that it has sales markets with positive added value, then we say that this business project has contracts with suppliers of materials and raw materials and a certain number of workers who sell their labor power. 
The totality of concluded contracts will be characterized by technological mapping. Consider a technological mapping that describes the production of the $i$-th type of product by spending a vector of goods
 $ \{a_{ki}\}_{k=1}^n, i=\overline{1,n},$ on one unit of the i-th manufactured product. Such a technological mapping will be characterized by the "input-output" matrix $|| a_{ki}||_{k,i=1}^n.$ Our goal is to formulate principles that will ensure sustainable development of the economic system.
 Suppose that the economic system produces $ x_i , \ i=\overline{1,n},$ units of the i-th product.
Then the gross input vector  is equal to $X_i= \{x_i a_{ki}\}_{k=1}^n$ and the gross output vector is equal to $Y_i= \{x_i \delta_{ki}\}_{k=1}^n.$
 Each $i-$th consumer in the economic system is characterized by two vectors
$b_i =\{b_{ki}\}_{k=1}^n, i=\overline{1,l}$ and $ C_i=\{c_{ki}\}_{k=1}^n , i=\overline{1,l}.$
 The vectors $b_i =\{b_{ki}\}_{k=1}^n, i=\overline{1,l},$ we call the property vectors.
Every $i$-th consumer wants to exchange the vector $b_i =\{b_{ki}\}_{k=1}^n$ for the vector $C_i=\{c_{ki}\}_{k= 1}^ n,$
which we call the demand vector.
We assume that in the production process all produced goods are distributed according to the rule
\begin{eqnarray}\label{pupvittin1}
\sum\limits_{i=1}^n (Y_i- X_i)=\sum\limits_{i=1}^l b_i.
\end{eqnarray}
Is there such a market mechanism that would provide
such distribution of the product in society for a certain vector of prices?

\begin{de}\label{pupvittin2}
The distribution of the product in society will be called economically expedient if, in the process of production and distribution, in accordance with the concluded agreement, the $i$th consumer owns a set of goods $b_i =\{b_{ki}\}_{k=1}^n, i=\overline{1,l},$ so that the vector $b=\sum\limits_{i=1}^l b_i$ belongs to the interior of the cone formed by the vectors
$ C_i=\{c_{ki}\}_{k=1}^n, i=\overline{1,l}.$
\end{de}

\vskip 5mm
Let's introduce two matrices $C=|c_{ki}|_{k=1, i=1}^{n,l}$ and $B=|b_{ki}|_{k=1, i=1 }^ {n,l}.$

\begin{de}\label{pupvittin2}
If the representation $B=CB_1$ is valid for the matrix $B$
and such that there is a solution to the problem
\begin{eqnarray}\label{pupvittin3}
\sum\limits_{k=1}^l b_{ki}^1d_k=\sum\limits_{s=1}^lb_{is}^1 d_i
\end{eqnarray}
with respect to the vector $d=\{d_k\}_{k=1}^l,$, which belongs to the cone formed by the vectors $C_i^T=\{c_{ki}\}_{k=1 }^n,$ then we will call the distribution of the product in society rational.
\end{de}

In this work, we adhere to the concept of describing economic systems developed in
\cite{Gonchar2}. The essence of this description is that the supply of firms is primary, and the choice of consumers is secondary. But at the same time, it is important that the structure of the supply corresponds to the structure of the demand.
The axioms of this description are presented in \cite{Gonchar2}, where random fields of consumer choice and decision-making by firms are constructed based on these axioms.

We describe firms by technological mappings $ y_i=F_i(x_i), x_i \in X_i $ from the CTM class (compact technological mapping)\cite{10Gonchar2},  \cite{11Gonchar2} \cite{Gonchar2}, and the demand of the $ i$ consumer by the product vector $C_i=\{c_{ki}\}_ {k=1}^n$, which he wants to consume in a certain period   of the economy functioning. Suppose that m firms function in the economic system, which are described by technological mappings $ y_i=F_i(x_i), x_i \in X_i, i=\overline{1,m}, $ from the CTM class.
 If firms have chosen production processes $ x_i \in X_i, y_i \in F(x_i), i=\overline{1,m},$ then the produced final product in the economic system will be equal to $ \sum\limits_{i=1}^m (y_i - x_i)$, and the $i-th$ consumer will receive a vector of goods $b_i, i=\overline{1,l}, $ in accordance with the signed contracts.
The condition of economic equilibrium is the fulfillment of inequalities

\begin{eqnarray}\label{pupvittin4}
\sum\limits_{i=1}^l c_{ki} \frac{ \langle b_i, p \rangle }{\langle C_i, p \rangle}\leq \sum\limits_{i=1}^l b_{ki}, 
\quad k=\overline{1,n}.
\end{eqnarray}
The main idea that we lay down is the following: firms must produce such a quantity of goods, the sale of which will ensure their functioning in the next production cycle, and for this they must have sufficient income.
In addition, the necessary balances for the state must be ensured: state spending on defense, ensuring the freedoms and rights of citizens, public administration, updating fixed assets, funding education, etc.

Is such a distribution of products the result of market exchange based on the existence of an equilibrium vector of prices. Among all the possible distributions of output produced by firms, what matters is when firms will be profitable so that they can undertake the next production cycle. 
Under the firm we also understand any owner of labor force that he sells on the market of the economic system for the appropriate salary, which he spends on its restoration and satisfaction of other needs.
In this case, it is convenient to consider all those engaged in the production of products and services as firms that produce labor and sell it to other firms.
We assume that the economic system has $l$ consumers, $m$ firms, $l>m$ and produces $n$ type of goods. In order for the economic system to function, it is necessary to ensure the protection of the rights and freedoms of citizens, public administration, protection from external threats and etc. For this, the income of firms should be taxed. Let us consider the taxation vector $\pi=\{\pi_i\}_{i=1}^m, \ 0 \leq  \pi_i<1, i=\overline{1,m}, $ whose economic meaning is as follows: $\pi_i \langle p, y_i -x_i \rangle$ is the part of the income that is withdrawn to finance education, defense, public administration and  other institutions for the safe functioning of the state. Here
$\langle p, y_i -x_i \rangle=\sum\limits_{k=1}^n p_k(y_{ki}-x_{ki}),$ where $x_i=\{x_{ki}\}_{k=1}^n,$ $ y_i=\{y_{ki}\}_{k=1}^n,$  are the input and output vectors and $p=\{p_k\}_{k=1}^n$ is an equilibrium price vector.

The general formulation of the problem is as follows:
firms implemented production processes $(x_i, y_i), \ x_i \in X_i, y_i \in F_i(x_i), i=\overline{1,m}$
 as a result of which the final product $ \sum\limits_{i=1}^m (y_i - x_i)$ is produced in the economic system, which must be distributed so that the process of production and distribution of products is continuous. A condition for this is a system of equalities
 \begin{eqnarray}\label{pupvittin5}
\sum\limits_{i=1}^m x_{ki} \frac{(1-\pi_i) \langle y_i, p \rangle }{\langle x_i, p \rangle}+\sum\limits_{i=m+1}^l\frac{C_{ki}D_i(p)}{ \langle C_i, p\rangle}=
\sum\limits_{i=1}^m y_{ki}, 
\quad k=\overline{1,n},
\end{eqnarray}

\begin{eqnarray}\label{tsytsjatin1}
\sum\limits_{i=m+1}^ly_i^0 C_{ki}= \sum\limits_{i=1}^m \pi_i y_{ki}, \quad k=\overline{1,n}.
\end{eqnarray}
Equalities (\ref{tsytsjatin1}) are  material balances to ensure public procurement, defense orders, construction of educational institutions, renewal of fixed assets, etc.
If we put $D_i(p)=y_i^0 \langle C_i, p\rangle$ and take into account (\ref{tsytsjatin1}), then
we get
 \begin{eqnarray}\label{tsytsjatin2}
\sum\limits_{i=1}^m x_{ki} \frac{(1-\pi_i) \langle y_i, p \rangle }{\langle x_i, p \rangle}=\sum\limits_{i=1}^m(1 - \pi_i) y_{ki},
\quad k=\overline{1,n},
\end{eqnarray}
 where we denoted
$$ \langle y_i, p \rangle =\sum\limits_{k=1}^n y_{ki}p_k, \quad \langle x_i, p \rangle =\sum\limits_{k=1}^n x_{ ki}p_k,$$
and $\pi=\{\pi_i\}_{I=1}^m$ is the taxation vector.

In order for the process of functioning of the economic system to be continuous, it is necessary that there should be an equilibrium vector of prices $p_0$, so that
equalities (\ref{pupvittin5}) and inequalities $\langle y_i -x_i, p_0 \rangle >0, i=\overline{1,m},$ were valid.

The ability of the economic system to function continuously will be called the ability to function in the mode of sustainable development.

Below we consider the "input - output" production model.
Suppose that $ x_i^0 $ units of the i-th product are produced in the economic system, $i=\overline{1,n}$.
In this case, the input vector  is equal to $X_i= \{x_i^0 a_{ki}\}_{k=1}^n.$
The output vector is equal to $Y_i= \{x_i^0 \delta_{ki}\}_{k=1}^n.$
Then the problem (\ref{tsytsjatin2}) can be written in the form
\begin{eqnarray}\label{tsytsjatin3}
\sum\limits_{i=1}^n a_{ki} \frac{(1-\pi_i)x_i^0 p_i}{\sum\limits_{s=1}^na_{si}p_s}=(1- \pi_k)x_k^0, \quad k=\overline{1,n}.
\end{eqnarray}
Denote $(1-\pi_i)x_i^0 =x_i, \ i=\overline{1,n},$ then the problem (\ref{tsytsjatin3}) and the conditions of profitability are rewritten in the form

\begin{eqnarray}\label{pupvittin11}
\sum\limits_{i=1}^n a_{ki} \frac{x_i p_i}{\sum\limits_{s=1}^na_{si}p_s}=x_k, \quad k=\overline{1, n},
\end{eqnarray}
\begin{eqnarray}\label{pupvittin12}
 p_i- \sum\limits_{s=1}^na_{si}p_s >0,\quad i=\overline{1,n}.
\end{eqnarray}

\begin{te}\label{1pupvittin13}
Let $ A=||a_{ki}||_{k,i=1}^n$ be a non negative productive indecomposable matrix.
The necessary and sufficient conditions for the  functioning of the economic system in the mode of sustainable development are that the vector $x=\{x_i\}_{i=1}^n$ belongs to the interior of the cone formed by the column vectors of the matrix $A(E-A)^{-1}.$
\end{te}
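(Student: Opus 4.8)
The plan is to reduce the sustainable-development condition to the solvability of the coupled system \eqref{pupvittin11}--\eqref{pupvittin12} and then to recognize that system as a statement about membership of $x$ in the interior of a polyhedral cone, using the machinery of Theorem \ref{mykt19} and Theorem \ref{mykt27}. First I would introduce the substitution $q_i = p_i - \sum_{s=1}^n a_{si} p_s$, so that inequality \eqref{pupvittin12} becomes $q_i > 0$ for all $i$, i.e. $q = (E - A^T) p$ is strictly positive, equivalently $p = (E - A^T)^{-1} q$ with $q \gg 0$; here productivity of $A$ guarantees $(E-A)^{-1} \geq 0$, hence $(E-A^T)^{-1} \geq 0$, and indecomposability upgrades this to strict positivity of $p$ whenever $q \gg 0$. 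The key identity to extract is that $\sum_{s=1}^n a_{si} p_s = p_i - q_i$, so the denominators in \eqref{pupvittin11} are exactly $p_i - q_i = [A^T p]_i$.

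Next I would rewrite \eqref{pupvittin11} in the form $\sum_{i=1}^n a_{ki} z_i = x_k$ where $z_i = x_i p_i / [A^T p]_i$. The point is that, by Theorem \ref{mykt19} (with $I = N_0$, using indecomposability of $A$), for any strictly positive target on the right-hand side that is attainable, the nonlinear system in $p$ is solvable in strictly positive $p$ precisely when the corresponding linear system $A z = x$ has a strictly positive solution $z$, and moreover the profitability inequalities \eqref{pupvittin12} are automatically encoded because one reconstructs $p$ from $z$ via the fixed-point argument of that theorem and checks $p_i > \sum_s a_{si} p_s$ by the same estimate used there (the operator norm of $[A^T Y]$ being strictly less than $1$). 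So the existence of an equilibrium price vector with strict profitability is equivalent to: there exists $z \gg 0$ with $A z = x$. That last condition says exactly that $x$ lies in the interior of the cone generated by the columns $a_i$ of $A$ — but that is not yet the claimed cone.

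The final step is the translation between the columns of $A$ and the columns of $A(E-A)^{-1}$. Recall that the actual gross outputs are $x_i^0 = x_i/(1-\pi_i)$ and the final product is $(E-A) x^0$; the vector that must be distributable is $x = \{(1-\pi_i) x_i^0\}$. Writing $x^0 = (E-A)^{-1} w$ for the final-consumption vector $w$, one has $A x^0 = A(E-A)^{-1} w$, and the requirement that $x^0 \gg 0$ together with the equilibrium/profitability system becomes the requirement that the relevant balance vector lie in the interior of the cone spanned by the columns of $A(E-A)^{-1}$. Concretely, I would show that the system \eqref{pupvittin11}--\eqref{pupvittin12} is solvable with $p \gg 0$ if and only if $x \in \mathrm{int}\, K$ where $K$ is the cone generated by the columns of $A(E-A)^{-1}$: one direction substitutes a strictly positive $\alpha$ with $x = A(E-A)^{-1}\alpha$ and builds $p$ by the fixed-point map of Theorem \ref{mykt19}; the reverse direction takes the solution $z \gg 0$ of $Az = x$ furnished by that theorem and sets $\alpha = (E-A) z$, which is strictly positive because $(E-A)$ applied to a strictly positive vector in a productive indecomposable model... — and here is the main obstacle — need not be strictly positive in general. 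So the crux of the argument is verifying that the correspondence $z \leftrightarrow \alpha$ really does map the strictly positive solutions on one side onto strictly positive vectors on the other; I expect this to require invoking productivity of $A$ together with indecomposability (and possibly Lemma \ref{mant2} to characterize interior-of-cone membership via biorthogonal functionals) rather than a one-line matrix manipulation. Once that bijection between strictly positive cone representations is established, Lemma \ref{mant2} gives the clean necessary-and-sufficient interior condition and the theorem follows.
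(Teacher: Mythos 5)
There is a genuine gap, and it sits exactly where you placed your ``crux'': the step in which you claim that Theorem \ref{mykt19} lets you encode the profitability inequalities (\ref{pupvittin12}) automatically, so that existence of an equilibrium price with $p_i>\sum_s a_{si}p_s$ would be equivalent to the existence of $z\gg 0$ with $Az=x$. That equivalence is false. Theorem \ref{mykt19} only matches solutions of the linear clearing system with price vectors satisfying the clearing equations; the norm estimate for $[AY]^T$ that you invoke is used in that proof only in the all-strict-inequality case (empty $I$) to get a contradiction, and it gives no information about $p_i-\sum_s a_{si}p_s$ for the constructed fixed point. Profitability is an extra constraint: since $z_i=x_ip_i/\sum_s a_{si}p_s$, the inequalities (\ref{pupvittin12}) hold if and only if $z_i>x_i$ for all $i$, i.e. $(E-A)z\gg 0$, and this is precisely what moves the relevant cone from the cone of columns of $A$ to the cone of columns of $A(E-A)^{-1}$. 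Consequently the statement you hope to verify at the end --- that for a productive indecomposable $A$ every strictly positive solution of $Az=x$ has $(E-A)z\gg 0$ --- is not a technical verification but is simply untrue. A concrete counterexample: take
\begin{equation*}
A=\begin{pmatrix}0 & 1/2\\ 1/2 & 0\end{pmatrix},\qquad x=(1,10)^T,\qquad z=(20,2)^T .
\end{equation*}
Here $A$ is productive and indecomposable, $z\gg 0$ and $Az=x$, so $x$ lies in the interior of the cone of the columns of $A$; yet $(E-A)z=(19,-8)$ is not nonnegative, the induced price vector $p=(10,1)$ clears the markets in (\ref{pupvittin11}) but sector $2$ is unprofitable ($p_2-\tfrac12 p_1=-4<0$), and $x$ does not belong to the cone generated by the columns of $A(E-A)^{-1}$. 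So your intermediate characterization ``equilibrium with profitability $\iff x\in\operatorname{int}\,\mathrm{cone}(A)$'' already contradicts the theorem you are trying to prove, and no appeal to productivity, indecomposability or Lemma \ref{mant2} can close that gap.

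The repair is essentially the paper's route, which never passes through the cone of columns of $A$. For necessity one substitutes $\delta_i^0=p_i^0-\sum_s a_{si}p_s^0>0$ directly into (\ref{pupvittin11}) and reads off $x=Ax+A\alpha$ with $\alpha_k=x_k\delta_k^0/\sum_s a_{sk}p_s^0>0$, hence $x=A(E-A)^{-1}\alpha$ with $\alpha\gg 0$ --- a two-line computation in which the profitability data enters explicitly through $\delta^0$. For sufficiency one must start from the representation $x=A(E-A)^{-1}\alpha$, $\alpha\gg 0$, set $b^1=(E-A)^{-1}\alpha$ (so $Ab^1=x$ and, crucially, $b^1-x=\alpha\gg 0$), construct a strictly positive $d$ by a Brouwer fixed-point argument for the system (\ref{pupvittin19}) (using indecomposability to force $d\gg0$ and $\lambda=1$), define $p_i=b_i^1d_i/\sum_s a_{is}b_s^1$, and obtain profitability from the identity $b_i^1/\sum_s a_{is}b_s^1=1+\alpha_i/[A(E-A)^{-1}\alpha]_i>1$; a separate limiting argument with $A+\varepsilon E$ handles degenerate $A$. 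Your idea of rebuilding $p$ by a fixed point is in the right spirit, but it must be anchored to the particular solution $z=b^1=(E-A)^{-1}\alpha$ supplied by the hypothesis, not to an arbitrary strictly positive solution of $Az=x$.
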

\begin{proof}
 Necessity. Assume that there exists a vector of equilibrium prices $p_0$ that satisfies the system of equations (\ref{pupvittin11}) and inequalities (\ref{pupvittin12}). Substituting $p_i^0$ from equalities
\begin{eqnarray}\label{pupvittin13}
 p_i^0- \sum\limits_{s=1}^na_{si}p_s^0 =\delta_i^0,\quad \delta_i^0>0,\quad i=\overline{1,n}.
\end{eqnarray}
into the system of equations (\ref{pupvittin11}) we get
\begin{eqnarray}\label{pupvittin14}
x_k =\sum\limits_{i=1}^n a_{ki} \frac{x_i( \sum\limits_{s=1}^na_{si}p_s^0 +\delta_i^0)}{\sum\limits_{s=1}^na_{si}p_s}=\sum\limits_{i=1}^n a_{ki} x_i +
\sum\limits_{i=1}^n a_{ki} \frac{x_i \delta_i^0}{\sum\limits_{s=1}^na_{si}p_s^0}, \quad k=\overline {1,n}.
\end{eqnarray}
If we introduce the vector $\alpha=\{\alpha_k\}_{k=1}^n$, where
$\alpha_k=\frac{x_k \delta_k^0}{\sum\limits_{s=1}^na_{sk}p_s^0}>0, \ k=\overline{1,n}, $ then we get from of equalities (\ref{pupvittin14}) the equality $x=A(E-A)^{-1}\alpha.$ The latter proves the necessity.
 
Sufficiency. From the very beginning, we assume that $ A^{-1}$ exists. For a diagonal matrix
\begin{eqnarray}\label{pupvittin15}
X=||\delta_{ij}x_j||_{i,j=1}^n
\end{eqnarray}
the representation $X=AB_1$ is true, where $B_1=A^{-1}X=||a_{ki}^{-1} x_i||_{k,i=1}^n.$
From the assumptions of Theorem \ref{1pupvittin13} we have
\begin{eqnarray}\label{pupvittin16}
b_k^1=\sum\limits_{i=1}^na_{ki}^{-1} x_i=[(E-A)^{-1}\alpha]_k>0, \ k=\overline{1,n }.
\end{eqnarray} 
 We will prove that the system of equations
\begin{eqnarray}\label{pupvittin17}
\sum\limits_{k=1}^na_{ki}^{-1} x_i d_k=b_i^1 d_i, \quad i=\overline{1,n},
\end{eqnarray}
has a strictly positive solution belonging to the cone formed by the vectors $\{a_{ki}\}_{k=1}^n, \ i=\overline{1,n}.$ From (\ref{pupvittin16}) we get
\begin{eqnarray}\label{pupvittin18}
x_k =\sum\limits_{i=1}^na_{ki} b_i^1.
\end{eqnarray}
The problem (\ref{pupvittin17}) is equivalent to the problem
\begin{eqnarray}\label{pupvittin19}
d_k=\sum\limits_{i=1}^na_{ik} \frac{b_i^1}{ x_i }d_i=\sum\limits_{i=1}^na_{ik} \frac{b_i^1}{ \sum\limits_{k=1}^na_{ik} b_k^1}d_i, \quad i=\overline{1,n}.
\end{eqnarray}
Let's introduce the denotation
\begin{eqnarray}\label{phuph0}
 u_{i k}=a_{ik}\frac{b_i^1}{ \sum\limits_{k=1}^na_{ik} b_k^1}, \quad i, k=\overline{1,n},
\end{eqnarray}
and the matrix $U=|u_{i k}|_{i,k=1}^n.$
 Consider a nonlinear system of equations
\begin{eqnarray}\label{pupvittin20}
d_k=\frac{d_k+\sum\limits_{i=1}^n u_{ik}d_i}{1+ \sum\limits_{k=1}^n \sum\limits_{i=1}^nu_{ik } d_i},
\quad k=\overline{1,n},
\end{eqnarray}
on the set $D=\{d=\{d_i\}_{i=1}^n, \ d_i\geq0, \ i=\overline{1,n}, \ \sum\limits_{i=1}^ n d_i=1\}.$

Thanks to Schauder's theorem \cite{Nirenberg}, there exists a solution of the system of equations (\ref{pupvittin20}) in the set $D.$ The system of equations (\ref{pupvittin20})
can be written as
\begin{eqnarray}\label{phuph1}
 \sum\limits_{i=1}^n u_{ik} d_i=\lambda d_k,
\end{eqnarray}
where $\lambda=\sum\limits_{k=1}^n \sum\limits_{i=1}^n u_{ik}d_i.$

We will prove that $\lambda>0$ and the solution $d=\{d_i\}_{i=1}^n$ of the system of equations is strictly positive due to the indecomposability of the matrix $A$.
Indeed, the vector $d=\{d_i\}_{i=1}^n$ satisfies the system of equations
(\ref{phuph1}), which can be written in operator form
\begin{eqnarray}\label{phuph3}
U^T d=\lambda d,
\end{eqnarray}
or
\begin{eqnarray}\label{phuph3}
[U^T]^{n-1} d=\lambda^{n-1} d.
\end{eqnarray}
Due to the fact that the vector $d$ belongs to the set $D,$ and the matrix $U$ is non-negative and indecomposable, the vector $[U^T]^{n-1} d$ is strictly positive. It follows that $\lambda>0$ and the vector $d$ is strictly positive.
Let's prove that $\lambda=1.$
The problem (\ref{phuph1}) is equivalent to the problem
\begin{eqnarray}\label{phuph2}
\lambda \sum\limits_{k=1}^na_{ki}^{-1} x_i d_k=b_i^1 d_i, \quad i=\overline{1,n}.
\end{eqnarray}
The summation by index $i$ of the left and right parts of equalities (\ref{phuph2}) gives $\lambda \sum\limits_{k=1}^n b_k^1 d_k=\sum\limits_{i=1}^nb_i^ 1 d_i.$
The latter proves necessary.
Therefore, there is a solution to the problem (\ref{pupvittin19}). From the Theorem \ref{100wickkteen3} and equalities (\ref{pupvittin19}), if we put
\begin{eqnarray}\label{pupvittin21}
p_i= \frac{b_i^1d_i}{ \sum\limits_{s=1}^na_{is} b_s^1}
\end{eqnarray}
and introduce the price vector $p=\{p_i\}_{i=1}^n$, then
$d_k=\sum\limits_{s=1}^n a_{sk}p_s, \quad k=\overline{1,n}.$

Or, in relation to the equilibrium price vector, we obtain a system of equations
\begin{eqnarray}\label{pupvittin22}
p_i= \frac{b_i^1}{ \sum\limits_{s=1}^na_{is} b_s^1} \sum\limits_{s=1}^n a_{si}p_s, \quad i=
\overline{1,n}.
\end{eqnarray}

Due to the fact that the representation $x=A(E-A)^{-1}\alpha$ is valid for the vector $x$,  where $\alpha$ is a strictly positive vector, then from (\ref{pupvittin18}) we get
$b^1=(E-A)^{-1}\alpha,$ where $b^1=\{b_k^1\}_{k=1}^n$. We have from the last one
\begin{eqnarray}\label{pupvittin23}
\frac{b_i^1}{ \sum\limits_{s=1}^na_{is} b_s^1}=1+\frac{\alpha_i}{\sum\limits_{k=1}^\infty\sum \limits_{s=1}^n a_{is}^k\alpha_s}>1, \quad i=\overline{1,n},
\end{eqnarray}
where we have denoted by $ a_{is}^k$ the matrix elements of the matrix $A^k.$
 Therefore, the constructed solution of the problem (\ref{pupvittin19}) is such that the inequalities 
\begin{eqnarray}\label{pupvittin24}
p_i- \sum\limits_{s=1}^n a_{si}p_s>0, \quad i=\overline{1,n},
\end{eqnarray}
hold, if the conditions of Theorem \ref{1pupvittin13} are valid.

Assume that the matrix $A$ is degenerate. The proof of necessity is the same as in the previous case. To prove sufficiency, consider a non-degenerate matrix $A+\varepsilon E,$ where $\varepsilon>0 $ and small enough such that $(A+\varepsilon E)^{-1}$ exists for all $\varepsilon>0. $ This is possible due to the fact that $\varepsilon=0$ is a root of the equation $\det(A+\varepsilon E)=0$ and this equation has a finite number of roots.
We assume that $x(\varepsilon)=(A+\varepsilon E) (E-A-\varepsilon E)^{-1}\alpha, $ where the vector $ \alpha$ is strictly positive. As before, repeating all the above arguments, we come to the fact that the vector $d(\varepsilon)=\{d_k(\varepsilon)\}_{k=1}^n$ satisfies the system of equations
\begin{eqnarray}\label{pupvittin25}
d_k(\varepsilon)=\sum\limits_{i=1}^n(a_{ik}+\delta_{ik}\varepsilon) \frac{b_i^1(\varepsilon)}{ \sum\limits_{k= 1}^n(a_{ik}+\delta_{ik}\varepsilon) b_k^1(\varepsilon)}d_i(\varepsilon), \quad i=\overline{1,n},
\end{eqnarray}
and it is strictly positive.

Due to the fact that $b^1(\varepsilon)=\{b_i^1(\varepsilon)\}_{i=1}^n=(E-A-\varepsilon E)^{-1}\alpha,$
$(A+\varepsilon E)b^1(\varepsilon)=(A+\varepsilon E)(E-A-\varepsilon E)^{-1}\alpha$ we get that there is a limit
$$\lim_{\varepsilon \to 0}b^1(\varepsilon)=(E-A)^{-1}\alpha=b^1=\{b^1_k\}_{k=1}^n. $$
 Because $d(\varepsilon)$ is bounded, there exists a subsequence $\varepsilon_n$ that goes to zero, so that there is a limit
$$\lim_{\varepsilon_n \to 0}d(\varepsilon_n)=d=\{d_k\}_{k=1}^n,$$
which satisfies the system of equations (\ref{pupvittin19}).

Following the above arguments, we come to the fact that the introduced vector $p=\{p_i\}_{i=1}^n$ satisfies the system of equations (\ref{pupvittin22}). Let us prove that the vector $x$ satisfies the system of equations (\ref{pupvittin11}). Really, the equality 
\begin{eqnarray}\label{pupvittin26}
\frac{x_i p_i}{\sum\limits_{s=1}^na_{si}p_s}=\frac{b_i^1x_i}{\sum\limits_{s=1}^na_{is}b^1_s} =b_i^1
\end{eqnarray}
is valid.

Multiplying the left and right parts by $a_{ki}$ and summing over $i,$ we get
\begin{eqnarray}\label{pupvittin27}
\sum\limits_{i=1}^n a_{ki} \frac{x_i p_i}{\sum\limits_{s=1}^na_{si}p_s}=\sum\limits_{i=1}^n a_{ki}b_i^1=x_k, \quad k=\overline{1,n}.
\end{eqnarray}
The remaining statements of Theorem \ref{1pupvittin13} are obtained as before. The theorem \ref{1pupvittin13} is proved.
\end{proof}
 
The following Theorems are central in this section.
\begin{te}\label{tsytsjatintsytsja1}
Let $A=||a_{ki}||_{k,i=1}$  be an indecomposable and productive matrix for the  "input - output " production model. Suppose that  the strictly positive gross output vector
$x=\{x_i\}_{i=1}^n$ satisfies the system of equations
\begin{eqnarray}\label{tax1}
x_k-\sum\limits_{i=1}^na_{ki} x_i=c_k+e_k - i_k, \quad k=\overline{1,n},   
\end{eqnarray}
and the price vector 
$p=\{p_i\}_{i=1}^n$ satisfies the system of equations
\begin{eqnarray}\label{tax2}
p_ i-\sum\limits_{s=1}^na_{si} p_s  =\delta_i,\quad i=\overline{1,n},
\end{eqnarray}
where $c=\{c_i\}_{i=1}^n,$ $e=\{e_i\}_{i=1}^n,$ 
$i=\{i_k\}_{k=1}^n,$ 
$\delta=\{\delta_i\}_{i=1}^n, $ 
$c_k \geq 0,$ $ e_k \geq 0,$ $i_k \geq 0, $ $\delta_k >0,$ $k=\overline{1,n},$ are vectors of final consumption, export, import and added values, respectively. Then there exists a vector of taxation 
$\pi=\{\pi_i\}_{i=1}^n$ such that the vector 
$p=\{p_i\}_{i=1}^n$
 satisfies also the system of equations
\begin{eqnarray}\label{tsytsjatintsytsja2}
\sum\limits_{i=1}^n a_{ki} \frac{(1-\pi_i)x_i p_i}{\sum\limits_{s=1}^na_{si}p_s}= (1-\pi_k)x_k, \quad k=\overline{1,n}.
\end{eqnarray}
\end{te}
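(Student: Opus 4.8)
The plan is to reduce the system (\ref{tsytsjatintsytsja2}) to a Perron-type fixed-point equation for the matrix $A+\Delta$, where $\Delta=\mathrm{diag}(\delta_i/p_i)$, and then to use the observation that the price equation (\ref{tax2}) already exhibits $p$ as a strictly positive left eigenvector of $A+\Delta$ with eigenvalue one.

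First I would record the positivity we need. Since $A$ is productive, $(E-A^{T})^{-1}=\sum_{k\ge0}(A^{T})^{k}\ge E$ exists and is non-negative, so (\ref{tax2}) gives $p=(E-A^{T})^{-1}\delta\ge\delta>0$, and $p_i-\delta_i=\sum_{s=1}^{n}a_{si}p_s>0$ because the indecomposable matrix $A$ has no zero column. Hence every denominator in (\ref{tsytsjatintsytsja2}) is strictly positive. Now, for a candidate taxation vector $\pi$ put $u_i:=\frac{(1-\pi_i)x_i p_i}{p_i-\delta_i}$; then $(1-\pi_k)x_k=u_k\frac{p_k-\delta_k}{p_k}=u_k-\frac{\delta_k}{p_k}u_k$, so (\ref{tsytsjatintsytsja2}) is equivalent to
\begin{eqnarray*}
\sum_{i=1}^{n}a_{ki}u_i+\frac{\delta_k}{p_k}u_k=u_k,\qquad k=\overline{1,n},
\end{eqnarray*}
that is, to $(A+\Delta)u=u$. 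Conversely, any strictly positive solution $u$ of $(A+\Delta)u=u$ yields a taxation vector after rescaling, so it suffices to produce one such $u$.

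To get it I would repeat the fixed-point scheme from the proof of Theorem \ref{1pupvittin13}: the map $\varphi(u)=\{\varphi_k(u)\}_{k=1}^{n}$ with $\varphi_k(u)=\frac{u_k+[(A+\Delta)u]_k}{1+\sum_{j=1}^{n}[(A+\Delta)u]_j}$ is continuous and maps the simplex $D=\{u=\{u_i\}_{i=1}^{n},\ u_i\ge0,\ \sum_{i=1}^{n}u_i=1\}$ into itself, so by Brouwer's Theorem it has a fixed point $u^{0}$, which satisfies $(A+\Delta)u^{0}=\mu u^{0}$ with $\mu=\sum_{j=1}^{n}[(A+\Delta)u^{0}]_j$. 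The one substantive step, and the only place where hypothesis (\ref{tax2}) is used, is the identification $\mu=1$: since $\Delta p=\delta$ componentwise, (\ref{tax2}) says precisely $(A+\Delta)^{T}p=A^{T}p+\delta=p$, so pairing the eigen-relation with $p$ gives $\mu\langle p,u^{0}\rangle=\langle(A+\Delta)^{T}p,u^{0}\rangle=\langle p,u^{0}\rangle$, and $\langle p,u^{0}\rangle>0$ because $p>0$ and $u^{0}\in D$; hence $\mu=1$. Strict positivity of $u^{0}$ follows from indecomposability of $A+\Delta$ exactly as in Theorem \ref{1pupvittin13} (from $u^{0}=(A+\Delta)u^{0}$ one gets $2^{\,n-1}u^{0}=(E+A+\Delta)^{n-1}u^{0}$ with $(E+A+\Delta)^{n-1}>0$).

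Finally I would rescale: set $u=c\,u^{0}$ with $c=\min_{1\le i\le n}\frac{x_i p_i}{u_i^{0}(p_i-\delta_i)}>0$ and define $\pi_i=1-\frac{c\,u_i^{0}(p_i-\delta_i)}{x_i p_i}$. Then $0\le\pi_i<1$ — the strict upper bound because $c\,u_i^{0}(p_i-\delta_i)/(x_i p_i)>0$, the lower bound by the choice of $c$ — and by construction $(1-\pi_i)x_i=c\,u_i^{0}(p_i-\delta_i)/p_i$, so substituting back into (\ref{tsytsjatintsytsja2}) and using $(A+\Delta)(c\,u^{0})=c\,u^{0}$ returns the required equalities. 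Thus the real obstacle is only the equality $\mu=1$, which becomes immediate once one notices that (\ref{tax2}) is exactly the statement that $p$ is a positive left eigenvector of $A+\Delta$ for the eigenvalue one; the balance (\ref{tax1}) plays no role in the argument beyond guaranteeing that the data of the model are mutually consistent.
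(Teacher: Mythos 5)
Your proof is correct, and it shares the overall skeleton of the paper's argument --- reduce (\ref{tsytsjatintsytsja2}) to a linear problem admitting a strictly positive solution, produce that solution by a Brouwer fixed-point argument, then recover $\pi$ by choosing a scaling constant --- but the middle step is done by a genuinely different mechanism. The paper passes to value coordinates $\bar a_{ki}=p_k a_{ki}/p_i$, $X_i=x_ip_i$, sets $V_i=(1-\pi_i)X_i/\sum_{s}\bar a_{si}$, and invokes Lemma \ref{pupvittin7}, where the eigenvalue is forced to equal $1$ because the matrix normalized by its row sums is stochastic; equation (\ref{tax2}) enters only at the end, to rewrite $\sum_s\bar a_{si}=1-\Delta_i/X_i$ in the formula for $\pi_i$. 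You instead stay in natural units, reduce to $(A+\Delta)u=u$ with $\Delta=\mathrm{diag}(\delta_i/p_i)$, and pin the eigenvalue to $1$ by noting that (\ref{tax2}) says precisely $(A+\Delta)^{T}p=p$, so pairing the Brouwer eigenrelation with the strictly positive left eigenvector $p$ forces $\mu=1$, with positivity of $u^{0}$ from indecomposability as in the paper. The two reductions are in fact equivalent: your system is the paper's (\ref{tax3}) under the diagonal substitution $V_i=p_iu_i$, so you are re-proving the needed case of Lemma \ref{pupvittin7} inline, with the duality argument replacing the stochastic normalization --- arguably a more transparent explanation of why the eigenvalue is $1$, at the cost of not reusing the lemma. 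Your extremal choice of $c$ gives $\pi_i=0$ at the minimizing index, which is admissible under the paper's convention $0\le\pi_i<1$ and corresponds exactly to the ``best'' taxation system of Corollary \ref{tinaKmoja1}; taking any smaller $c$ reproduces the strictly positive $\pi_i$ of the paper's proof.
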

\begin{proof}  
Since the price vector $p=\{p_i\}_{i=1}^n$ is a strictly positive one, as a solution to the set of equations (\ref{tax2}), let's  denote $X_i=x_i p_i,$ $\Delta_i=\delta_i x_i,$ $\bar a_{ki}=\frac{p_k a_{ki}}{p_i}.$ 
In these denotations, the system of equations  (\ref{tsytsjatintsytsja2}) is written in the form
\begin{eqnarray}\label{2gonpupvittin2}
\sum\limits_{i=1}^n \bar a_{ki} \frac{(1-\pi_i)X_i }{\sum\limits_{s=1}^n\bar a_{si}}= (1-\pi_k)X_k, \quad k=\overline{1,n}.
\end{eqnarray}
Let us put
$$ V_i=\frac{X_i(1-\pi_i)}{\sum\limits_{s=1}^n\bar a_{si}},  \quad i=\overline{1,n}.$$
Then the vector  $V=\{V_i\}_{i=1}^n$ satisfies the system of equations
\begin{eqnarray}\label{tax3}
 \sum\limits_{i=1}^n \bar a_{ki}V_i=\sum\limits_{s=1}^n\bar a_{sk}V_k,  \quad k=\overline{1,n}.  
\end{eqnarray}
On the basis of Lemma  \ref{pupvittin7}, there exists a strictly positive solution of this system of equations, which is determined with accuracy up to a constant. Let us 
denote $V_0=\{V_i^0\}_{i=1}^n$ 
the solution of this system of equations, the sum of the components of which is equal to one. Then 
$V=\{V_i\}_{i=1}^n=c_0V_0=\{c_0 V_i^0\}_{i=1}^n.$
From here
$$1-\pi_i=\frac{c_0V_i^0\sum\limits_{s=1}^n\bar a_{sk}}{X_i} =c_0\frac{V_i^0}{X_i}\left(1-\frac{\Delta_i}{X_i}\right),\quad i=\overline{1,n},$$
or
\begin{eqnarray}\label{tax4}
\pi_i=1-c_0\frac{V_i^0}{X_i}\left(1-\frac{\Delta_i}{X_i}\right), \quad i=\overline{1,n}. 
\end{eqnarray}
The constant $c_0>0$ can be chosen such that the inequalities 
$$ 1-c_0\frac{V_i^0}{X_i}\left(1-\frac{\Delta_i}{X_i}\right)>0, \quad i=\overline{1,n}, $$
are satisfied. 
Theorem \ref{tsytsjatintsytsja1} is proved.
\end{proof}
\begin{ce}\label{tinaKmoja1}
The best system of taxation $\pi=\{\pi_i\}_{i=1}^n$ under the condition that the final product will be created in the economic system is one that satisfies the equality
\begin{eqnarray}\label{tinaKmoja2}
\frac{1-\pi_i}{\frac{V_i^0}{X_i}(1-\frac{\Delta_i}{X_i})}=\frac{1}{\max\limits_{1\leq i\leq n}\frac{V_i^0}{X_i}\left(1-\frac{ \Delta_i}{X_i}\right)}, \quad i=\overline{1,n}.
\end{eqnarray}
\end{ce}
\begin{proof}
We choose the constant $c_0$ in the formula (\ref{tax4}) so that the value of 
$\pi_i, \ i=\overline{1,n}, $ is the smallest. For this, we should put
\begin{eqnarray}\label{tinaKmoja3}
c_0 =\frac{1}{\max\limits_{1\leq i\leq n}\frac{V_i^0}{X_i}\left(1-\frac{\Delta_i}{X_i}\right)}.
\end{eqnarray}
Then equalities (\ref{tinaKmoja2}) will hold.
\end{proof}

In the following Theorem \ref{main1}, we find out for the taxation system 
$\pi=\{\pi_i\}_{i=1}^n,$  the conditions  under which in the economic system  a final product can be created  and the economy  can function in the mode of sustainable development.

\begin{te}\label{main1}
 Let the matrix $ A=||a_{ki}||_{k,i=1}^n$ be non-decomposable and productive for the  "input - output" economy model. Then for the equilibrium price vector $p=\{p_i\}_{i=1}^n,$ which is a solution of the system of equations (\ref{tax2}), there always exists  a solution of the system of equations (\ref{tsytsjatintsytsja2})  with respect to the vector $x=\{x_i\}_{i=1}^n,$ which satisfies the system of equations (\ref{tax1})
with a strictly positive right-hand side of this system of equations, provided that the taxation system
$\pi=\{ \pi_i\}_{i=1}^n,$ satisfies the conditions
\begin{eqnarray}\label{main2}
 0 < \pi_i\leq 1- b\left(1-\frac{\Delta_i}{X_i}\right), \quad i=\overline{1,n},
 \end{eqnarray}
 for a certain  $b,$ satisfying inequalities
 $$ \max\limits_{1 \leq i \leq n}(1- \pi_i)<b < \frac{1}{ \max\limits_{1 \leq i \leq n}\left(1-\frac{\Delta_i}{X_i}\right)},  $$
where as before $\Delta_k=\delta_k x_k, X_k =p_k x_k, \ k=\overline{1,n}$
 \end{te}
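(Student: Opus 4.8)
The plan is to work in the transformed coordinates already introduced in Theorem \ref{tsytsjatintsytsja1}, namely $X_i=p_i x_i$, $\Delta_i=\delta_i x_i$, and $\bar a_{ki}=p_k a_{ki}/p_i$, and to run the construction of Theorem \ref{tsytsjatintsytsja1} in reverse: instead of starting from a prescribed $x$ and producing a taxation vector $\pi$, I would start from a taxation vector $\pi$ satisfying (\ref{main2}) and show that there is a compatible gross output vector $x$. The key identity from that proof is that solving (\ref{tsytsjatintsytsja2}) forces $V_i:=X_i(1-\pi_i)/\sum_{s}\bar a_{si}$ to satisfy the homogeneous balance system (\ref{tax3}), whose strictly positive solution $V_0=\{V_i^0\}$ (unique up to scale by Lemma \ref{pupvittin7}, since $\bar a$ inherits indecomposability from $A$) exists. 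So the reconstruction goes: fix $V_0$ normalized to $\sum_i V_i^0=1$, pick a constant $c_0>0$, set $V_i=c_0V_i^0$, and then $X_i$ is determined by $X_i(1-\pi_i)=c_0 V_i^0\sum_s\bar a_{si}$. Since $\sum_s\bar a_{si}=p_i^{-1}\sum_s p_s a_{si}$ and, from (\ref{tax2}), $\sum_s p_s a_{si}=p_i-\delta_i$, we get $\sum_s \bar a_{si}=1-\delta_i/p_i$. But $\Delta_i/X_i=\delta_i/p_i$, so $\sum_s\bar a_{si}=1-\Delta_i/X_i$; this is implicit, so one must instead read it as $X_i(1-\pi_i)=c_0V_i^0(1-\Delta_i/X_i)$, i.e. $x_i p_i(1-\pi_i)=c_0 V_i^0\,(p_i-\delta_i)/p_i$, which gives $x_i=c_0 V_i^0(p_i-\delta_i)/\big(p_i^2(1-\pi_i)\big)>0$ directly — no fixed-point needed here, the reconstruction is explicit.

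Next I would verify that this explicit $x$ indeed solves (\ref{tsytsjatintsytsja2}): by construction $V=\{V_i\}$ with $V_i=c_0V_i^0$ solves (\ref{tax3}), and retracing the algebra of Theorem \ref{tsytsjatintsytsja1} backwards, (\ref{tax3}) for $V$ is equivalent to (\ref{2gonpupvittin2}) for $X$, which is (\ref{tsytsjatintsytsja2}) for $x$. The existence of the equilibrium price vector $p$ solving (\ref{tax2}) with $\delta_i>0$ is guaranteed because $A$ is productive and indecomposable: $p=(E-A^T)^{-1}\delta$ works for any strictly positive $\delta$, and productivity of $A$ makes this strictly positive. So the genuine content of the theorem is the last requirement: that the resulting $x$ produce a \emph{strictly positive} final-consumption-plus-net-export vector, i.e. that $x-A x$ has all positive components, which by (\ref{tax1}) is exactly the demand $c+e-i>0$.

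The main obstacle, therefore, is showing $[(E-A)x]_k>0$ for all $k$ under hypothesis (\ref{main2}). Here I would use the scaling freedom in $c_0$ together with the bound on $b$. Writing $x_i = c_0 V_i^0 (p_i-\delta_i)/(p_i^2(1-\pi_i))$ and using (\ref{main2}) in the form $1-\pi_i\ge b(1-\Delta_i/X_i)=b(p_i-\delta_i)/p_i$, one gets the componentwise upper bound $x_i\le c_0 V_i^0/(b\,p_i)$; the lower bound from $1-\pi_i< \text{(something)}$ — actually from $\max_i(1-\pi_i)<b$ one does not directly bound $x_i$ from below, so instead I would argue as follows. The vector $\bar x$ defined with the extreme choice $1-\pi_i = b(1-\Delta_i/X_i)$ is exactly $c_0$ times a fixed vector $w$ with $w_i=V_i^0/(b p_i)$; then $[(E-A)\bar x]$ has a sign pattern independent of $c_0$, and one computes it equals $c_0\big(\text{[homogeneous part] } + \text{ correction from } \delta\big)$ which — using that $V_0$ solves the balance (\ref{tax3}) — collapses to something manifestly positive because $\delta>0$. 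For a general admissible $\pi$ with $1-\pi_i$ larger (hence $x_i$ smaller), monotonicity of $(E-A)^{-1}\ge 0$ (productivity!) propagates positivity. I would expect this positivity verification, and in particular pinning down the exact role of the lower inequality $\max_i(1-\pi_i)<b$ versus the upper one $b<1/\max_i(1-\Delta_i/X_i)$, to be the delicate step; everything else is the reversible bookkeeping of Theorem \ref{tsytsjatintsytsja1} plus standard productive-matrix facts.
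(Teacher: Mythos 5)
Your setup coincides with the paper's: pass to the value variables $X_i=p_ix_i$, $\Delta_i=\delta_ix_i$, $\bar a_{ki}=p_ka_{ki}/p_i$, observe that (\ref{tsytsjatintsytsja2}) forces $V_i=(1-\pi_i)X_i/\sum_s\bar a_{si}$ to solve the homogeneous balance system handled by Lemma \ref{pupvittin7}, and reconstruct $X$ explicitly up to the free constant $c_0$ (your formula $x_i=c_0V_i^0(p_i-\delta_i)/(p_i^2(1-\pi_i))$ is exactly the paper's (\ref{main6})); your computation that the extreme choice $1-\pi_i=b(1-\Delta_i/X_i)$ yields $p_k[(E-A)x]_k=\frac{c_0}{b}\frac{\delta_k}{p_k}V_k^0>0$ is also correct. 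The genuine gap is the passage from this extreme case to a general admissible $\pi$: the claim that ``monotonicity of $(E-A)^{-1}\ge 0$ propagates positivity'' is not a valid inference. Nonnegativity of $(E-A)^{-1}$ lets you conclude $x\ge 0$ from $(E-A)x\ge 0$; it does not let you conclude $(E-A)x>0$ from $x\le\bar x$ and $(E-A)\bar x>0$, because lowering the $k$-th component of $x$ \emph{decreases} $[(E-A)x]_k$ while lowering the off-diagonal components increases it, so a componentwise comparison with $\bar x$ gives bounds in the wrong direction for the diagonal term. In particular your sketch never actually uses the hypothesis $\max_i(1-\pi_i)<b$, which is precisely what controls that diagonal term.

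The paper closes this step by a direct estimate on the solution itself: since $X$ satisfies (\ref{main3}) and (\ref{main2}) gives $1-\pi_i\ge b\sum_s\bar a_{si}$, one has $\sum_i\bar a_{ki}X_i\le\frac{1}{b}\sum_i\bar a_{ki}\frac{(1-\pi_i)X_i}{\sum_s\bar a_{si}}=\frac{1-\pi_k}{b}X_k$, whence $X_k-\sum_i\bar a_{ki}X_i\ge\bigl(1-\frac{1-\pi_k}{b}\bigr)X_k>0$ by $\max_i(1-\pi_i)<b$; the upper bound $b<1/\max_i(1-\Delta_i/X_i)$ only guarantees that the admissible interval for $\pi_i$ in (\ref{main2}) is nonempty. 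Your argument is repairable in your own parametrization: writing $x_i=d_i\bar x_i$ with $d_i=b(1-\Delta_i/X_i)/(1-\pi_i)$, the hypotheses give $1-\delta_i/p_i<d_i\le 1$, and then $p_k[(E-A)x]_k=\frac{c_0}{b}\bigl(d_kV_k^0-\sum_i\bar a_{ki}d_iV_i^0\bigr)\ge\frac{c_0V_k^0}{b}\bigl(d_k-(1-\delta_k/p_k)\bigr)>0$, which is the paper's inequality in disguise --- but as written, the monotonicity step does not stand.
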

 \begin{proof}
Here and further we use the denotations of Theorem \ref{tsytsjatintsytsja1}. Let the economic system be described by the "input-output" model, where the output vector satisfies the system of equations (\ref{tsytsjatintsytsja2}), and the equilibrium price vector is determined by the system of equations (\ref{tax2}).

Let's put $X_k=x_kp_k, $ $\bar A=|\bar a_{ki}|_{k, i=1}^n, \bar a_{ki}=\frac {p_k a_{ki}}{p_i} ,$
 $C_k=c_kp_k,$
$E_k=e_kp_k,$ $I_k=i_kp_k,$ $\Delta_k=\delta_kp_k.$
In these denotations, the system of equations (\ref{tsytsjatintsytsja2}) can be written in the form
\begin{eqnarray}\label{main3}
\sum\limits_{i=1}^n \bar a_{ki} \frac{(1-\pi_i)X_i }{\sum\limits_{s=1}^n\bar a_{si}}= (1 -\pi_k)X_k, \quad k=\overline{1,n}.
\end{eqnarray}
and the system of equations (\ref{tax2}) is in the form
\begin{eqnarray}\label{main7}
X_i-\sum\limits_{s=1}^n \bar a_{si} X_i=\Delta_i, \quad i=\overline{1,n}.
\end{eqnarray}
Introducing the denotation
\begin{eqnarray}\label{main4}
\frac{(1-\pi_i)X_i }{\sum\limits_{s=1}^n\bar a_{si}}=X_i^0, \quad i=\overline{1,n},
\end{eqnarray}
 the system of equations (\ref{main3}) can be rewritten in the form
 \begin{eqnarray}\label{main5}
\sum\limits_{i=1}^n \bar a_{ki} X_i^0 = \sum\limits_{s=1}^n\bar a_{sk} X_k^0, \quad k=\overline{1 ,n}.
\end{eqnarray}
On the basis of Lemma \ref{pupvittin7}, there is a strictly positive solution of this system of equations, which is determined with accuracy up to a constant.
Let $X_0=\{X_i^0\}_{i=1}^n$ be a strictly positive solution of the system of equations (\ref{main5}) whose component sum is equal to one. Then any other is equal to
$c_0X_0=\{c_0X_i^0\}_{i=1}^n$, where $c_0>0.$
For the components of the vector $X=\{X_i\}_{i=1}^n$
we get the formulas
 \begin{eqnarray}\label{main6}
 X_i =c_0 \frac{\sum\limits_{s=1}^n\bar a_{si}}{(1-\pi_i)} X_i^0=c_0\frac{\left(1-\frac{\Delta_i}{X_i} \right)}{(1-\pi_i)}X_i^0=c_0\frac{\left(1-\frac{\delta_i}{p_i}\right)}{(1-\pi_i)}X_i^0, \quad i=\overline{1,n }.
\end{eqnarray}
To prove that the obtained solution is a vector of gross output for some vector of final consumption, it is necessary to establish that the inequalities
\begin{eqnarray}\label{main7}
X_k-\sum\limits_{i=1}^n \bar a_{ki} X_i>0, \quad k=\overline{1,n},
\end{eqnarray}
are valid.
Because the inequalities (\ref{main2}) hold, the inequalities 
\begin{eqnarray}\label{main8}
1 \leq \frac{1- 
\pi_i}{b(1-\frac{\Delta_i}{X_i})}=\frac{1- 
\pi_i}{b\sum\limits_{s=1}^n \bar a_{si} }
\end{eqnarray}
are true,
therefore
$$X_k-\sum\limits_{i=1}^n \bar a_{ki} X_i \geq X_k-\frac{1}{b}\sum\limits_{i=1}^n \bar a_{ki}\frac{(1- 
\pi_i) X_i}{\sum\limits_{s=1}^n \bar a_{si} }=$$
\begin{eqnarray}\label{main9}
\left(1-\frac{1-\pi_k}{b} \right)X_k>0, \quad k=\overline{1,n}.
\end{eqnarray}
So,
\begin{eqnarray}\label{main10}
X_k-\sum\limits_{i=1}^n \bar a_{ki} X_i=Y_k, \quad k=\overline{1,n},
\end{eqnarray}
where the vector $Y=\{Y_k\}_{k=1}^n$ has strictly positive components.
Introducing the denotations $Y=C+E-I, $
 $C=\{C_i\}_{i=1}^n,$ $E=\{E_i\}_{i=1}^n,$ $I=\{I_i\}_{i=1} ^n,$
we get that the vector $X=\{X_i\}_{i=1}^n$ is the gross output vector for the tax system under consideration. The proof of Theorem \ref{main1} ends with a remark about the problem (\ref{tsytsjatintsytsja2}) and (\ref{main3}) are equivalent due to the strict positivity of the price equilibrium vector.
\end{proof}

The following Theorem \ref{2gonpupvittin3} considers the case of a taxation system under which the economic system can still function in the mode of sustainable development. This case is interesting because the added value created in the relevant industry is equal to the value of the final product created in the same industry in a state of economic equilibrium.

\begin{te}\label{2gonpupvittin3}
Let the matrix $\bar A=||\bar a_{ki}||_{k,i=1}^n$ be indecomposable, whose matrix elements satisfy the system of inequalities
\begin{eqnarray}\label{main11}
 \sum\limits_{s=1}^n \bar a_{si}<1, \quad i=\overline{1,n}.
\end{eqnarray}
Then for taxation systems $\pi=\{\pi_i\}_{i=1}^n,$ where $ \pi_i= \frac{\delta_i}{p_i}=\frac{\Delta_i}{X_i}, i =\overline{1,n},$ the economic system is able to function in the mode of sustainable development.
The gross output vector $X=\{X_i\}_{i=1}^n$ satisfies the system of equations
\begin{eqnarray}\label{2gonpupvittin1}
\sum\limits_{i=1}^n \bar a_{ki} X_i=\sum\limits_{s=1}^n \bar a_{sk}X_k, \quad k=\overline{1,n},
\end{eqnarray}
the solution of which exists and is
strictly positive.
The vector of final consumption in value indicators is given by the formula $Y=\{Y_i\}_{i=1}^n,$
$Y_i=d (1 - \sum\limits_{s=1}^n \bar a_{si}) X_i^0, \ i=\overline{1,n},$ $d>0$ and is arbitrary, where $ X_0=\{X_i^0\}_{i=1}^n$ is one of the possible solutions of the system of equations (\ref{2gonpupvittin1}), the sum of whose components is equal to one.
The gross added value $\Delta_i^0$ of the $i$-th industry is given by the formula
$\Delta_i^0=d (1 - \sum\limits_{s=1}^n \bar a_{si}) X_i^0, \ i=\overline{1,n}.$
\end{te}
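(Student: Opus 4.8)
The plan is to use the prescribed form of the taxation vector to collapse the nonlinear equilibrium equation (\ref{tsytsjatintsytsja2}) into the linear homogeneous balance (\ref{2gonpupvittin1}), then apply Lemma \ref{pupvittin7} for solvability and read the remaining formulas off directly. Throughout I would use the value-indicator notation of Theorem \ref{main1}; in particular $\bar a_{ki}=p_ka_{ki}/p_i$ with $p$ the (strictly positive) equilibrium price vector solving (\ref{tax2}), $X_i=p_ix_i$ and $\Delta_i=\delta_i x_i$.

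First I would record the identity that makes the choice $\pi_i=\Delta_i/X_i$ work and non-circular. Since $\bar a_{si}=p_sa_{si}/p_i$, one has $\sum\limits_{s=1}^n\bar a_{si}=(\sum\limits_{s=1}^na_{si}p_s)/p_i=(p_i-\delta_i)/p_i$, so $\Delta_i/X_i=\delta_i/p_i=1-\sum\limits_{s=1}^n\bar a_{si}$; this quantity depends only on $p$, not on the unknown gross output, and it is positive by hypothesis (\ref{main11}). Hence the prescribed taxation satisfies $1-\pi_i=\sum\limits_{s=1}^n\bar a_{si}$. Substituting this into the value form (\ref{main3}) of (\ref{tsytsjatintsytsja2}), the quotient $(1-\pi_i)X_i/\sum\limits_{s=1}^n\bar a_{si}$ reduces to $X_i$ and the right-hand side becomes $(\sum\limits_{s=1}^n\bar a_{sk})X_k$, so (\ref{tsytsjatintsytsja2}) is equivalent to the homogeneous system (\ref{2gonpupvittin1}).

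Next I would establish existence, strict positivity, and uniqueness up to a positive scalar of the solution of (\ref{2gonpupvittin1}). Because $p$ is strictly positive, $\bar a_{ki}$ and $a_{ki}$ share the same zero pattern, so $\bar A^T$ is non-negative and indecomposable; applying Lemma \ref{pupvittin7} with $b_{ki}^1=\bar a_{ik}$ produces a strictly positive vector $X=\{X_i\}_{i=1}^n$ satisfying $\sum\limits_{i=1}^n\bar a_{ki}X_i=(\sum\limits_{s=1}^n\bar a_{sk})X_k$. For uniqueness I would pass to $y_i=(\sum\limits_{s=1}^n\bar a_{si})X_i$, turning (\ref{2gonpupvittin1}) into $\sum\limits_{i=1}^n\big(\bar a_{ki}/\sum\limits_{s=1}^n\bar a_{si}\big)y_i=y_k$; the coefficient matrix here is column-stochastic and irreducible, so by the Perron--Frobenius theorem $1$ is a simple eigenvalue with strictly positive eigenvector, unique up to scaling. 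Therefore $X=c_0X_0$ with $c_0>0$ and $X_0$ the solution normalized to have component sum one.

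Finally I would verify the consumption and added-value formulas and the sustainability conclusion. From (\ref{2gonpupvittin1}), $X_k-\sum\limits_{i=1}^n\bar a_{ki}X_i=(1-\sum\limits_{s=1}^n\bar a_{sk})X_k$, which with $X=c_0X_0$ gives the final-consumption vector $Y_k=c_0(1-\sum\limits_{s=1}^n\bar a_{sk})X_k^0$; writing $d=c_0>0$ yields the stated formula, and $Y_k>0$ since $\sum\limits_{s=1}^n\bar a_{sk}<1$ and $X_k^0>0$. The gross added value is $\Delta_k=(\delta_k/p_k)X_k=(1-\sum\limits_{s=1}^n\bar a_{sk})X_k=c_0(1-\sum\limits_{s=1}^n\bar a_{sk})X_k^0$, i.e. $\Delta_k^0=Y_k$, the coincidence anticipated in the remark before the theorem. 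Profitability holds because $\delta_i=p_i(1-\sum\limits_{s=1}^n\bar a_{si})>0$, so $p$ together with this $X$ satisfies (\ref{tsytsjatintsytsja2}) with strictly positive industry incomes, which is precisely the definition of operating in the mode of sustainable development. The only step that is not pure bookkeeping is the uniqueness-up-to-scalar assertion, which needs the irreducibility/Perron--Frobenius argument rather than Lemma \ref{pupvittin7} alone; everything else is a direct substitution.
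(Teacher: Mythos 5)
Your proposal is correct and follows essentially the same route as the paper: the choice $\pi_i=\Delta_i/X_i$ gives $1-\pi_i=\sum_{s=1}^n\bar a_{si}$, which collapses (\ref{main3}) into the homogeneous system (\ref{2gonpupvittin1}), Lemma \ref{pupvittin7} yields a strictly positive solution, and the formulas for $Y$ and $\Delta^0$ are read off directly. Your only addition is the explicit Perron--Frobenius argument for uniqueness up to a positive scalar (via the column-stochastic matrix $\bar a_{ki}/\sum_s\bar a_{si}$), a point the paper uses but leaves implicit when it writes $X=dX_0$.
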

\begin{proof}
Due to the fact that $(1-\pi_i)=\sum\limits_{s=1}^n\bar a_{si},$ the system of equalities
(\ref{main3}) turns into a system of equations (\ref{2gonpupvittin1}) with respect to the gross output vector $ X=\{X_i\}_{i=1}^n$ in value indicators.
Based on Lemma \ref{pupvittin7}, there exists a strictly positive solution of the system of equations (\ref{2gonpupvittin1}). Let us denote one of the possible solutions $X_0=\{X_i^0\}_{i=1}^n,$
the sum of whose components is equal to one. We will rewrite the system of equations (\ref{2gonpupvittin1}) in the form
\begin{eqnarray}\label{2gonpupvittin4}
d X_k^0-\sum\limits_{i=1}^n \bar a_{ki}d X_i^0=(1-\sum\limits_{s=1}^n \bar a_{sk})d X_k ^0, \quad k=\overline{1,n}.
\end{eqnarray}

The gross output vector $d X_0=\{dX_i^0\}_{i=1}^n$ is uniquely determined from the system of equations
 (\ref{2gonpupvittin4}). Therefore, the final consumption vector is given by the formula
 $Y=\{Y_i\}_{i=1}^n,$
$Y_i=d (1 - \sum\limits_{s=1}^n \bar a_{si}) X_i^0, \ i=\overline{1,n},$ $d>0.$
The created gross added value is given by the formulas $\Delta_i^0=d (1 - \sum\limits_{s=1}^n \bar a_{si}) X_i^0, \ i=\overline{1,n}.$
The theorem \ref{2gonpupvittin3} is proved.
\end{proof}

In the future, we will approach the models of real economic systems with models of economic systems capable of functioning in the mode of sustainable development.
One of such algorithms is presented in the following corollary.
\begin{ce}\label{2gonpupvittin5} Let $Y=\{Y_i\}_{i=1}^n$ be the vector of the final product created in the real economic system. Then, in the mean square, the closest to the vector of the created final product in the real economic system is the vector of final consumption $Y_0=\{Y_i^0\}_{i=1}^n,$ in the economic system capable of functioning in the mode of sustainable development  under the consideration of Theorem \ref{2gonpupvittin3} taxation system, where
$$ Y_i^0=d_0(1-\sum\limits_{s=1}^n \bar a_{si})X_i^0, \quad i=\overline{1,n}, \quad d_0=\frac {\sum\limits_{i=1}^nY_i \left(1-\sum\limits_{s=1}^n \bar a_{si}\right)X_i^0}{\sum\limits_{i=1}^n\left[ \left(1-\sum\limits_{s=1}^n \bar a_{si}\right)X_i^0\right]^2}.$$
The set of gross added values has the form
\begin{eqnarray}\label{2gonpupvittin6}
\Delta_i^0=d_0 \left(1 - \sum\limits_{s=1}^n \bar a_{si}\right) X_i^0, \ i=\overline{1,n}.
\end{eqnarray}
\end{ce}

\section{Research of real economic systems.}

Information about the country's economic system is given in value indicators, the structure of which is given by a table that can be characterized by the matrix of direct costs $\bar A= ||\bar a_{ki}||_{k,i=1}^n,$ the vector of gross outputs $X=\{X_i\}_{i=1}^n,$
the vector of gross added values
$\Delta=\{\Delta_k\}_{k=1}^n,$  the final consumption vector $C=\{C_i\}_{i=1}^n,$
the export vector $E=\{E_i\}_{i=1}^n,$ the import vector $I=\{I_i\}_{i=1}^n,$ between which there are relations
\begin{eqnarray}\label{1main1}
X_k-\sum\limits_{i=1}^n \bar a_{ki}X_i=C_k+E_k- I_k, \quad k=\overline{1,n}.
\end{eqnarray}
Gross added  value consists of taxes on production, which we denote by $T_1=\{T_i^1\}_{i=1}^n$ and wages and profits and mixed income, which we denote by $Z_1=\{Z_i^1\}_{i=1}^n.$
Let's put $\pi_0=\{\pi_i^0 \}_{i=1}^n,$ where the denotation $\pi_i^0=\frac{T_i^1}{\Delta_i} $ is introduced.
The vector $\pi_0=\{\pi_i^0\}_{i=1}^n$ is called the vector of taxation in the real economic system.
Then the conditions of sustainable development can be written in the form
\begin{eqnarray}\label{tax8}
\sum\limits_{i=1}^n \bar a_{ki} \frac{(1-\pi_i^0)X_i}{\sum\limits_{s=1}^n \bar a_{si}} = (1-\pi_k^0)X_k, \quad k=\overline{1,n},
\end{eqnarray}
\begin{eqnarray}\label{tax9}
 1 - \sum\limits_{s=1}^n \bar a_{si} >0,\quad i=\overline{1,n}.
\end{eqnarray}
But these conditions may not be fulfilled for real economic systems.

To study real economic systems, the statistical indicators of which are given in value indicators, it is convenient to introduce the concept of a relative equilibrium vector of prices. If the price vector $p=\{p_i\}_{i=1}^n,$ implemented in the economic system is not in equilibrium, then by introducing the relative price vector $\hat p=\{\hat p_i\}_{ i=1}^n, $ where $\hat p_i=\frac{p_i^0}{p_i},$ and $p_0=\{p_i^0\}_{i=1}^n$ is the equilibrium price vector , the condition of economic equilibrium in value indicators with respect to the vector of relative equilibrium prices $\hat p=\{\hat p_i\}_{i=1}^n$ can be written in the form

\begin{eqnarray}\label{3gonpupvittin1}
\sum\limits_{i=1}^n \bar a_{ki} \frac{(1-\pi_i^0)\hat p_i X_i}{\sum\limits_{s=1}^n   \hat p_s \bar a_{si}} \leq (1-\pi_k^0)X_k, \quad k=\overline{1,n},
\end{eqnarray}

In the future, we will study the system of inequalities (\ref{3gonpupvittin1}) and find out the conditions for the existence of a relative equilibrium vector of prices $\hat p_i=\frac{p_i^0}{p_i},$ we will partially describe the set of equilibrium states and find out the quality of each equilibrium state , based on proven Theorems \ref{mykt6}, \ref{mykt19}, \ref{mykt27} and Corollaries \ref{mykt14}, \ref{mykt16} with Theorems \ref{mykt6}, \ref{mykt19}.
First, let's find out when the conditions of sustainable development are fulfilled, that is, when there is equality in a system of inequalities (\ref{3gonpupvittin1}).
 Let $\hat p=\{\hat p_i\}_{i=1}^n$ be a strictly positive relative equilibrium price vector. We will find out the conditions when the conditions of sustainable development will be fulfilled
\begin{eqnarray}\label{3main1}
\sum\limits_{i=1}^n \bar a_{ki} \frac{(1-\pi_i^0)\hat p_i X_i}{\sum\limits_{s=1}^n \hat p_s \bar a_{si}} = (1-\pi_k^0)X_k, \quad k=\overline{1,n},
\end{eqnarray}
\begin{eqnarray}\label{3main2}
 \hat p_i - \sum\limits_{s=1}^n \hat p_s \bar a_{si}=\hat \delta_i, ,\quad i=\overline{1,n}.
\end{eqnarray}
where $\hat \delta_i>0, i=\overline{1,n}. $

An analogue of Theorem \ref{main1}
in the case under consideration there is

\begin{te}\label{2main1}
 Let the matrix $ \hat A=||\hat a_{ki}||_{k,i=1}^n$ be indecomposable and productive in the   "input - output " economy model, given in value indicators. Then for the relative equilibrium strictly positive price vector $\hat p=\{\hat p_i\}_{i=1}^n,$ which satisfies the system of equations (\ref{3main2})
 there is always a solution to the system of equations (\ref{3main1})
 with respect to the vector $X=\{X_i\}_{i=1}^n,$ for which the final product will be created, i.e. the inequalities
 $$ X_k-\sum\limits_{i=1}^n \bar a_{ki} X_i=Y_k>0, \quad k=\overline{1,n},$$
 are valid, provided that the taxation system
$\pi_0=\{ \pi_i^0\}_{i=1}^n$ satisfies the conditions
\begin{eqnarray}\label{3main3}
 0 < \pi_i^0\leq 1- b\left(1-\frac{\hat \Delta_i}{\hat X_i}\right), \quad i=\overline{1,n},
\end{eqnarray}
for a certain $b,$ satisfying inequalities
$$ \max\limits_{1 \leq i \leq n}(1- \pi_i^0)<b < \frac{1}{ \max\limits_{1 \leq i \leq n}\left(1-\frac{\hat \Delta_i}{\hat X_i}\right)},  $$
where $ \hat a_{ki}=\frac{\hat p_k \bar a_{ki}}{\hat p_i}, $ $\hat \Delta_k=\hat \delta_k X_k, \hat X_k =\hat p_k X_k, \ k=\overline{1,n}.$
Under these conditions, the economic system is able to function in the mode of sustainable development.
 \end{te}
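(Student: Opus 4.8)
The plan is to reproduce the proof of Theorem \ref{main1} almost verbatim, with every scalar reinterpreted in value indicators and the equilibrium price vector $p$ replaced by the relative equilibrium price vector $\hat p$. First I would introduce the denotations of Theorem \ref{tsytsjatintsytsja1} in the "hatted" form: $\hat X_k=\hat p_k X_k$, $\hat\Delta_k=\hat\delta_k X_k$, $\hat a_{ki}=\hat p_k\bar a_{ki}/\hat p_i$. Multiplying the $k$-th equation of (\ref{3main1}) by $\hat p_k$ and using the identities $\sum_s\hat p_s\bar a_{si}=\hat p_i\sum_s\hat a_{si}$ and $\hat p_k\bar a_{ki}=\hat p_i\hat a_{ki}$, the system (\ref{3main1}) becomes the exact analogue of (\ref{main3}),
\begin{eqnarray*}
\sum_{i=1}^n\hat a_{ki}\frac{(1-\pi_i^0)\hat X_i}{\sum_{s=1}^n\hat a_{si}}=(1-\pi_k^0)\hat X_k,\quad k=\overline{1,n},
\end{eqnarray*}
while (\ref{3main2}) reads $\hat X_i-\big(\sum_s\hat a_{si}\big)\hat X_i=\hat\Delta_i$, i.e. $\sum_s\hat a_{si}=1-\hat\Delta_i/\hat X_i=\hat\delta_i/\hat p_i$; since $\hat\delta_i>0$ this number lies strictly in $(0,1)$ and depends only on $\hat p$ and $\bar A$, so the self-referential look of the formulas below is spurious. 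Setting $\hat X_i^0=(1-\pi_i^0)\hat X_i/\sum_s\hat a_{si}$ as in (\ref{main4}) collapses the system to the homogeneous balance system $\sum_i\hat a_{ki}\hat X_i^0=\big(\sum_s\hat a_{sk}\big)\hat X_k^0$, the analogue of (\ref{main5}).

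Next I would construct the gross output vector. Because $\hat p>0$, the diagonal conjugate $\hat A=\|\hat a_{ki}\|$ inherits indecomposability and productivity from $\bar A$, and in particular every column sum $\sum_s\hat a_{si}$ is strictly positive; Lemma \ref{pupvittin7} then produces a strictly positive solution $\hat X_0=\{\hat X_i^0\}$ of the homogeneous system, determined up to a positive multiplier $c_0$. Inverting the substitution yields, exactly as in (\ref{main6}),
\begin{eqnarray*}
\hat X_i=c_0\,\frac{\sum_{s=1}^n\hat a_{si}}{1-\pi_i^0}\,\hat X_i^0=c_0\,\frac{1-\hat\Delta_i/\hat X_i}{1-\pi_i^0}\,\hat X_i^0,\quad i=\overline{1,n},
\end{eqnarray*}
and $X_i=\hat X_i/\hat p_i$ is then a gross output vector satisfying (\ref{3main1}); any $c_0>0$ is admissible and only fixes the scale of the economy. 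To finish I would verify positivity of the final product: since $\hat X_k-\sum_i\hat a_{ki}\hat X_i=\hat p_k\big(X_k-\sum_i\bar a_{ki}X_i\big)$ and $\hat p_k>0$, it suffices to show $\hat X_k-\sum_i\hat a_{ki}\hat X_i>0$. Hypothesis (\ref{3main3}) gives $1-\pi_i^0\ge b\sum_s\hat a_{si}$, hence $(1-\pi_i^0)/(b\sum_s\hat a_{si})\ge1$, so
\begin{eqnarray*}
\hat X_k-\sum_{i=1}^n\hat a_{ki}\hat X_i\ \ge\ \hat X_k-\frac1b\sum_{i=1}^n\hat a_{ki}\frac{(1-\pi_i^0)\hat X_i}{\sum_s\hat a_{si}}=\hat X_k-\frac{1-\pi_k^0}{b}\hat X_k=\Big(1-\frac{1-\pi_k^0}{b}\Big)\hat X_k>0,
\end{eqnarray*}
the last step because $b>\max_i(1-\pi_i^0)\ge1-\pi_k^0$. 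Thus $Y_k:=X_k-\sum_i\bar a_{ki}X_i>0$ for every $k$; writing $Y=C+E-I$ with non-negative $C,E,I$ exhibits the final consumption, export and import vectors, and since (\ref{3main2}) already encodes the profitability inequalities $\hat\delta_i>0$ and the relation (\ref{3main1}) is equivalent to its $\hat X$-form by strict positivity of $\hat p$, the economy operates in the sustainable-development mode.

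The only points where I expect to need care — the "hard part", such as it is — are the two structural observations rather than any computation: first, checking that the conjugation $\bar A\mapsto\hat A$ by $\mathrm{diag}(\hat p)$ preserves indecomposability and productivity, so that Lemma \ref{pupvittin7} applies and the denominators $\sum_s\hat a_{si}$ are strictly positive; and second, recognizing that $\hat\Delta_i/\hat X_i=\hat\delta_i/\hat p_i=1-\sum_s\hat a_{si}$ is a datum of $\hat p$ and $\bar A$ alone, which is what makes the formula for $\hat X_i$ genuinely an explicit formula and makes the taxation bound (\ref{3main3}) well posed. Everything else is the routine bookkeeping already performed in the proof of Theorem \ref{main1}.
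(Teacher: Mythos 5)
Your proposal is correct and is essentially the paper's own route: the paper states Theorem \ref{2main1} as a direct analogue of Theorem \ref{main1} without a separate proof, and you reproduce that proof faithfully via the substitution $\hat X_k=\hat p_k X_k$, $\hat a_{ki}=\hat p_k\bar a_{ki}/\hat p_i$, Lemma \ref{pupvittin7} for the homogeneous balance system, and the same positivity estimate $\bigl(1-\frac{1-\pi_k^0}{b}\bigr)\hat X_k>0$. The only blemish is the slip ``$\sum_s\hat a_{si}=1-\hat\Delta_i/\hat X_i=\hat\delta_i/\hat p_i$'', which should read $1-\hat\delta_i/\hat p_i$; you state the correct relation $\hat\Delta_i/\hat X_i=\hat\delta_i/\hat p_i=1-\sum_s\hat a_{si}$ in your closing paragraph and use it correctly throughout, so this does not affect the argument.
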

The most interesting for applications is the case when the relative equilibrium vector of prices $\hat p=\{\hat p_i\}_{i=1}^n,$ is such that $\hat p_i=1, i=\overline{1,n }.$ Then $\Delta_i=X_i(1-\sum\limits_{s=1}^n \bar a_{si})=\hat \delta_i X_i, \ i=\overline{1,n }.$
 We reformulate this theorem for this case. 

\begin{te}\label{4main1}
 Let the matrix $ \hat A=||\hat a_{ki}||_{k,i=1}^n$ be indecomposable and productive in the  "input - output" economy model, given in value indicators. Then for the relative equilibrium strictly positive price vector $\hat p=\{\hat p_i\}_{i=1}^n, $ where $\hat p_i =1,\ i=\overline{1,n},$ which satisfies the system of equations (\ref{3main2})
 there exists  always a solution to the system of equations (\ref{3main1})
 with respect to the vector $X=\{X_i\}_{i=1}^n,$ for which the final product will be created, i.e. the inequalities
 $$ X_k-\sum\limits_{i=1}^n \bar a_{ki} X_i=Y_k>0, \quad k=\overline{1,n},$$
 are valid,  provided that the taxation system
$\pi_0=\{ \pi_i^0\}_{i=1}^n$ satisfies the conditions
\begin{eqnarray}\label{5main5}
 0 < \pi_i^0\leq 1- b\left(1-\frac{ \Delta_i}{X_i}\right), \quad i=\overline{1,n},
\end{eqnarray}
for a certain $b,$ satisfying inequalities
$$ \max\limits_{1 \leq i \leq n}(1- \pi_i^0)<b < \frac{1}{ \max\limits_{1 \leq i \leq n}\left(1-\frac{ \Delta_i}{X_i}\right)}.  $$
If, in addition $Y_k=C_k+E_k- I_k, \ k=\overline{1,n},$
then, under these conditions, the economic system, described in value indicators, is able to function in the mode of sustainable development.
 \end{te}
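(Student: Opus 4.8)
The plan is to obtain Theorem \ref{4main1} as the specialization of Theorem \ref{2main1} (equivalently, of the construction in the proof of Theorem \ref{main1}) to the relative equilibrium price vector with all components equal to one, checking that in this case every quantity appearing in Theorem \ref{2main1} collapses to the one occurring in the statement above. First I would set $\hat p_i = 1$, $i=\overline{1,n}$, in the definition $\hat a_{ki} = \frac{\hat p_k \bar a_{ki}}{\hat p_i}$; this gives $\hat a_{ki} = \bar a_{ki}$, so $\hat A = \bar A$, and the hypotheses (indecomposability and productivity of $\hat A$) become exactly the stated hypotheses on $\bar A$. Next I would read off equation (\ref{3main2}) with $\hat p_i = 1$: it reduces to $1 - \sum_{s=1}^n \bar a_{si} = \hat\delta_i > 0$, $i=\overline{1,n}$, so the columns of $\bar A$ sum to numbers strictly less than one. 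This is precisely the circumstance recorded just before the theorem, where $\Delta_i = X_i\left(1-\sum_{s=1}^n \bar a_{si}\right) = \hat\delta_i X_i$.

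Then I would translate the quantities $\hat X_i$ and $\hat\Delta_i$ of Theorem \ref{2main1}: since $\hat X_k = \hat p_k X_k = X_k$ and $\hat\Delta_k = \hat\delta_k X_k$, we get $\frac{\hat\Delta_i}{\hat X_i} = 1 - \sum_{s=1}^n \bar a_{si} = \frac{\Delta_i}{X_i}$. Consequently the taxation constraint (\ref{3main3}) of Theorem \ref{2main1} becomes word for word the constraint (\ref{5main5}), and the admissible range $\max_{1\le i\le n}(1-\pi_i^0) < b < 1/\max_{1\le i\le n}(1-\Delta_i/X_i)$ is non-empty: each $\Delta_i/X_i = 1 - \sum_s \bar a_{si} \in (0,1)$ because $\bar a_{si}\ge 0$ and $\sum_s \bar a_{si} < 1$, so $\max_i(1-\Delta_i/X_i) < 1$, while $\max_i(1-\pi_i^0) < 1$ since $\pi_i^0 > 0$; hence $\max_i(1-\pi_i^0) < 1 < 1/\max_i(1-\Delta_i/X_i)$.

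With these identifications in place I would invoke Theorem \ref{2main1}, or re-run the construction from the proof of Theorem \ref{main1}: take a strictly positive solution $X_0 = \{X_i^0\}_{i=1}^n$ with $\sum_i X_i^0 = 1$ of the homogeneous system $\sum_{i=1}^n \bar a_{ki} X_i^0 = \sum_{s=1}^n \bar a_{sk} X_k^0$, whose existence is guaranteed by Lemma \ref{pupvittin7}, and set $X_i = c_0\frac{\sum_{s=1}^n \bar a_{si}}{1-\pi_i^0}X_i^0$ for an arbitrary constant $c_0 > 0$. The equilibrium equations (\ref{3main1}) then hold by construction, and the bound (\ref{5main5}) gives, exactly as in (\ref{main8})--(\ref{main10}), the inequality $1 \le \frac{1-\pi_i^0}{b\sum_s \bar a_{si}}$, whence $X_k - \sum_{i=1}^n \bar a_{ki} X_i \ge \left(1 - \frac{1-\pi_k^0}{b}\right)X_k > 0$; so the final product $Y_k = X_k - \sum_{i=1}^n \bar a_{ki} X_i$ has strictly positive components.

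Finally, if in addition $Y_k = C_k + E_k - I_k$, $k=\overline{1,n}$, then the balance relation (\ref{1main1}) holds for the constructed $X$, and together with the strict positivity of $\hat p$ and of the added values $\hat\delta_i$ this supplies exactly the equalities and profitability inequalities needed for the economic system to operate in the mode of sustainable development, in the sense of Theorem \ref{1pupvittin13} and the discussion following (\ref{tsytsjatin2}). The only genuine work, and the step I expect to be the main obstacle, is the bookkeeping: making sure that the passage to value indicators together with $\hat p_i = 1$ is carried out consistently across all the denotations ($\hat a_{ki} = \bar a_{ki}$, $\hat X = X$, $\hat\Delta_i/\hat X_i = \Delta_i/X_i$, and the resulting column-sum bound on $\bar A$), and that the interval for $b$ is genuinely non-empty so that a taxation vector satisfying (\ref{5main5}) exists; everything else is a direct transcription of Theorem \ref{2main1}.
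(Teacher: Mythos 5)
Your proposal is correct and follows essentially the same route as the paper, which presents Theorem \ref{4main1} precisely as the specialization of Theorem \ref{2main1} to $\hat p_i=1$ (so that $\hat a_{ki}=\bar a_{ki}$, $\hat X_k=X_k$, $\hat\Delta_i/\hat X_i=\Delta_i/X_i=1-\sum_s\bar a_{si}$), with the underlying construction being the one in the proof of Theorem \ref{main1} via Lemma \ref{pupvittin7} and the estimates (\ref{main8})--(\ref{main10}). Your added check that the interval for $b$ is non-empty is a harmless extra beyond what the paper records.
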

 
 If the inequalities (\ref{3gonpupvittin1}) hold, then there exists a nonempty subset $I \subseteq N,$ where $N=\{1,2,\ldots,n\},$ such that for indices $ k \in I $ inequalities (\ref{3gonpupvittin1}) are transformed into equalities. Under the condition $I=N$, we are talking about a complete clearing of the markets.
If $I\subset N$, we are talking about partial clearing of the markets. We call the vector $\bar \psi=\{\bar \psi_k\}_{k=1}^n$ the real consumption vector, where

\begin{eqnarray}\label{pupapups11}
\bar \psi_k=\sum\limits_{i \in I} \bar a_{ki} \frac{(1-\pi_i^0) \hat p_i X_i^0}{\sum\limits_{s \in I} \hat p_s \bar a_{si}}, \quad k=\overline{1,n},
\end{eqnarray}

In the case of partial market clearing, the real consumption vector $\bar \psi $ does not coincide with the supply vector $\psi=\{\psi_k\}_{k=1}^n,$ where $\psi_k=(1-\pi_k^ 0)\hat X_k^0.$ The vector of real consumption can be presented in the form
\begin{eqnarray}\label{pupapups12}
\bar \psi_k=\sum\limits_{i\in I } \bar a_{ki} y_i, \quad k=\overline{1,n},
\end{eqnarray}
 where the non-zero vector $y=\{y_i\}_{i \in I}$ is the solution of the system of equations and inequalities
\begin{eqnarray}\label{pupapups13}
\sum\limits_{i \in I } \bar a_{ki} y_i= \psi_k, \quad k \in I,
\end{eqnarray}
\begin{eqnarray}\label{pupapups14}
\sum\limits_{i \in I } \bar a_{ki} y_i< \psi_k, \quad k \in N \setminus I,
\end{eqnarray}
and the equilibrium vector of prices $ \hat p=\{\hat p_i\}_{i \in I}$ satisfies the system of equations 
\begin{eqnarray}\label{pupapups15}
y_i= \frac{(1-\pi_i^0) \hat p_i X_i^0}{\sum\limits_{s \in I} \hat p_s \bar a_{si}}, \quad i \in I.
\end{eqnarray}

According to Theorem \ref{mykt19} and Corollary \ref{mykt16}, there exists one to one   correspondence between the solutions of the system of equations and inequalities (\ref{pupapups13}), (\ref{pupapups14}) and the system of equations  (\ref{pupapups15}) with respect to the equilibrium price vector $ \hat p=\{\hat p_i\}_{i \in I}.$
Let us consider the relative generalized equilibrium vector of prices
$p_u=\{p_i^u\}_{i=1}^n, $ $p_i^u= \hat p_i, \ i \in I,$ $p_i^u= 1, \ i \in J,$
where the vector $\hat p^I=\{ \hat p_i\}_{i \in I}$ satisfies the system of equations (\ref{pupapups15}).

The number of goods, the cost of which is $\langle \psi-\bar \psi, p_u \rangle$, does not find a consumer on the market of goods of the economic system. We will characterize this phenomenon as the level of excess supply
\begin{eqnarray}\label{yavitatina}
R=\frac{\langle \psi-\bar \psi, p_u \rangle}{\langle \psi, p_u \rangle}.
\end{eqnarray}
If there is no state of economic equilibrium corresponding to the vector $y=\{y_i\}_{i=1}^n$, which satisfies the system of inequalities and equations

\begin{eqnarray}\label{Npupapups13}
\sum\limits_{i=1}^n \bar a_{ki} y_i= \psi_k, \quad k \in I,
\end{eqnarray}
\begin{eqnarray}\label{Npupapups14}
\sum\limits_{i=1}^n \bar a_{ki} y_i< \psi_k, \quad k \in N \setminus I,
\end{eqnarray}
with a non-empty set $J,$ as it happened in Theorem \ref{TtsyVtsja3}, then in this case 
the generalized relative equilibrium price vector $ \hat p=\{\hat p_k\}_{k=1}^n,$  is a  solution of the system of equations
\begin{eqnarray}\label{Npupapups15}
y_i=\frac{\bar \psi_i\hat p_i}{\sum\limits_{s=1}^n \bar a_{si} \hat p_s}, \quad i=\overline{1,n}.
\end{eqnarray}
where the vector of real consumption $\bar \psi $ is given by the formula
$\bar \psi=\{ \bar \psi_k\}_{k=1}^n,$ $\psi_k=\sum\limits_{i=1}^n \bar a_{ki} y_i, \ k= \overline{1,n}. $
and the vector $y=\{y_i\}_{i=1}^n$ is a solution of the set of equalities (\ref{Npupapups13}) and inequalities (\ref{Npupapups14}).
In this case,  the level of excess supply for such an equilibrium state will be described by the formula
\begin{eqnarray}\label{Npupapups16}
R=\frac{\langle \psi-\bar \psi, \hat p \rangle}{\langle \psi, \hat p \rangle}.
\end{eqnarray}

\section{Axioms of the  aggregate description of economy.}
Let's assume that $ m$ types of goods are produced in the economic system. We assume that all produced goods are ordered, and any set of them will be denoted by a  non-negative vector of  $ m$-dimensional subset  $R_+^m.$
In reality, the production of any product can be represented by a production process: $(x,y) $ where the vector $x $ is an input vector, and $y $ is an output vector. 
\begin{ax}\label{votre1}
Production of goods and services in the economic system  can be presented as a set of elementary production processes  $(x^i,y^i), \ i=\overline{1,m},$ to produce the $i$-th type of good $y^i=\{\delta_{ij}x_i\}_{j=1}^m$ with a certain vector of input $x^i,$ where $\delta_{ij}$ is a Kronecker symbol.
\end{ax}

\begin{ax}\label{votre2}
There is a linear relationship between vectors $x^i$ and $ y^i$ of productive process $(x^i,y^i) $ : to produce a vector of goods $y^i=\{\delta_{ij}x_i\}_{j=1}^m$ it is necessary to spend a vector of goods $ x^i=\{a_{ij}x_i\}_{j=1}^m,\ m=\overline{1,m},$ where $a_{ij}\geq 0, i, j=\overline{1,m}.$
\end{ax}
As a result of the assumption of linearity of elementary production processes, we obtain:
if to introduce into consideration the matrix $A=|a_{ij}|_{i,j=1}^m$ of direct costs, 
 then to produce the output vector $ y=\{x_i\}_{i=1}^m,$ it needs to  expend the  vector $x=Ay. $
The vector $y-x=y- Ay=c$ is called the final consumption vector. Consumer needs are considered primary, that is, the supply is primary, and the demand is secondary, therefore, as a rule, the final consumption vector $c=\{c_i\}_{i=1}^m$ is important.
If we introduce a re-designation, denoting the output vector by $x=\{x_i\}_{i=1}^m$ and the input vector by $Ax,$ then the main output equation  takes the form
\begin{eqnarray}\label{agr1}
x-Ax=c.
\end{eqnarray}
\begin{de}\label{agre1}
An non negative matrix $A$ is called productive if at least for one strictly positive vector $c_0 \in R^m$ there exists a strictly positive solution $x_0 $ of the system of equations (\ref{agr1}).
\end{de}
\begin{ax}\label{votre3}
The matrix of elementary direct costs $A$ is a productive one.
\end{ax}

\begin{prope}\label{agre2}
If the matrix $A$ is productive, then its spectral radius is strictly less than unity.
\end{prope}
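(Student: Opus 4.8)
The plan is to read off a strict contraction estimate directly from the definition of productivity and then deduce that the powers of $A$ tend to zero. By Definition \ref{agre1} there is a strictly positive vector $c_0$ and a strictly positive $x_0$ with $x_0 - A x_0 = c_0$. Hence $A x_0 = x_0 - c_0$, and since all components of $c_0$ are positive we get $(A x_0)_k < (x_0)_k$ for every $k$, while $(A x_0)_k \ge 0$ because $A \ge 0$ and $x_0 > 0$. Setting $\theta = \max_{1 \le k \le m} (A x_0)_k / (x_0)_k$ we obtain $0 \le \theta < 1$ together with the vector inequality $A x_0 \le \theta x_0$.

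Next I would iterate this bound using the monotonicity of multiplication by the nonnegative matrix $A$: from $A x_0 \le \theta x_0$ it follows that $A^2 x_0 \le \theta\, A x_0 \le \theta^2 x_0$, and by induction $A^n x_0 \le \theta^n x_0$ for all $n \ge 1$. Since $x_0$ is strictly positive, each coordinate vector $e_j$ satisfies $e_j \le \mu\, x_0$ with $\mu = 1/\min_k (x_0)_k$, so $0 \le A^n e_j \le \mu\,\theta^n x_0 \to 0$. Thus every column of $A^n$ converges to zero, i.e. $A^n \to 0$ entrywise as $n \to \infty$.

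Finally, $A^n \to 0$ forces the spectral radius to be strictly below one: if $\lambda$ is any (possibly complex) eigenvalue of $A$ with eigenvector $v \ne 0$, then $A^n v = \lambda^n v \to 0$, which is impossible unless $|\lambda| < 1$; hence $\rho(A) = \max|\lambda| < 1$. Equivalently, one could invoke the Perron--Frobenius theorem for nonnegative matrices: $\rho(A)$ is an eigenvalue of $A^{T}$ with a nonzero nonnegative eigenvector $w$, and pairing $w^{T} A x_0 = \rho(A)\, w^{T} x_0$ with $A x_0 = x_0 - c_0$ gives $(1 - \rho(A))\, w^{T} x_0 = w^{T} c_0 > 0$ together with $w^{T} x_0 > 0$, whence $\rho(A) < 1$.

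The only genuine point requiring care is that $A$ is not assumed indecomposable, so arguments relying on a strictly positive Perron eigenvector are not directly available; the iteration argument above sidesteps this entirely, and the passage from the pointwise limit $A^n \to 0$ to $\rho(A) < 1$ is the standard fact about convergent matrices, e.g. via Gelfand's formula $\rho(A) = \lim_{n\to\infty} \|A^n\|^{1/n}$.
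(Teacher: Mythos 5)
Your proof is correct and rests on the same idea as the paper's: your constant $\theta=\max_k (Ax_0)_k/(x_0)_k<1$ is exactly the operator norm of $A$ in the paper's weighted sup norm $\|x\|=\max_k |x_k|/x_k^0$ built from the strictly positive solution $x_0$ of $x-Ax=c_0$. The only cosmetic difference is that you pass from this contraction bound to $\rho(A)<1$ by iterating powers (or via the left Perron eigenvector), whereas the paper simply invokes $\rho(A)\leq\|A\|<1$.
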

\begin{proof}
For the  vector  $x \in R^m$, we introduce a norm by setting $||x||=\max\limits_{k \in M}\frac{|x_k|}{x_k^0},$ where $x^ 0=\{x_i^0\}_{i=1}^m$ is a solution of the system of equations (\ref{agr1})
for a certain strictly positive vector $c^0=\{c_i^0\}_{i=1}^m \in R^m.$
Then for
norm of the operator $A$ in the space $R^m$
there is the estimate $||A||\leq (1-\max\limits_{k \in M}\frac{c_k^0}{x_k^0})<1.$ It immediately follows that the system of equations ( \ref{agr1}) is solvable for any non-negative vector $c \in R^m$ in the set of non-negative solutions.
\end{proof}

 Now, to present the information
about the economic system in its entirety is to provide information about the $A$ matrix itself.
But more than $10^9$ types of goods are produced in the economic system. Therefore, such a presentation is impossible. Therefore, information about the economic system is presented in an aggregated manner.
Let's go to this description. We need to describe the economic system with a smaller number of so-called pure industries $n<m$. We introduce the denotation: $M=\{1,2, \ldots,m\}$ $N=\{1,2, \ldots,n\},$ and the mapping $f$ of the set $M$ onto $N.$ In the production process, an strictly positive price vector $p=\{p_i\}_{i=1}^m$ is formed between producers who have entered into contracts with each other for the supply of goods. It is formed under the influence of supply and demand for goods in the economic system and under the influence of the formed system of both direct and indirect taxes.
By vectors
$x=\{x_i\}_{i=1}^m$, $p=\{p_i\}_{i=1}^m,$ and the matrix $ A $ we introduce the following quantities:
\begin{eqnarray}\label{agr2}
 X_k= X_k(p,x)=\sum\limits_{\{l, f(l)=k\}}p_lx_l, \quad C_k=C_k(p,x)=\sum\limits_{\{l, f (l)=k\}}p_lc_l, \quad k=\overline{1,n}.
\end{eqnarray}
\begin{eqnarray}\label{agr3}
\bar a_{ki}=\bar a_{ki}(p,x)=\frac{\sum\limits_{\{s, f(s)=k\}}\sum\limits_{\{r, f (r)=i\}} p_s a_{sr}x_r}{ \sum\limits_{\{l, f(l)=i\}}p_lx_l}, \quad k,i=\overline{1,n} .
\end{eqnarray}
\begin{prope}\label{agreg1}
Let $\hat p=\{\hat p_i\}_{i=1}^n$ $\hat X=\{\hat X_i\}_{i=1}^n$ be some strictly positive vectors from  $ R_+^n,$ and $\ p=\{ p_i\}_{i=1}^m$ and $x=\{ x_i\}_{i=1}^m$ are strictly positive vectors of prices and gross outputs from the subset  $R_+^m$, respectively.
According to the aggregation mapping $f(i), i \in M,$
let us consider the vectors $\hat p(p)=\{\hat p_f(i) p_i\}_{i=1}^m,$ $\hat X(x)=\{\hat X_f(i) x_i \}_{i=1}^m$ from the subset $R_+^m.$ Then the  equalities
\begin{eqnarray}\label{agreg2}
\bar a_{ki}(\hat p(p),\hat X(x)) =\frac{\hat p_k\bar a_{ki}(p,x)}{\hat p_i},
\quad k,i=\overline{1,n},
 \end{eqnarray}
\begin{eqnarray}\label{agreg3}
X_k(\hat p(p), \hat X(x))=\hat p_k \hat X_k X_k(p,x), \quad k=\overline{1,n},
 \end{eqnarray}
 \begin{eqnarray}\label{agreg4}
C_k(\hat p(p), \hat X(x))=\hat p_k \hat X_k C_k(p,x), \quad k=\overline{1,n},
 \end{eqnarray}
 are valid.
\end{prope}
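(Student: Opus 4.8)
The plan is to verify each of the three identities (\ref{agreg2})--(\ref{agreg4}) by direct substitution of the rescaled vectors $\hat p(p)=\{\hat p_{f(i)}p_i\}_{i=1}^m$ and $\hat X(x)=\{\hat X_{f(i)}x_i\}_{i=1}^m$ into the defining formulas (\ref{agr2})--(\ref{agr3}), exploiting the one structural fact that $\hat p_{f(l)}$ and $\hat X_{f(l)}$ are constant as the index $l$ runs over a fibre $\{l,\,f(l)=k\}$ of the aggregation map $f$, so they factor out of the aggregating sums. Since $f$ maps $M$ onto $N$, every fibre $\{l,\,f(l)=i\}$ is non-empty, and together with the strict positivity of $p,x,\hat p,\hat X$ this guarantees that the denominators occurring below — in particular $X_i(p,x)=\sum_{\{l,\,f(l)=i\}}p_lx_l>0$ — are strictly positive, so all the $\bar a_{ki}(\cdot,\cdot)$ are well defined.

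First I would treat (\ref{agreg3}). By definition $X_k(\hat p(p),\hat X(x))=\sum_{\{l,\,f(l)=k\}}\hat p_{f(l)}\,p_l\,\hat X_{f(l)}\,x_l$; for every index $l$ in this sum $f(l)=k$, hence $\hat p_{f(l)}=\hat p_k$ and $\hat X_{f(l)}=\hat X_k$, and pulling out these two constants gives $\hat p_k\hat X_k\sum_{\{l,\,f(l)=k\}}p_lx_l=\hat p_k\hat X_k X_k(p,x)$. Identity (\ref{agreg4}) is proved the same way: the final-consumption vector is, like the gross-output vector, an $m$-dimensional quantity vector, and under the rescaling of gross outputs it is understood to transform by the same rule $c\mapsto\{\hat X_{f(l)}c_l\}_{l=1}^m$; replacing $x_l$ by $c_l$ throughout the previous computation yields $C_k(\hat p(p),\hat X(x))=\hat p_k\hat X_k C_k(p,x)$.

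For (\ref{agreg2}) I would compute the numerator and the denominator of $\bar a_{ki}(\hat p(p),\hat X(x))$ separately. In the numerator $\sum_{\{s,\,f(s)=k\}}\sum_{\{r,\,f(r)=i\}}\hat p_{f(s)}\,p_s\,a_{sr}\,\hat X_{f(r)}\,x_r$ the factor $\hat p_{f(s)}$ equals $\hat p_k$ for every $s$ appearing and $\hat X_{f(r)}$ equals $\hat X_i$ for every $r$ appearing, so the numerator equals $\hat p_k\hat X_i\sum_{\{s,\,f(s)=k\}}\sum_{\{r,\,f(r)=i\}}p_sa_{sr}x_r$. In the denominator $\sum_{\{l,\,f(l)=i\}}\hat p_{f(l)}\,p_l\,\hat X_{f(l)}\,x_l$ both $\hat p_{f(l)}=\hat p_i$ and $\hat X_{f(l)}=\hat X_i$ factor out, giving $\hat p_i\hat X_i\sum_{\{l,\,f(l)=i\}}p_lx_l$. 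Taking the quotient, the common factor $\hat X_i$ cancels and one is left with $\frac{\hat p_k}{\hat p_i}\cdot\frac{\sum_{s,r}p_sa_{sr}x_r}{\sum_l p_lx_l}=\frac{\hat p_k\,\bar a_{ki}(p,x)}{\hat p_i}$, which is (\ref{agreg2}).

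There is essentially no hard step; the argument is routine bookkeeping over the fibres of $f$. The only two points deserving care are: (i) fixing the convention that the final-consumption vector rescales by the same rule as the gross-output vector, which is the convention under which $C_k$ inherits the homogeneity of $X_k$ and makes (\ref{agreg4}) hold; and (ii) noting the cancellation of the aggregation factor $\hat X_i$ between numerator and denominator of $\bar a_{ki}$, which is precisely the reason the transformation of $\bar a$ involves only the price factors $\hat p_k/\hat p_i$ and not the $\hat X$'s.
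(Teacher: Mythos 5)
Your verification is correct and is exactly the fibrewise substitution the paper itself treats as self-evident (its entire proof is the sentence that the statement is evident), so you have simply supplied the routine computation the paper omits. Your point (i) about the convention for $C_k$ is a genuine and worthwhile clarification rather than a gap: equality (\ref{agreg4}) holds only if the disaggregated consumption is rescaled as $c_l\mapsto\hat X_{f(l)}c_l$, whereas recomputing it functionally as $(E-A)\hat X(x)$ would break the identity whenever $\hat X$ is not constant across industries, so fixing that reading explicitly makes the statement precise in the only way that keeps it true.
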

\begin{proof}
The proof of Proposition \ref{agreg1} is evident.
\end{proof}
Strictly positive vectors $\hat p=\{\hat p_i\}_{i=1}^n$ $\hat X=\{\hat X_i\}_{i=1}^n$ from the subset $R_+^ n$ is called the relative price vector and the relative gross output vector, respectively.
\begin{prope}\label{agre3}
If the matrix $A=||a_{ki}||_{k,i=1}^m$ is productive, then the aggregated matrix $\bar A=||\bar a_{ki}||_{k ,i=1}^n$ is also productive.
\end{prope}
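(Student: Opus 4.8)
The plan is to prove productivity of $\bar A$ constructively, by producing a strictly positive vector on which the aggregated output equation closes with a strictly positive right-hand side. First I would observe that $\bar A=||\bar a_{ki}||_{k,i=1}^n$ is a non-negative matrix: in (\ref{agr3}) the denominator $\sum_{\{l,f(l)=i\}}p_lx_l$ is strictly positive (the fibre $\{l:f(l)=i\}$ is non-empty because $f$ maps $M$ onto $N$, and $p_l,x_l>0$), while the numerator is a sum of non-negative terms. Next, since $A$ is productive, Definition \ref{agre1} supplies a strictly positive final consumption vector $c=\{c_l\}_{l=1}^m$ and a strictly positive gross output vector $x=\{x_l\}_{l=1}^m$ with $x-Ax=c$; I would form the aggregation of $x$ and $c$ relative to the given strictly positive price vector $p$, i.e. put $X_k=\sum_{\{l,f(l)=k\}}p_lx_l$ and $C_k=\sum_{\{l,f(l)=k\}}p_lc_l$ as in (\ref{agr2}). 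Both $X=\{X_k\}_{k=1}^n$ and $C=\{C_k\}_{k=1}^n$ are then strictly positive, again because every fibre is non-empty.

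The heart of the argument is the identity
\begin{eqnarray*}
\sum\limits_{i=1}^n \bar a_{ki}X_i
&=& \sum\limits_{i=1}^n\ \sum\limits_{\{s,f(s)=k\}}\ \sum\limits_{\{r,f(r)=i\}} p_s a_{sr}x_r
= \sum\limits_{\{s,f(s)=k\}} p_s \sum\limits_{r=1}^m a_{sr}x_r \\
&=& \sum\limits_{\{s,f(s)=k\}} p_s\,(x_s-c_s) = X_k-C_k, \qquad k=\overline{1,n},
\end{eqnarray*}
in which the denominator of $\bar a_{ki}$ cancels the factor $X_i$, the sets $\{r:f(r)=i\}$, $i=\overline{1,n}$, partition $M$ so that the double sum over $i$ and $r$ collapses to the full sum over $r\in M$, and finally $\sum_{r=1}^m a_{sr}x_r=[Ax]_s=x_s-c_s$. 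Hence $X-\bar AX=C$, so $X$ is a strictly positive solution of the aggregated equation $\hat X-\bar A\hat X=C$ with strictly positive right-hand side $C$, and by Definition \ref{agre1} the matrix $\bar A$ is productive.

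I do not expect a genuine obstacle here; the computation is pure index bookkeeping. The only points that require attention are the surjectivity of the aggregation map $f$ (needed to keep all aggregated quantities strictly positive and all denominators non-zero) and the fact that one must aggregate the \emph{productive} pair $(x,c)$ rather than an arbitrary strictly positive $x$, since for a productive matrix $x-Ax$ need not be positive for every positive $x$. If one prefers, the conclusion can instead be read off from Proposition \ref{agre2}: the identity above gives $\sum_i\bar a_{ki}X_i=X_k-C_k<X_k$ for all $k$, which says that the norm of $\bar A$ in the weighted maximum norm $||\hat X||=\max_{1\leq k\leq n}|\hat X_k|/X_k$ is strictly less than one, and therefore $\hat X-\bar A\hat X=\hat c$ is solvable with a non-negative solution for every non-negative $\hat c$.
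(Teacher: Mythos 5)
Your proof is correct and follows essentially the same route as the paper: take the strictly positive pair $(x,c)$ with $x-Ax=c$ supplied by productivity of $A$, aggregate it via (\ref{agr2})--(\ref{agr3}), and observe that $X-\bar A X=C$ with $C$ strictly positive, which is exactly Definition \ref{agre1} for $\bar A$. The only difference is that you write out explicitly the index computation behind the aggregated balance equation (\ref{agr4}), which the paper simply invokes as already established.
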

\begin{proof} Due to productivity of  matrix
$A$ for any strictly positive vector $c=\{c_i\}_{i=1}^m$ there exists a strictly positive solution $x=\{x_i\}_{i=1}^m$ to the systems of equations (\ref{agr1}). Then the aggregated strictly positive vector $X=\{X_i\}_{i=1}^n$ is a solution of the system of equations (\ref{agr4}) with the strictly positive vector $C=\{C_i\}_{i =1}^n.$ The latter means the productivity of the matrix $\bar A=||\bar a_{ki}||_{k,i=1}^n.$
\end{proof}

For the price vector $p=\{p_i\}_{i=1}^m,$ that was formed under the action of competitive forces that are the result of the taxation system, production technologies, monetary policy of the central bank, we introduce into consideration  the vector $ \delta=\{\delta_i\}_{i=1}^n,$
where
\begin{eqnarray}\label{agr5}
\delta_i=p_i-\sum\limits_{s=1}^ma_{si}p_s
\quad i=\overline{1,m}.
\end{eqnarray}
The vector $\delta=\{\delta_i\}_{i=1}^n$ is called the vector of created added values. Under the conditions described above, this vector may not be strictly positive. This means that not all production technologies are capable of creating positive added value under competitive conditions. But as a result of aggregation, it may turn out that the aggregated value added $\Delta_j=\sum\limits_{\{l, f(l)=j\}}\delta_l p_l, \ j=\overline{1,n},$ can  be both strictly positive and not  be strictly positive in all industries.

Since we consider that the vector $x=\{x_i\}_{i=1}^m$ is a solution of the system of equations (\ref{agr1}), then the vector $X=\{X_i\}_{i= 1}^n$ satisfies the system of equations
\begin{eqnarray}\label{agr4}
X_k-\sum\limits_{i=1}^n \bar a_{ki}X_i=C_k, \quad k=\overline{1,n}.
\end{eqnarray}
It is obvious that the following equalities 
\begin{eqnarray}\label{agr6}
X_j-\sum\limits_{s=1}^n \bar a_{sj}X_j=\Delta_j, \quad j=\overline{1,n},
\end{eqnarray}
is true, where $\Delta_j=\sum\limits_{\{l, f(l)=j\}}\delta_l p_l, \ j=\overline{1,n}.$
\begin{de}\label{nok1}
The description of  economy in value indicators using the gross output vector $X=\{X_i\}_{i=1}^n,$
 vector of final consumption
  $C=\{C_i\}_{i=1}^n,$ vector of added values 
 $\Delta=\{\Delta_i\}_{i=1}^n$ and matrix of direct costs $\bar A$=$||a_{ki}||_{k,i=1}^n,$ which satisfy the relations (\ref{agr4}) and (\ref{agr6}) is called aggregated one.
A consequence of the systems of equations (\ref{agr4}) and (\ref{agr6}) is the equality
\begin{eqnarray}\label{nok2}
\sum\limits_{k=1}^n C_k=\sum\limits_{s=1}^n
\Delta_s,
\end{eqnarray}
which means that the sum of the values of final product created in all industries  is equal to the sum of added values created in all industries.
\end{de}

\begin{prope}\label{agre4}
If $\Delta_j>0, \ j=\overline{1,n},$ then the inequalities $\sum\limits_{s=1}^n \bar a_{sj}<1,\ j=\overline{1 ,n},$
are true.
\end{prope}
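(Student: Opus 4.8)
The plan is to read off the result directly from the already-established identity (\ref{agr6}), which can be rewritten in factored form. First I would observe that the aggregated gross output components are strictly positive: by the definition (\ref{agr2}), $X_j = \sum_{\{l,\,f(l)=j\}} p_l x_l$, and since $p=\{p_i\}_{i=1}^m$ and $x=\{x_i\}_{i=1}^m$ are strictly positive vectors while the aggregation map $f$ sends $M$ onto $N$, every preimage set $\{l,\,f(l)=j\}$ is non-empty; hence $X_j>0$ for each $j=\overline{1,n}$.

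Next I would factor $X_j$ out of the left-hand side of (\ref{agr6}). Since $X_j-\sum_{s=1}^n \bar a_{sj}X_j = X_j\bigl(1-\sum_{s=1}^n \bar a_{sj}\bigr)$, equation (\ref{agr6}) reads $X_j\bigl(1-\sum_{s=1}^n \bar a_{sj}\bigr)=\Delta_j$ for every $j=\overline{1,n}$. Dividing both sides by $X_j>0$ gives $1-\sum_{s=1}^n \bar a_{sj} = \Delta_j/X_j$.

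Finally, invoking the hypothesis $\Delta_j>0$ together with $X_j>0$ yields $1-\sum_{s=1}^n \bar a_{sj} = \Delta_j/X_j > 0$, i.e. $\sum_{s=1}^n \bar a_{sj}<1$ for each $j=\overline{1,n}$, which is exactly the claim. There is no real obstacle here: the only point needing a word of justification is the strict positivity of the aggregated outputs $X_j$, and that is immediate from the definition (\ref{agr2}); the rest is a single division. (One could equivalently phrase the whole argument by noting $\Delta_j/X_j$ is precisely the "added-value share" appearing throughout Section 3, so the inequality $\sum_{s=1}^n \bar a_{sj}<1$ is the aggregated analogue of the relation used there.)
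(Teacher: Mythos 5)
Your proposal is correct and follows the paper's own argument: the paper's proof simply says the claim "follows immediately from the system of equalities (\ref{agr6})", which is exactly the factorization $X_j\bigl(1-\sum_{s=1}^n \bar a_{sj}\bigr)=\Delta_j$ and division by $X_j>0$ that you spell out. Your additional remark justifying $X_j>0$ from (\ref{agr2}) and the surjectivity of $f$ is a harmless (and welcome) elaboration of the same route, not a different method.
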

\begin{proof}
The proof follows immediately from the system of equalities (\ref{agr6}).
\end{proof}
Consider the system of equations in relation to the relative price vector
$\hat p=\{ \hat p_i\}_{i=1}^n,$ which is the solution of the system of equations
\begin{eqnarray}\label{agr7}
\hat p_i= \sum\limits_{s=1}^n\hat p_s \bar a_{si} +\hat \delta_i, \quad i=\overline{1,n}
\end{eqnarray}
where $\hat \delta_i>0, \ i=\overline{1,n}.$
Due to the productivity of the matrix
$ \bar A $ there is a strictly positive solution of the system of equations (\ref{agr7}).
\begin{prope}\label{agrega1}
Let $p=\{p_i\}_{i=1}^m$ be a strictly positive price vector from the subset $R_+^m,$ and $\hat p=\{\hat p_i\}_{i=1}^ n$ is a strictly positive relative price vector from  $R_+^n,$ that satisfies the system of equations (\ref{agr7}).  Under the condition of aggregation $f(i), i \in M,$ let
\begin{eqnarray}\label{agrega2}
\delta_i (\hat p)=\hat p_{f(i)} p_i - \sum\limits_{s=1}^m \hat p_{f(s)} p_s a_{si}, \quad i=\overline{1,m}
\end{eqnarray}
be created added values for the price vector
$\hat p(p).$
Then the following equalities 
\begin{eqnarray}\label{agrega3}
\sum\limits_{\{i, f(i)=k\}}\delta_i (\hat p)x_i=X_k \hat \delta_k, \quad k=\overline{1,n},
\end{eqnarray}
are hold, where $ x=\{x_i\}_{i=1}^m$ is a strictly positive gross output vector, and $X_k=\sum\limits_{\{i, f(i)=k\}}x_i p_i.$
\end{prope}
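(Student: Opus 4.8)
The plan is to prove (\ref{agrega3}) by a direct computation. I would substitute the definition (\ref{agrega2}) of $\delta_i(\hat p)$ into the left-hand side of (\ref{agrega3}), separate the two resulting terms, and then regroup the double sum over the fibres of the aggregation map $f$ so that the aggregated coefficients $\bar a_{ki}$ from (\ref{agr3}) appear; the claim then reduces to the defining relation (\ref{agr7}) for $\hat p$. No existence result from the earlier sections is needed: the argument is purely algebraic, and strict positivity of $p,x,\hat p$ is used only to make sure every quantity in sight (in particular $X_k=\sum_{\{l,\,f(l)=k\}}p_lx_l$) is well defined and nonzero.

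Fix $k\in\{1,\dots,n\}$. Substituting (\ref{agrega2}),
\begin{eqnarray*}
\sum\limits_{\{i,\,f(i)=k\}}\delta_i(\hat p)x_i & = & \sum\limits_{\{i,\,f(i)=k\}}\hat p_{f(i)}p_ix_i-\sum\limits_{\{i,\,f(i)=k\}}\left(\sum\limits_{s=1}^m\hat p_{f(s)}p_sa_{si}\right)x_i .
\end{eqnarray*}
In the first sum $f(i)=k$, so it collapses to $\hat p_k\sum_{\{i,\,f(i)=k\}}p_ix_i=\hat p_kX_k$. In the second sum I would interchange the order of summation and split the index $s$ according to the value $f(s)=j$:
\begin{eqnarray*}
\sum\limits_{\{i,\,f(i)=k\}}\sum\limits_{s=1}^m\hat p_{f(s)}p_sa_{si}x_i & = & \sum\limits_{j=1}^n\hat p_j\sum\limits_{\{s,\,f(s)=j\}}\,\sum\limits_{\{i,\,f(i)=k\}}p_sa_{si}x_i \\
& = & \sum\limits_{j=1}^n\hat p_j\,\bar a_{jk}\,X_k ,
\end{eqnarray*}
where the last equality is exactly the definition (\ref{agr3}) of $\bar a_{jk}$, after renaming the summation index $r$ by $i$ and multiplying through by its denominator $X_k$. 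Combining the two computations gives
\begin{eqnarray*}
\sum\limits_{\{i,\,f(i)=k\}}\delta_i(\hat p)x_i & = & X_k\left(\hat p_k-\sum\limits_{j=1}^n\hat p_j\bar a_{jk}\right),
\end{eqnarray*}
and by (\ref{agr7}) (with index $k$, and after renaming $j$ by $s$ in the sum) the bracketed factor equals $\hat\delta_k$, which yields (\ref{agrega3}).

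The computation is entirely elementary, so the only place that needs care is the bookkeeping of the nested sums over the fibres $f^{-1}(j)$ and $f^{-1}(k)$: one must verify that $\sum_{\{s,\,f(s)=j\}}\sum_{\{i,\,f(i)=k\}}p_sa_{si}x_i$ is precisely the numerator occurring in (\ref{agr3}) for $\bar a_{jk}$ and that $X_k$ is its denominator. I expect this reindexing to be the main (indeed the only) subtlety; everything else is substitution of the definitions (\ref{agr3}), (\ref{agr7}), (\ref{agrega2}) together with the trivial identity $\sum_{s=1}^n\hat p_s\bar a_{sk}=\sum_{j=1}^n\hat p_j\bar a_{jk}$.
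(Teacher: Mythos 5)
Your computation is correct and is essentially the same as the paper's own proof: substitute (\ref{agrega2}), collapse the first sum to $\hat p_k X_k$, regroup the double sum over the fibres of $f$ to recognize $\bar a_{jk}X_k$ from (\ref{agr3}), and finish with the defining relation (\ref{agr7}). The reindexing you flag as the only subtle point is exactly the step the paper carries out, so nothing is missing.
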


\begin{proof}
It is obvious that
$$ \sum\limits_{\{i, f(i)=k\}}\delta_i (\hat p)x_i=
\sum\limits_{\{i, f(i)=k\}}p_i \hat p_{f(i)}x_i -\sum\limits_{\{i, f(i)=k\}}\sum \limits_{s=1}^m \hat p_{f(s)} p_s a_{si}x_i=$$
$$\hat p_k X_k -
\sum\limits_{\{i, f(i)=k\}}\sum\limits_{l=1}^n \sum\limits_{\{s, f(s)=l\}} \hat p_ {l} p_s a_{si}x_i=$$
$$\hat p_k X_k-
\sum\limits_{l=1}^n \sum\limits_{\{i, f(i)=k\}} \sum\limits_{\{s, f(s)=l\}} \hat p_ {l} p_s a_{si}x_i=
$$
$$\hat p_k X_k-
\sum\limits_{l=1}^n \frac{\hat p_{l} \bar a_{lk}}{\hat p_k} \hat p_k X_k =\hat p_k X_k(1-\sum\limits_{l =1}^n \frac{\hat p_{l} \bar a_{lk}}{\hat p_k})=X_k \hat \delta_k, \quad k=\overline{1,n}.
$$
\end{proof}
\begin{ce}\label{agrega4}
Let the aggregated added values  $\Delta_j>0, \  \ j=\overline{1,n}. $  If $\hat \delta_j=\frac{\Delta_j}{X_j}, \j=\overline{1,n},$
then
$$ \sum\limits_{\{i, f(i)=j\}}\delta_i (\hat p)x_i=\Delta_j, \quad j=\overline{1,n}.$$
In this case, the relative equilibrium price vector $\hat p=\{\hat p_i\}_{i=1}^n
$ is such that $\hat p_i=1, i=\overline{1,n}.$
\end{ce}
\begin{proof}
Consider the system of equations in relation to the relative price vector
$\hat p=\{ \hat p_i\}_{i=1}^n,$ which is the solution of the system of equations
\begin{eqnarray}\label{atina1}
\hat p_i= \sum\limits_{s=1}^n\hat p_s \bar a_{si} +\frac{\Delta_i}{X_i}, \quad i=\overline{1,n}.
\end{eqnarray}
Due to the fact that equalities  (\ref{agr6}) hold,
the system of equations (\ref{atina1}) can be written in the form
\begin{eqnarray}\label{atina2}
\hat p_i= \sum\limits_{s=1}^n\hat p_s \bar a_{si} +1-\sum\limits_{s=1}^n \bar a_{si}, \quad
 i=\overline{1,n}.
\end{eqnarray}
Due to the productivity of the matrix
$ \bar A $ strictly positive solution of the system of equations (\ref{atina2}) exists and is unique, and therefore it is equal to the unit vector.
\end{proof}
\begin{note}\label{agrega5}
If all created value added $\delta_j>0, \ i=\overline{1,m},$ for the price vector $p=\{p_i\}_{i=1}^m,$ that has formed in the economic system, then $\Delta_j>0, \j=\overline{1,n}. $
\end{note}
\begin{note}\label{agrega6}
It follows from Consequence \ref{agrega4} and Remark \ref{agrega5} that the quantities $\hat \delta_j, j=\overline{1,n},$ appearing in the system of equations (\ref{agr7}) should have structure $\hat \delta_j=\frac{\Delta_j}{X_j}+\hat \delta_j^1 , j=\overline{1,n},$ where $\hat \delta_j^1 \geq 0, j=\overline{1,n}.$ Therefore, the relative price vector should be considered as a corrector of prices that can be deformed by monopoly or oligopoly. The Antimonopoly Committee should collect fines from monopolistic entities for subsidizing unprofitable industries.
\end{note}
\begin{te}\label{agre5} Let the strictly positive relative price vector $\hat p=\{\hat p_i\}_{i=1}^n$ satisfy the system of equations
\begin{eqnarray}\label{axy1}
\hat p_i- \sum\limits_{s=1}^n\hat p_s \bar a_{si} =\hat \delta_i, \quad i=\overline{1,n},
\end{eqnarray}
where, $ \hat \delta_i>0, \ i=\overline{1,n}.$
Then there exists a taxation system $\pi=\{\pi_k\}_{k=1}^n,$ for which the relative price vector $\hat p$ is an equilibrium price vector, that is, such that the equalities
\begin{eqnarray}\label{agr8}
\sum\limits_{i=1}^n\bar a_{ki}\frac{(1-\pi_i)\hat p_i X_i}{\sum\limits_{s=1}^n\hat p_s \bar a_{ si}}=(1-\pi_k) X_k, \quad k=\overline{1,n},
\end{eqnarray}
are true.
\end{te}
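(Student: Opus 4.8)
The plan is to recognise Theorem~\ref{agre5} as an instance of Theorem~\ref{tsytsjatintsytsja1} written in aggregated coordinates, or, equivalently, to repeat that short argument here. First I would pass to the quantities $\hat a_{ki}=\frac{\hat p_k\bar a_{ki}}{\hat p_i}$ (cf. Proposition~\ref{agreg1}) and $\tilde X_i=\hat p_i X_i$, both well defined and strictly positive since $\hat p$ and the aggregated gross output vector $X$ are strictly positive. Multiplying the $k$-th equation of (\ref{agr8}) by $\hat p_k$ and using $\bar a_{ki}=\frac{\hat p_i\hat a_{ki}}{\hat p_k}$ together with $\sum_{s}\hat p_s\bar a_{si}=\hat p_i\sum_s\hat a_{si}$, the system (\ref{agr8}) takes the form
\begin{eqnarray*}
\sum\limits_{i=1}^{n}\hat a_{ki}\,\frac{(1-\pi_i)\tilde X_i}{\sum\limits_{s=1}^{n}\hat a_{si}}=(1-\pi_k)\tilde X_k,\qquad k=\overline{1,n},
\end{eqnarray*}
which is exactly equation (\ref{2gonpupvittin2}) for the matrix $||\hat a_{ki}||$. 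Moreover, (\ref{axy1}) gives $\sum_s\hat p_s\bar a_{si}=\hat p_i-\hat\delta_i$, hence $\sum_s\hat a_{si}=1-\hat\delta_i/\hat p_i$, and this number lies strictly between $0$ and $1$: it is $<1$ because $\hat\delta_i>0$, and it is $>0$ because $\sum_s\hat p_s\bar a_{si}>0$ (every column of an indecomposable non-negative matrix has a positive entry).

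Next I would set $V_i=\dfrac{(1-\pi_i)\tilde X_i}{\sum_s\hat a_{si}}$; then the displayed system is equivalent to the homogeneous balance relation
\begin{eqnarray*}
\sum\limits_{i=1}^{n}\hat a_{ki}V_i=\sum\limits_{s=1}^{n}\hat a_{sk}V_k,\qquad k=\overline{1,n},
\end{eqnarray*}
the equivalence being reversible because the passage from (\ref{agr8}) only involved multiplication by $\hat p_k>0$ and identities, and because $\sum_s\hat a_{sk}>0$. Since the transpose matrix $||\hat a_{ik}||$ is non-negative and indecomposable — its zero pattern coincides with that of $\bar A$ — Lemma~\ref{pupvittin7} provides a strictly positive solution $V_0=\{V_i^0\}_{i=1}^n$, determined up to a positive scalar as in Theorem~\ref{tsytsjatintsytsja1}. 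Taking $V=c_0V_0$ with a constant $c_0>0$ still to be fixed and solving for $\pi_i$, one obtains
\begin{eqnarray*}
\pi_i=1-\frac{c_0V_i^0\sum\limits_{s=1}^{n}\hat a_{si}}{\tilde X_i}=1-c_0\,\frac{V_i^0}{\hat p_iX_i}\left(1-\frac{\hat\delta_i}{\hat p_i}\right),\qquad i=\overline{1,n}.
\end{eqnarray*}

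Finally I would fix the constant so that $0<c_0<\min\limits_{1\le i\le n}\dfrac{\hat p_iX_i}{V_i^0\left(1-\hat\delta_i/\hat p_i\right)}$; then every $\pi_i$ satisfies $0<\pi_i<1$ (the bound $\pi_i<1$ holds because the subtracted term is a product of strictly positive factors, and $\pi_i>0$ is precisely the chosen inequality on $c_0$). By construction $V=c_0V_0$ solves the homogeneous balance relation, so running the equivalences backwards shows that $\hat p$ satisfies (\ref{agr8}) for this $\pi=\{\pi_i\}_{i=1}^n$, which is what was to be proved. The only genuine point of care is the hypothesis that the aggregated matrix $\bar A$ be indecomposable: it is what makes Lemma~\ref{pupvittin7} applicable and what secures both the strict positivity of $V_0$ and the strict positivity of $\hat p_i-\hat\delta_i$. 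I would either treat it as a standing assumption of Section~\ref{}{} (it is already used for $\bar A$ elsewhere in this section) or add it explicitly to the statement; with it in hand, everything else is the bookkeeping already carried out in Theorem~\ref{tsytsjatintsytsja1}.
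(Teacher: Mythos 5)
Your proof is correct and follows essentially the same route as the paper's: the same substitution $\hat a_{ki}=\hat p_k\bar a_{ki}/\hat p_i$, $\hat X_i=\hat p_iX_i$, the same reduction to the balance system $\sum_{i=1}^n\hat a_{ki}V_i=\sum_{s=1}^n\hat a_{sk}V_k$ solved via Lemma \ref{pupvittin7}, the same formula $\pi_i=1-c_0\frac{V_i^0}{\hat X_i}\left(1-\frac{\hat\delta_i}{\hat p_i}\right)$, and the same choice of $c_0$ guaranteeing $0<\pi_i<1$. Your observation that indecomposability of $\bar A$ must be assumed (it is absent from the statement but required by Lemma \ref{pupvittin7}) is well taken; the paper's proof uses it tacitly, as in the neighbouring Theorems of that section.
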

\begin{proof}  
We introduce the following denotations $\hat X_i=X_i \hat p_i,$ $\hat a_{ki}=\frac{\hat p_k \bar a_{ki}}{\hat p_i}.$
In these denotations, the system of equations (\ref{agr8}) is written in the form
\begin{eqnarray}\label{ax1}
\sum\limits_{i=1}^n \hat a_{ki} \frac{(1-\pi_i)\hat X_i }{\sum\limits_{s=1}^n\hat a_{si}}= (1-\pi_k)\hat X_k, \quad k=\overline{1,n}.
\end{eqnarray}
Let
$$ V_i=\frac{\hat X_i(1-\pi_i)}{\sum\limits_{s=1}^n\hat a_{si}}, \quad i=\overline{1,n}.$$
Then the vector $V=\{V_i\}_{i=1}^n$ satisfies the system of equations
\begin{eqnarray}\label{ax2}
 \sum\limits_{i=1}^n \hat a_{ki}V_i=\sum\limits_{s=1}^n\hat a_{sk}V_k, \quad k=\overline{1,n}.
\end{eqnarray}
On the basis of Lemma \ref{pupvittin7}, there exists a strictly positive solution of the system of equations (\ref{ax2}), which is determined with accuracy up to a constant. Let us denote $V_0=\{V_i^0\}_{i=1}^n$
the solution of this system of equations, the sum of the components of which is equal to one. Then
$V=\{V_i\}_{i=1}^n=c_0V_0=\{c_0 V_i^0\}_{i=1}^n.$
From here
$$1-\pi_i=\frac{c_0V_i^0\sum\limits_{s=1}^n\hat a_{sk}}{\hat X_i} =c_0\frac{V_i^0}{\hat X_i}( 1-\frac{\hat \Delta_i}{\hat X_i}),\quad i=\overline{1,n},$$
where it is taken into account that the relation takes place
$$ \hat X_i=\sum\limits_{s=1}^n\hat a_{si} \hat X_i +\hat \Delta_i, \quad i=\overline{1,n}.$$

If we take into account the relation
$$ \hat \Delta_i=X_i\hat \delta_i \, \quad i=\overline{1,n}, $$
on the basis of Proposal \ref{agrega1}, then
we  get

$$\pi_i=1-c_0\frac{V_i^0}{\hat X_i}(1-\frac{\hat \Delta_i}{\hat X_i})=$$
\begin{eqnarray}\label{ax3}
1-c_0\frac{V_i^0}{\hat X_i}(1-\frac{X_i\hat \delta_i}{\hat X_i})=1-c_0\frac{V_i^0}{\hat X_i}( 1-\frac{\hat \delta_i}{\hat p_i}), \quad i=\overline{1,n}.
\end{eqnarray}
The constant $c_0>0$ can be chosen such that the inequalities are fulfilled
$$ 1-c_0\frac{V_i^0}{\hat X_i}(1-\frac{\hat \Delta_i}{\hat X_i})>0, \quad i=\overline{1,n}. $$
The theorem \ref{agre5} is proved.
\end{proof}

\begin{ce}\label{tinaKagy1}
The best taxation system $\pi=\{\pi_i\}_{i=1}^n$ under the condition that the final product will be created in the economic system is such that satisfies the equality
\begin{eqnarray}\label{tinaKagy2}
\frac{1-\pi_i}{\frac{V_i^0}{\hat X_i}(1-\frac{\hat \delta_i}{\hat p_i})}=\frac{1}{\max\limits_ {1\leq i\leq n}\frac{V_i^0}{\hat X_i}(1-\frac{\hat \delta_i}{\hat p_i})}, \quad i=\overline{1,n }.
\end{eqnarray}
\end{ce}
\begin{proof}
We choose the constant $c_0$ in the formula (\ref{ax3}) so that the value of the values 
$\pi_i, \ i=\overline{1,n}, $  is the smallest. For this, we should put
\begin{eqnarray}\label{tinaKagy3}
c_0 =\frac{1}{\max\limits_{1\leq i\leq n}\frac{V_i^0}{\hat X_i}(1-\frac{\hat \delta_i}{\hat p_i}) }.
\end{eqnarray}
Then equalities (\ref{tinaKagy2}) will hold.
\end{proof}

In the following Theorem \ref{mainagy1} we find out the conditions for the taxation system $\pi=\{\pi_i\}_{i=1}^n,$ that has developed in the economic system, under which the final product will be created such that the economic system would be able to function in
 the mode of sustainable development under the equilibrium price vector.
 
 \begin{te}\label{mainagy1}
 Let the matrix $ \bar A=||\bar a_{ki}||_{k,i=1}^n$ be indecomposable and productive. Then for the relative price vector $\hat p=\{\hat p_i\}_{i=1}^n,$ which is a solution of the system of equations (\ref{axy1}), there is always a solution of the system of equations ( \ref{agr8}) with respect to the vector $X=\{X_i\}_{i=1}^n,$ for which the final product will be created, that is, which will satisfy the system of equations (\ref{agr4})
with the strictly positive right-hand side of this system of equations, provided that the taxation system
$\pi=\{ \pi_i\}_{i=1}^n,$ satisfies the conditions
\begin{eqnarray}\label{mainagy2}
 0 < \pi_i\leq 1- b(1-\frac{\hat\delta_i}{\hat p_i}), \quad i=\overline{1,n},
\end{eqnarray}
for some $b,$ that satisfies the inequalities
$$ \max\limits_{1 \leq i \leq n}(1- \pi_i)<b < \frac{1}{ \max\limits_{1 \leq i \leq n}(1-\frac{ \hat \delta_i}{\hat p_i})}. $$
 \end{te}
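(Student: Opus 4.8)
The plan is to transcribe the argument of Theorems \ref{main1} and \ref{2main1} into the aggregated variables used in the proof of Theorem \ref{agre5}. Set $\hat X_i=\hat p_iX_i$ and $\hat a_{ki}=\hat p_k\bar a_{ki}/\hat p_i$. Since $\hat p$ is strictly positive, the correspondence $X\leftrightarrow\hat X$ is a bijection of strictly positive vectors, system (\ref{agr8}) is equivalent to the system (\ref{ax1}) in the $\hat X$ variables, and
\begin{eqnarray*}
\hat X_k-\sum\limits_{i=1}^n\hat a_{ki}\hat X_i=\hat p_k\Bigl(X_k-\sum\limits_{i=1}^n\bar a_{ki}X_i\Bigr),\qquad k=\overline{1,n},
\end{eqnarray*}
so ``the final product is created'' for $X$ in (\ref{agr4}) is equivalent to the analogous strict inequality for $\hat X$. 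Dividing (\ref{axy1}) by $\hat p_i$ gives $\sum_s\hat a_{si}=1-\hat\delta_i/\hat p_i$; this is $<1$ because $\hat\delta_i>0$, and $>0$ because indecomposability of $\bar A$ (equivalently of $\hat A$) forbids a zero column. In particular the interval prescribed for $b$ is nonempty, and we fix such a $b$.

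First I would prove existence of the output vector. Put $X_i^0=(1-\pi_i)\hat X_i/\sum_s\hat a_{si}$; then (\ref{ax1}) becomes the homogeneous balance system $\sum_i\hat a_{ki}X_i^0=\sum_s\hat a_{sk}X_k^0$, $k=\overline{1,n}$. By Lemma \ref{pupvittin7} applied to the non-negative indecomposable matrix $\hat A$ (or its transpose), this system has a strictly positive solution, unique up to a positive multiple; let $w=\{w_i\}$ be the one with $\sum_i w_i=1$, so the general solution is $c_0w$, $c_0>0$. Reversing the substitution yields the gross output in value form
\begin{eqnarray*}
\hat X_i=c_0\,\frac{\sum_s\hat a_{si}}{1-\pi_i}\,w_i=c_0\,\frac{1-\hat\delta_i/\hat p_i}{1-\pi_i}\,w_i,\qquad i=\overline{1,n},
\end{eqnarray*}
and $X_i=\hat X_i/\hat p_i>0$. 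One checks directly that $\sum_i\hat a_{ki}X_i^0=(1-\pi_k)\hat X_k$, which is precisely (\ref{ax1}); hence $X$ solves (\ref{agr8}).

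Next I would verify that this $X$ creates a final product. From (\ref{mainagy2}) we get $1-\pi_i\ge b(1-\hat\delta_i/\hat p_i)=b\sum_s\hat a_{si}$, hence $\sum_s\hat a_{si}/(1-\pi_i)\le 1/b$, so
\begin{eqnarray*}
\sum\limits_{i=1}^n\hat a_{ki}\hat X_i=c_0\sum\limits_{i=1}^n\hat a_{ki}\,\frac{\sum_s\hat a_{si}}{1-\pi_i}\,w_i\le\frac{c_0}{b}\sum\limits_{i=1}^n\hat a_{ki}w_i=\frac{c_0}{b}\Bigl(\sum\limits_{s=1}^n\hat a_{sk}\Bigr)w_k=\frac{1}{b}(1-\pi_k)\hat X_k,
\end{eqnarray*}
using the homogeneous balance in the middle. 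Therefore $\hat X_k-\sum_i\hat a_{ki}\hat X_i\ge(1-(1-\pi_k)/b)\hat X_k>0$ because $b>\max_i(1-\pi_i)\ge 1-\pi_k$. Translating back, $X_k-\sum_i\bar a_{ki}X_i=C_k>0$ for all $k$, so $C=\{C_k\}$ is an admissible final consumption vector, $X$ satisfies (\ref{agr4}) with strictly positive right-hand side, and together with the price vector $\hat p$ — which satisfies the profitability inequalities $\hat p_i-\sum_s\hat p_s\bar a_{si}=\hat\delta_i>0$ — this realizes the mode of sustainable development.

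No genuinely new difficulty appears; the points requiring care are the bookkeeping between the $\bar a$- and $\hat a$-variables, the nonemptiness of the $b$-interval (where productivity and indecomposability of $\bar A$ enter, via $0<\sum_s\hat a_{si}<1$), and confirming that the vector reconstructed from the homogeneous balance solution actually satisfies (\ref{agr8}) rather than merely the homogeneous system. Productivity is also what guarantees that a strictly positive $\hat p$ solving (\ref{axy1}) with $\hat\delta>0$ exists in the first place.
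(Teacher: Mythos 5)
Your proof is correct and follows essentially the same route as the paper: the change of variables $\hat X_i=\hat p_iX_i$, $\hat a_{ki}=\hat p_k\bar a_{ki}/\hat p_i$, the substitution $X_i^0=(1-\pi_i)\hat X_i/\sum_s\hat a_{si}$ reducing (\ref{agr8}) to the homogeneous balance system solved via Lemma \ref{pupvittin7}, and the same chain of inequalities from (\ref{mainagy2}) giving $\hat X_k-\sum_i\hat a_{ki}\hat X_i\geq(1-(1-\pi_k)/b)\hat X_k>0$. Your added remarks (nonemptiness of the $b$-interval, $0<\sum_s\hat a_{si}<1$, and the explicit check that the reconstructed vector satisfies (\ref{ax1}) and not just the homogeneous system) are small refinements of the paper's argument, not a different method.
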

 \begin{proof}
 Let's put $\hat X_k=X_k\hat p_k, $ $\hat A=|\hat a_{ki}|_{k, i=1}^n, \hat a_{ki}=\frac {\hat p_k \bar a_{ki}}{\hat p_i},$
 In these denotations, the system of equations (\ref{agr8}) can be written in the form
\begin{eqnarray}\label{mainagy3}
\sum\limits_{i=1}^n \hat a_{ki} \frac{(1-\pi_i)\hat X_i }{\sum\limits_{s=1}^n\hat a_{si}}= (1-\pi_k)\hat X_k, \quad k=\overline{1,n},
\end{eqnarray}
and the system of equations (\ref{agr6}) is in the form
\begin{eqnarray}\label{mainagy7}
\hat X_i-\sum\limits_{s=1}^n \hat a_{si} \hat X_i=\hat \Delta_i, \quad i=\overline{1,n}.
\end{eqnarray}
By introducing the denotation
\begin{eqnarray}\label{mainagy4}
\frac{(1-\pi_i)\hat X_i }{\sum\limits_{s=1}^n\hat a_{si}}=\hat X_i^0, \quad i=\overline{1,n} ,
\end{eqnarray}
 the system of equations (\ref{mainagy3}) is rewritten in the form
 \begin{eqnarray}\label{mainagy5}
\sum\limits_{i=1}^n \hat a_{ki} \hat X_i^0 = \sum\limits_{s=1}^n\hat a_{sk} \hat X_k^0, \quad k=\overline {1,n}.
\end{eqnarray}
On the basis of Lemma \ref{pupvittin7}, there is a strictly positive solution of this system of equations, which is determined with accuracy up to a constant.
Let $\hat X_0=\{\hat X_i^0\}_{i=1}^n$ be a strictly positive solution of the system of equations (\ref{mainagy5}), the sum of whose components is equal to unity. Then any other is equal
$c_0\hat X_0=\{c_0\hat X_i^0\}_{i=1}^n$, where $c_0>0.$
For the components of the vector $\hat X=\{\hat X_i\}_{i=1}^n$
we get the formulas
 \begin{eqnarray}\label{main6}
 \hat X_i =c_0 \frac{\sum\limits_{s=1}^n\hat a_{si}}{1-\pi_i} \hat X_i^0=c_0\frac{(1-\frac{\hat \Delta_i}{\hat X_i})}{1-\pi_i}\hat X_i^0=c_0\frac{(1-\frac{\hat \delta_i}{\hat p_i})}{1-\pi_i} \hat X_i^0, \quad i=\overline{1,n}.
\end{eqnarray}
To prove that the obtained solution is a vector of gross output for some vector of final consumption, it is necessary to establish that the inequalities
\begin{eqnarray}\label{mainagy7}
\hat X_k-\sum\limits_{i=1}^n \hat a_{ki} \hat X_i>0, \quad k=\overline{1,n},
\end{eqnarray}
are  valid.
Due to the fact that the inequalities (\ref{mainagy2}) are valid, the  inequalities
\begin{eqnarray}\label{mainagy8}
1 \leq \frac{1-
\pi_i}{b(1-\frac{\hat \Delta_i}{\hat X_i})}=\frac{1-
\pi_i}{b\sum\limits_{s=1}^n \hat a_{si} }
\end{eqnarray}
are true, therefore
$$\hat X_k-\sum\limits_{i=1}^n \hat a_{ki} \hat X_i \geq \hat X_k-\frac{1}{b}\sum\limits_{i=1}^n \hat a_{ki}\frac{(1- 
\pi_i) \hat X_i}{\sum\limits_{s=1}^n \hat a_{si} }=$$
\begin{eqnarray}\label{mainagy9}
(1-\frac{1-\pi_k}{b} )\hat X_k>0, \quad k=\overline{1,n}.
\end{eqnarray}
So,
\begin{eqnarray}\label{mainagy10}
\hat X_k-\sum\limits_{i=1}^n \hat a_{ki} \hat X_i=\hat Y_k, \quad k=\overline{1,n},
\end{eqnarray}
where the vector $\hat Y=\{\hat Y_k\}_{k=1}^n$ has strictly positive components, and $\hat Y_k=\hat p_k Y_k,\ k=\overline{1,n}. $

From here we immediately get that the vector $ X=\{ X_i\}_{i=1}^n$ is the gross output vector for the tax system under consideration
\begin{eqnarray}\label{mainagy11}
 X_k-\sum\limits_{i=1}^n \bar a_{ki} X_i= Y_k, \quad k=\overline{1,n}.
\end{eqnarray}
 The theorem \ref{mainagy1} is proved.
\end{proof}
Below we reformulate Theorem \ref{agre5} and Corollary \ref{tinaKagy1} from it for the case when $\Delta_i >0, \i=\overline{1,n}$ and $\hat \delta_i=\frac{\Delta_i }{X_i}\ i=\overline{1,n}.$

\begin{te}\label{atina3}
Let $\Delta_i >0, \ i=\overline{1,n},$ and $\hat \delta_i=\frac{\Delta_i}{X_i},\ i=\overline{1,n}.$
Then there is a taxation system $\pi=\{\pi_k\}_{k=1}^n,$ under which the vector of prices $ p=\{p_i\}_{i=1}^m,$ which developed in the economic system, is an equilibrium price vector, i.e. such that equalities 
\begin{eqnarray}\label{atina4}
\sum\limits_{i=1}^n\bar a_{ki}\frac{(1-\pi_i) X_i}{\sum\limits_{s=1}^n \bar a_{si}}=(1 -\pi_k) X_k, \quad k=\overline{1,n},
\end{eqnarray}
are true.
\end{te}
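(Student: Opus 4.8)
The statement is a specialization of Theorem~\ref{agre5} to the case $\hat\delta_i=\Delta_i/X_i$; the point is that under this extra hypothesis the relative equilibrium price vector collapses to the unit vector, so that the relative equilibrium equations (\ref{agr8}) turn into the absolute ones (\ref{atina4}). The plan is therefore to combine Corollary~\ref{agrega4} with Theorem~\ref{agre5}.

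First I would verify the hypotheses of the cited results. Since the gross output vector is strictly positive and $\Delta_i>0$ for all $i$, the numbers $\hat\delta_i=\Delta_i/X_i$ are strictly positive, which is exactly what Theorem~\ref{agre5} requires. By Corollary~\ref{agrega4}, applied with this choice $\hat\delta_i=\Delta_i/X_i$, the strictly positive relative equilibrium price vector $\hat p=\{\hat p_i\}_{i=1}^n$ solving the system (\ref{axy1}) is the unit vector: $\hat p_i=1$, $i=\overline{1,n}$. This identification is the one substantive step; it rests in turn on the productivity of $\bar A$, which makes the solution of the system (\ref{atina2}) exist and be unique.

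Next I would apply Theorem~\ref{agre5} with $\hat\delta_i=\Delta_i/X_i$. It produces a taxation system $\pi=\{\pi_k\}_{k=1}^n$ --- explicitly $\pi_i=1-c_0\tfrac{V_i^0}{\hat X_i}\bigl(1-\tfrac{\hat\delta_i}{\hat p_i}\bigr)$, where $\hat X_i=X_i\hat p_i$, $\hat a_{ki}=\hat p_k\bar a_{ki}/\hat p_i$, the vector $V_0=\{V_i^0\}_{i=1}^n$ is the normalized strictly positive solution of $\sum_i\hat a_{ki}V_i=\sum_s\hat a_{sk}V_k$ (which exists by Lemma~\ref{pupvittin7}), and $c_0>0$ is chosen small enough to keep every $\pi_i<1$ --- for which $\hat p$ satisfies the relative equilibrium equations (\ref{agr8}).

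Finally I would substitute $\hat p_i=1$ into (\ref{agr8}). Then $\hat X_i=X_i$, $\hat a_{ki}=\bar a_{ki}$, and $\sum_{s=1}^n\hat p_s\bar a_{si}=\sum_{s=1}^n\bar a_{si}$, so (\ref{agr8}) becomes precisely (\ref{atina4}). Since $\hat p\equiv 1$ means that no price correction is needed, the price vector $p=\{p_i\}_{i=1}^m$ that developed in the economic system is itself an equilibrium price vector, which is the assertion of the theorem. No real obstacle arises beyond the identification $\hat p_i=1$, which is already supplied by Corollary~\ref{agrega4}; the remainder is the substitution above together with the choice of $c_0$ inherited from the proof of Theorem~\ref{agre5}.
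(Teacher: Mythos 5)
Your proposal is correct and follows essentially the same route the paper intends: Theorem \ref{atina3} is presented there precisely as the specialization of Theorem \ref{agre5} to $\hat\delta_i=\Delta_i/X_i$, with Corollary \ref{agrega4} forcing $\hat p_i=1$ so that (\ref{agr8}) reduces to (\ref{atina4}). Your verification of the hypotheses (positivity of $\hat\delta_i$, productivity of $\bar A$, choice of $c_0$) matches the paper's argument, so nothing further is needed.
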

The consequence of Theorem \ref{atina3} is
\begin{ce}\label{atina5}
Under the conditions of Theorem \ref{atina3}, the best taxation system $\pi=\{\pi_i\}_{i=1}^n$ is given by the formulas
\begin{eqnarray}\label{atina6}
\frac{1-\pi_i}{\frac{V_i^0}{ X_i}(1-\frac{\Delta_i}{X_i})}=\frac{1}{\max\limits_{1\leq i\leq n}\frac{V_i^0}{X_i}(1-\frac{ \Delta_i}{X_i})}, \quad i=\overline{1,n}.
\end{eqnarray}
where the vector $V_0=\{V_i^0\}_{i=1}^n$ is the solution of the system of equations
\begin{eqnarray}\label{atina7}
 \sum\limits_{i=1}^n \bar a_{ki}V_i^0=\sum\limits_{s=1}^n\bar a_{sk}V_k^0, \quad k=\overline{1,n},
\end{eqnarray}
and is such that the sum of its components is equal to one.
\end{ce}
\begin{te}\label{atina8}
 Let the matrix $ \bar A=||\bar a_{ki}||_{k,i=1}^n$ be indecomposable and productive,
 and let $\Delta_i >0, \ i=\overline{1,n},$ and $\hat \delta_i=\frac{\Delta_i}{X_i},\ i=\overline{1,n}.$
 There is always a solution to the system of equations
  \begin{eqnarray}\label{atina9}
\sum\limits_{i=1}^n\bar a_{ki}\frac{(1-\pi_i) X_i}{\sum\limits_{s=1}^n \bar a_{si}}=(1 -\pi_k) X_k, \quad k=\overline{1,n}.
\end{eqnarray}
with respect to the vector $X=\{X_i\}_{i=1}^n,$ for which the final product will be created, that is, which will satisfy the system of equations (\ref{agr4})
with the strictly positive right-hand side of this system of equations, provided that the taxation system
$\pi=\{ \pi_i\}_{i=1}^n,$ satisfies the conditions
\begin{eqnarray}\label{atina10}
 0 < \pi_i\leq 1- b(1-\frac{\Delta_i}{X_i}), \quad i=\overline{1,n},
\end{eqnarray}
for some $b,$ that satisfies the inequalities
$$ \max\limits_{1 \leq i \leq n}(1- \pi_i)<b < \frac{1}{ \max\limits_{1 \leq i \leq n}(1-\frac{  \Delta_i}{X_i})}. $$
 \end{te}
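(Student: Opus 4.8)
The plan is to recognize Theorem~\ref{atina8} as the specialization of Theorem~\ref{mainagy1} to the case in which the relative equilibrium price vector is the unit vector, $\hat p_i=1,\ i=\overline{1,n}$.

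First I would check that, under the hypotheses $\Delta_i>0$ and $\hat\delta_i=\frac{\Delta_i}{X_i}$, the vector $\hat p$ with $\hat p_i=1$ does solve the system~(\ref{axy1}). Substituting $\hat\delta_i=\frac{\Delta_i}{X_i}$ and using identity~(\ref{agr6}), which gives $\frac{\Delta_i}{X_i}=1-\sum\limits_{s=1}^n\bar a_{si}$, turns~(\ref{axy1}) into $\hat p_i=\sum\limits_{s=1}^n\hat p_s\bar a_{si}+1-\sum\limits_{s=1}^n\bar a_{si}$; by productivity of $\bar A$ this system has a unique strictly positive solution, and $\hat p_i\equiv 1$ obviously satisfies it. This is exactly the content of Corollary~\ref{agrega4}.

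Second, with $\hat p_i=1$ one has $\hat a_{ki}=\bar a_{ki}$, $\hat X_i=X_i$, $\hat\Delta_i=\Delta_i$ and $\frac{\hat\delta_i}{\hat p_i}=\frac{\Delta_i}{X_i}$. Under these identifications the system~(\ref{agr8}) of Theorem~\ref{mainagy1} becomes precisely~(\ref{atina9}), the restriction~(\ref{mainagy2}) on the taxation vector becomes~(\ref{atina10}), and the admissible range for $b$ coincides with the one stated here. Applying Theorem~\ref{mainagy1} then yields a vector $X=\{X_i\}_{i=1}^n$ solving~(\ref{atina9}) for which $X_k-\sum\limits_{i=1}^n\bar a_{ki}X_i=Y_k>0$, i.e. the final product is created, which is the assertion of Theorem~\ref{atina8}.

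If a self-contained argument is preferred, one repeats the proof of Theorem~\ref{mainagy1} with $\hat p_i=1$ throughout: set $X_i^0=\frac{(1-\pi_i)X_i}{\sum\limits_{s=1}^n\bar a_{si}}$, so that~(\ref{atina9}) reduces to $\sum\limits_{i=1}^n\bar a_{ki}X_i^0=\sum\limits_{s=1}^n\bar a_{sk}X_k^0$; by Lemma~\ref{pupvittin7} this has a strictly positive solution, unique up to a positive factor $c_0$, whence $X_i=c_0\frac{1-\frac{\Delta_i}{X_i}}{1-\pi_i}X_i^0$; finally, using~(\ref{atina10}) in the form $1\leq\frac{1-\pi_i}{b\left(1-\frac{\Delta_i}{X_i}\right)}=\frac{1-\pi_i}{b\sum\limits_{s=1}^n\bar a_{si}}$, one obtains $X_k-\sum\limits_{i=1}^n\bar a_{ki}X_i\geq\left(1-\frac{1-\pi_k}{b}\right)X_k>0$. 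The only point requiring care — and it is a minor one — is verifying that the hypotheses of Theorem~\ref{mainagy1} are genuinely met, namely that $\hat p_i\equiv 1$ solves~(\ref{axy1}) with the prescribed $\hat\delta_i$; this is where~(\ref{agr6}) is used, so there is no real obstacle and the statement is essentially a reformulation of Theorem~\ref{mainagy1}.
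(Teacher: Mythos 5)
Your proposal is correct and matches the paper's intent: the paper offers no separate proof of Theorem \ref{atina8}, presenting it precisely as the specialization of the preceding results (Theorem \ref{mainagy1} via Corollary \ref{agrega4}) to the case $\hat p_i=1$, $\hat\delta_i=\frac{\Delta_i}{X_i}=1-\sum\limits_{s=1}^n\bar a_{si}$, which is exactly the reduction you carry out. Your optional self-contained argument simply repeats the proof of Theorem \ref{mainagy1} (Lemma \ref{pupvittin7} plus the inequality estimate) with $\hat p\equiv 1$, so no new issues arise.
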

\section{Conclusions.}
In the work, a method of researching real economic systems for the possibility of their functioning in the mode of sustainable development is built. Section 2 develops algorithms for constructing equilibrium states for the case of partial market clearing. For this purpose, the vector of real consumption, which satisfies the system of linear equations and inequalities, is considered.
A complete description of the solutions of such a system of linear inequalities and equations is given. On this basis, a complete description of the equilibrium states under which partial clearing of the markets takes place in the production model "input - output" is given. These states are characterized by the level of excess supply in a certain period of the economy function.
In order to find equilibrium states with the lowest level of excess supply, it is proved that the vector of real consumption is the solution of a certain problem of quadratic programming. Section 3 establishes the necessary and sufficient conditions for the operation of the economic system in the mode of sustainable development.
It is shown that there is a family of taxation vectors under which the economic system described by the production model "input - output" functions in the mode of sustainable development. A restriction was found for taxation systems in real economic systems, under which the final product is created in the economic system, and the economic system functions in the mode of sustainable development.
Section 5 contains the axiomatic of the aggregated description of the economy. On this basis, an aggregated matrix of direct costs, an aggregated vector of gross output, an aggregated gross added value, and an aggregated vector of final consumption were introduced. The main Theorems about the existence of the taxation system and limitations for the taxation system are also formulated for the aggregated description of the economy.


\end{document}